\chardef\@x10\chardef\@xv60
\def\tcitime{
\def\@time{%
  \@minute\time\@hour\@minute\divide\@hour\@xv
  \ifnum\@hour<\@x 0\fi\the\@hour:%
  \multiply\@hour\@xv\advance\@minute-\@hour
  \ifnum\@minute<\@x 0\fi\the\@minute
  }}%
\def\QCTOpt[#1]#2{%
  \def\QCTOptB{#1}
  \def\QCTOptA{#2}
}
\def\QCTNOpt#1{%
  \def\QCTOptA{#1}
  \let\QCTOptB\empty
}
\def\Qct{%
  \@ifnextchar[{%
    \QCTOpt}{\QCTNOpt}
}
\def\QCBOpt[#1]#2{%
  \def\QCBOptB{#1}
  \def\QCBOptA{#2}
}
\def\QCBNOpt#1{%
  \def\QCBOptA{#1}
  \let\QCBOptB\empty
}
\def\Qcb{%
  \@ifnextchar[{%
    \QCBOpt}{\QCBNOpt}
}
\def\PrepCapArgs{%
  \ifx\QCBOptA\empty
    \ifx\QCTOptA\empty
      {}%
    \else
      \ifx\QCTOptB\empty
        {\QCTOptA}%
      \else
        [\QCTOptB]{\QCTOptA}%
      \fi
    \fi
  \else
    \ifx\QCBOptA\empty
      {}%
    \else
      \ifx\QCBOptB\empty
        {\QCBOptA}%
      \else
        [\QCBOptB]{\QCBOptA}%
      \fi
    \fi
  \fi
}
\def\GRAPHICSPS#1{%
 \ifcase\GRAPHICSTYPE
   \special{ps: #1}%
 \or
   \special{language "PS", include "#1"}%
 \fi
}%
\def\graffile#1#2#3#4{%
    \leavevmode
    \raise -#4 \BOXTHEFRAME{%
        \hbox to #2{\raise #3\hbox to #2{\null #1\hfil}}}%
}%
\def\draftbox#1#2#3#4{%
 \leavevmode\raise -#4 \hbox{%
  \frame{\rlap{\protect\tiny #1}\hbox to #2%
   {\vrule height#3 width\z@ depth\z@\hfil}%
  }%
 }%
}%
\newif\ifwasdraft
\def\GRAPHIC#1#2#3#4#5{%
 \ifnum\draft=\@ne\draftbox{#2}{#3}{#4}{#5}%
  \else\graffile{#1}{#3}{#4}{#5}%
  \fi
 }%
\def\addtoLaTeXparams#1{%
    \edef\LaTeXparams{\LaTeXparams #1}}%
\newif\ifBoxFrame \BoxFramefalse
\newif\ifOverFrame \OverFramefalse
\newif\ifUnderFrame \UnderFramefalse
\def\BOXTHEFRAME#1{%
   \hbox{%
      \ifBoxFrame
         \frame{#1}%
      \else
         {#1}%
      \fi
   }%
}
\def\doFRAMEparams#1{\BoxFramefalse\OverFramefalse\UnderFramefalse\readFRAMEparams#1\end}%
\def\readFRAMEparams#1{%
 \ifx#1\end%
  \let\next=\relax
  \else
  \ifx#1i\dispkind=\z@\fi
  \ifx#1d\dispkind=\@ne\fi
  \ifx#1f\dispkind=\tw@\fi
  \ifx#1t\addtoLaTeXparams{t}\fi
  \ifx#1b\addtoLaTeXparams{b}\fi
  \ifx#1p\addtoLaTeXparams{p}\fi
  \ifx#1h\addtoLaTeXparams{h}\fi
  \ifx#1X\BoxFrametrue\fi
  \ifx#1O\OverFrametrue\fi
  \ifx#1U\UnderFrametrue\fi
  \ifx#1w
    \ifnum\draft=1\wasdrafttrue\else\wasdraftfalse\fi
    \draft=\@ne
  \fi
  \let\next=\readFRAMEparams
  \fi
 \next
 }%
\def\IFRAME#1#2#3#4#5#6{%
      \bgroup
      \let\QCTOptA\empty
      \let\QCTOptB\empty
      \let\QCBOptA\empty
      \let\QCBOptB\empty
      #6%
      \parindent=0pt%
      \leftskip=0pt
      \rightskip=0pt
      \setbox0 = \hbox{\QCBOptA}%
      \@tempdima = #1\relax
      \ifOverFrame
          \typeout{This is not implemented yet}%
          \show\HELP
      \else
         \ifdim\wd0>\@tempdima
            \advance\@tempdima by \@tempdima
            \ifdim\wd0 >\@tempdima
               \textwidth=\@tempdima
               \setbox1 =\vbox{%
                  \noindent\hbox to \@tempdima{\hfill\GRAPHIC{#5}{#4}{#1}{#2}{#3}\hfill}\\%
                  \noindent\hbox to \@tempdima{\parbox[b]{\@tempdima}{\QCBOptA}}%
               }%
               \wd1=\@tempdima
            \else
               \textwidth=\wd0
               \setbox1 =\vbox{%
                 \noindent\hbox to \wd0{\hfill\GRAPHIC{#5}{#4}{#1}{#2}{#3}\hfill}\\%
                 \noindent\hbox{\QCBOptA}%
               }%
               \wd1=\wd0
            \fi
         \else
            \ifdim\wd0>0pt
              \hsize=\@tempdima
              \setbox1 =\vbox{%
                \unskip\GRAPHIC{#5}{#4}{#1}{#2}{0pt}%
                \break
                \unskip\hbox to \@tempdima{\hfill \QCBOptA\hfill}%
              }%
              \wd1=\@tempdima
           \else
              \hsize=\@tempdima
              \setbox1 =\vbox{%
                \unskip\GRAPHIC{#5}{#4}{#1}{#2}{0pt}%
              }%
              \wd1=\@tempdima
           \fi
         \fi
         \@tempdimb=\ht1
         \advance\@tempdimb by \dp1
         \advance\@tempdimb by -#2%
         \advance\@tempdimb by #3%
         \leavevmode
         \raise -\@tempdimb \hbox{\box1}%
      \fi
      \egroup%
}%
\def\DFRAME#1#2#3#4#5{%
 \begin{center}
     \let\QCTOptA\empty
     \let\QCTOptB\empty
     \let\QCBOptA\empty
     \let\QCBOptB\empty
     \ifOverFrame
        #5\QCTOptA\par
     \fi
     \GRAPHIC{#4}{#3}{#1}{#2}{\z@}
     \ifUnderFrame
        \nobreak\par #5\QCBOptA
     \fi
 \end{center}%
 }%
\def\FFRAME#1#2#3#4#5#6#7{%
 \begin{figure}[#1]%
  \let\QCTOptA\empty
  \let\QCTOptB\empty
  \let\QCBOptA\empty
  \let\QCBOptB\empty
  \ifOverFrame
    #4
    \ifx\QCTOptA\empty
    \else
      \ifx\QCTOptB\empty
        \caption{\QCTOptA}%
      \else
        \caption[\QCTOptB]{\QCTOptA}%
      \fi
    \fi
    \ifUnderFrame\else
      \label{#5}%
    \fi
  \else
    \UnderFrametrue%
  \fi
  \begin{center}\GRAPHIC{#7}{#6}{#2}{#3}{\z@}\end{center}%
  \ifUnderFrame
    #4
    \ifx\QCBOptA\empty
      \caption{}%
    \else
      \ifx\QCBOptB\empty
        \caption{\QCBOptA}%
      \else
        \caption[\QCBOptB]{\QCBOptA}%
      \fi
    \fi
    \label{#5}%
  \fi
  \end{figure}%
 }%
\def\makeactives{
  \catcode`\"=\active
  \catcode`\;=\active
  \catcode`\:=\active
  \catcode`\'=\active
  \catcode`\~=\active
}
   \gdef\activesoff{%
      \def"{\string"}
      \def;{\string;}
      \def:{\string:}
      \def'{\string'}
      \def~{\string~}
    }
\def\FRAME#1#2#3#4#5#6#7#8{%
 \bgroup
 \@ifundefined{bbl@deactivate}{}{\activesoff}
 \ifnum\draft=\@ne
   \wasdrafttrue
 \else
   \wasdraftfalse%
 \fi
 \def\LaTeXparams{}%
 \dispkind=\z@
 \def\LaTeXparams{}%
 \doFRAMEparams{#1}%
 \ifnum\dispkind=\z@\IFRAME{#2}{#3}{#4}{#7}{#8}{#5}\else
  \ifnum\dispkind=\@ne\DFRAME{#2}{#3}{#7}{#8}{#5}\else
   \ifnum\dispkind=\tw@
    \edef\@tempa{\noexpand\FFRAME{\LaTeXparams}}%
    \@tempa{#2}{#3}{#5}{#6}{#7}{#8}%
    \fi
   \fi
  \fi
  \ifwasdraft\draft=1\else\draft=0\fi{}%
  \egroup
 }%
\def\TEXUX#1{"texux"}
\def\func#1{\mathop{\rm #1}}%
\long\def\QQQ#1#2{%
     \long\expandafter\def\csname#1\endcsname{#2}}%
\long\def\QQA#1#2{}%
\def\QTR#1#2{{\csname#1\endcsname #2}}
\def\EXPAND#1[#2]#3{}%
\def\NOEXPAND#1[#2]#3{}%
\def\LaTeXparent#1{}%
\def\ChildStyles#1{}%
\def\ChildDefaults#1{}%
\def\QTagDef#1#2#3{}%
\def\QQfnmark#1{\footnotemark}
\def\makeatletter\input gnuindex.sty\makeatother\makeindex{\makeatletter\input gnuindex.sty\makeatother\makeindex}%
\def\initial#1{\bigbreak{\raggedright\large\bf #1}\kern 2\p@\penalty3000}}%
 \def\abstract{%
  \if@twocolumn
   \section*{Abstract (Not appropriate in this style!)}%
   \else \small
   \begin{center}{\bf Abstract\vspace{-.5em}\vspace{\z@}}\end{center}%
   \quotation
   \fi
  }%
   \def\registered{\relax\ifmmode{}\r@gistered
                    \else$\m@th\r@gistered$\fi}%
 \def\r@gistered{^{\ooalign
  {\hfil\raise.07ex\hbox{$\scriptstyle\rm\text{R}$}\hfil\crcr
  \mathhexbox20D}}}}{}%
\newdimen\theight
\def\Column{%
 \vadjust{\setbox\z@=\hbox{\scriptsize\quad\quad tcol}%
  \theight=\ht\z@\advance\theight by \dp\z@\advance\theight by \lineskip
  \kern -\theight \vbox to \theight{%
   \rightline{\rlap{\box\z@}}%
   \vss
   }%
  }%
 }%
\def\qed{%
 \ifhmode\unskip\nobreak\fi\ifmmode\ifinner\else\hskip5\p@\fi\fi
 \hbox{\hskip5\p@\vrule width4\p@ height6\p@ depth1.5\p@\hskip\p@}%
 }%
\def\miss{\hbox{\vrule height2\p@ width 2\p@ depth\z@}}%
\def\tcol#1{{\baselineskip=6\p@ \vcenter{#1}} \Column}  %
\def\newfmtname{LaTeX2e}
\def\chkcompat{%
   \if@compatibility
   \else
     \usepackage{latexsym}
   \fi
}
  \DeclareOldFontCommand{\rm}{\normalfont\rmfamily}{\mathrm}
  \DeclareOldFontCommand{\sf}{\normalfont\sffamily}{\mathsf}
  \DeclareOldFontCommand{\tt}{\normalfont\ttfamily}{\mathtt}
  \DeclareOldFontCommand{\bf}{\normalfont\bfseries}{\mathbf}
  \DeclareOldFontCommand{\it}{\normalfont\itshape}{\mathit}
  \DeclareOldFontCommand{\sl}{\normalfont\slshape}{\@nomath\sl}
  \DeclareOldFontCommand{\sc}{\normalfont\scshape}{\@nomath\sc}
\def\alpha{{\Greekmath 010B}}%
\def\beta{{\Greekmath 010C}}%
\def\gamma{{\Greekmath 010D}}%
\def\delta{{\Greekmath 010E}}%
\def\epsilon{{\Greekmath 010F}}%
\def\zeta{{\Greekmath 0110}}%
\def\eta{{\Greekmath 0111}}%
\def\theta{{\Greekmath 0112}}%
\def\iota{{\Greekmath 0113}}%
\def\kappa{{\Greekmath 0114}}%
\def\lambda{{\Greekmath 0115}}%
\def\mu{{\Greekmath 0116}}%
\def\nu{{\Greekmath 0117}}%
\def\xi{{\Greekmath 0118}}%
\def\pi{{\Greekmath 0119}}%
\def\rho{{\Greekmath 011A}}%
\def\sigma{{\Greekmath 011B}}%
\def\tau{{\Greekmath 011C}}%
\def\upsilon{{\Greekmath 011D}}%
\def\phi{{\Greekmath 011E}}%
\def\chi{{\Greekmath 011F}}%
\def\psi{{\Greekmath 0120}}%
\def\omega{{\Greekmath 0121}}%
\def\varepsilon{{\Greekmath 0122}}%
\def\vartheta{{\Greekmath 0123}}%
\def\varpi{{\Greekmath 0124}}%
\def\varrho{{\Greekmath 0125}}%
\def\varsigma{{\Greekmath 0126}}%
\def\varphi{{\Greekmath 0127}}%
\def\nabla{{\Greekmath 0272}}
\def\FindBoldGroup{%
   {\setbox0=\hbox{$\mathbf{x\global\edef\theboldgroup{\the\mathgroup}}$}}%
}
\def\Greekmath#1#2#3#4{%
    \if@compatibility
        \ifnum\mathgroup=\symbold
           \mathchoice{\mbox{\boldmath$\displaystyle\mathchar"#1#2#3#4$}}%
                      {\mbox{\boldmath$\textstyle\mathchar"#1#2#3#4$}}%
                      {\mbox{\boldmath$\scriptstyle\mathchar"#1#2#3#4$}}%
                      {\mbox{\boldmath$\scriptscriptstyle\mathchar"#1#2#3#4$}}%
        \else
           \mathchar"#1#2#3#4%
        \fi
    \else
        \FindBoldGroup
        \ifnum\mathgroup=\theboldgroup 
           \mathchoice{\mbox{\boldmath$\displaystyle\mathchar"#1#2#3#4$}}%
                      {\mbox{\boldmath$\textstyle\mathchar"#1#2#3#4$}}%
                      {\mbox{\boldmath$\scriptstyle\mathchar"#1#2#3#4$}}%
                      {\mbox{\boldmath$\scriptscriptstyle\mathchar"#1#2#3#4$}}%
        \else
           \mathchar"#1#2#3#4%
        \fi     	
	  \fi}
\newif\ifGreekBold  \GreekBoldfalse
\let\SAVEPBF=\pbf
\def\pbf{\GreekBoldtrue\SAVEPBF}%
  \newcounter{equationnumber}
  \def\mathletters{%
     \addtocounter{equation}{1}
     \edef\@currentlabel{\theequation}%
     \setcounter{equationnumber}{\c@equation}
     \setcounter{equation}{0}%
     \edef\theequation{\@currentlabel\noexpand\alph{equation}}%
  }
    \def\BibTeX{{\rm B\kern-.05em{\sc i\kern-.025em b}\kern-.08em
                 T\kern-.1667em\lower.7ex\hbox{E}\kern-.125emX}}}{}%
\def\AmS{{\protect\usefont{OMS}{cmsy}{m}{n}%
                A\kern-.1667em\lower.5ex\hbox{M}\kern-.125emS}}}{}%
\let\DOTSI\relax
\def\RIfM@{\relax\ifmmode}%
\def\FN@{\futurelet\next}%
\def\iint{\DOTSI\intno@\tw@\FN@\ints@}%
\def\iiint{\DOTSI\intno@\thr@@\FN@\ints@}%
\def\iiiint{\DOTSI\intno@4 \FN@\ints@}%
\def\idotsint{\DOTSI\intno@\z@\FN@\ints@}%
\def\ints@{\findlimits@\ints@@}%
\newif\iflimtoken@
\newif\iflimits@
\def\findlimits@{\limtoken@true\ifx\next\limits\limits@true
 \else\ifx\next\nolimits\limits@false\else
 \limtoken@false\ifx\ilimits@\nolimits\limits@false\else
 \ifinner\limits@false\else\limits@true\fi\fi\fi\fi}%
\def\multint@{\int\ifnum\intno@=\z@\intdots@                          
 \else\intkern@\fi                                                    
 \ifnum\intno@>\tw@\int\intkern@\fi                                   
 \ifnum\intno@>\thr@@\int\intkern@\fi                                 
 \int}
\def\multintlimits@{\intop\ifnum\intno@=\z@\intdots@\else\intkern@\fi
 \ifnum\intno@>\tw@\intop\intkern@\fi
 \ifnum\intno@>\thr@@\intop\intkern@\fi\intop}%
\def\intic@{%
    \mathchoice{\hskip.5em}{\hskip.4em}{\hskip.4em}{\hskip.4em}}%
\def\negintic@{\mathchoice
 {\hskip-.5em}{\hskip-.4em}{\hskip-.4em}{\hskip-.4em}}%
\def\ints@@{\iflimtoken@                                              
 \def\ints@@@{\iflimits@\negintic@
   \mathop{\intic@\multintlimits@}\limits                             
  \else\multint@\nolimits\fi                                          
  \eat@}
 \else                                                                
 \def\ints@@@{\iflimits@\negintic@
  \mathop{\intic@\multintlimits@}\limits\else
  \multint@\nolimits\fi}\fi\ints@@@}%
\def\intkern@{\mathchoice{\!\!\!}{\!\!}{\!\!}{\!\!}}%
\def\plaincdots@{\mathinner{\cdotp\cdotp\cdotp}}%
\def\intdots@{\mathchoice{\plaincdots@}%
 {{\cdotp}\mkern1.5mu{\cdotp}\mkern1.5mu{\cdotp}}%
 {{\cdotp}\mkern1mu{\cdotp}\mkern1mu{\cdotp}}%
 {{\cdotp}\mkern1mu{\cdotp}\mkern1mu{\cdotp}}}%
\def\RIfM@{\relax\protect\ifmmode}
\def\text{\RIfM@\expandafter\text@\else\expandafter\mbox\fi}
\let\nfss@text\text
\def\text@#1{\mathchoice
   {\textdef@\displaystyle\f@size{#1}}%
   {\textdef@\textstyle\tf@size{\firstchoice@false #1}}%
   {\textdef@\textstyle\sf@size{\firstchoice@false #1}}%
   {\textdef@\textstyle \ssf@size{\firstchoice@false #1}}%
   \glb@settings}
\def\textdef@#1#2#3{\hbox{{%
                    \everymath{#1}%
                    \let\f@size#2\selectfont
                    #3}}}
\newif\iffirstchoice@
\def\Let@{\relax\iffalse{\fi\let\\=\cr\iffalse}\fi}%
\def\vspace@{\def\vspace##1{\crcr\noalign{\vskip##1\relax}}}%
\def\multilimits@{\bgroup\vspace@\Let@
 \baselineskip\fontdimen10 \scriptfont\tw@
 \advance\baselineskip\fontdimen12 \scriptfont\tw@
 \lineskip\thr@@\fontdimen8 \scriptfont\thr@@
 \lineskiplimit\lineskip
 \vbox\bgroup\ialign\bgroup\hfil$\m@th\scriptstyle{##}$\hfil\crcr}%
\def\Sb{_\multilimits@}%
\def\endSb{\crcr\egroup\egroup\egroup}%
\def\Sp{^\multilimits@}%
\newdimen\ex@
\def\rightarrowfill@#1{$#1\m@th\mathord-\mkern-6mu\cleaders
 \hbox{$#1\mkern-2mu\mathord-\mkern-2mu$}\hfill
 \mkern-6mu\mathord\rightarrow$}%
\def\leftarrowfill@#1{$#1\m@th\mathord\leftarrow\mkern-6mu\cleaders
 \hbox{$#1\mkern-2mu\mathord-\mkern-2mu$}\hfill\mkern-6mu\mathord-$}%
\def\leftrightarrowfill@#1{$#1\m@th\mathord\leftarrow
\mkern-6mu\cleaders
 \hbox{$#1\mkern-2mu\mathord-\mkern-2mu$}\hfill
 \mkern-6mu\mathord\rightarrow$}%
\def\overrightarrow{\mathpalette\overrightarrow@}%
\def\overrightarrow@#1#2{\vbox{\ialign{##\crcr\rightarrowfill@#1\crcr
 \noalign{\kern-\ex@\nointerlineskip}$\m@th\hfil#1#2\hfil$\crcr}}}%
\def\overleftarrow{\mathpalette\overleftarrow@}%
\def\overleftarrow@#1#2{\vbox{\ialign{##\crcr\leftarrowfill@#1\crcr
 \noalign{\kern-\ex@\nointerlineskip}$\m@th\hfil#1#2\hfil$\crcr}}}%
\def\overleftrightarrow{\mathpalette\overleftrightarrow@}%
\def\overleftrightarrow@#1#2{\vbox{\ialign{##\crcr
   \leftrightarrowfill@#1\crcr
 \noalign{\kern-\ex@\nointerlineskip}$\m@th\hfil#1#2\hfil$\crcr}}}%
\def\underrightarrow{\mathpalette\underrightarrow@}%
\def\underrightarrow@#1#2{\vtop{\ialign{##\crcr$\m@th\hfil#1#2\hfil
  $\crcr\noalign{\nointerlineskip}\rightarrowfill@#1\crcr}}}%
\def\underleftarrow{\mathpalette\underleftarrow@}%
\def\underleftarrow@#1#2{\vtop{\ialign{##\crcr$\m@th\hfil#1#2\hfil
  $\crcr\noalign{\nointerlineskip}\leftarrowfill@#1\crcr}}}%
\def\underleftrightarrow{\mathpalette\underleftrightarrow@}%
\def\underleftrightarrow@#1#2{\vtop{\ialign{##\crcr$\m@th
  \hfil#1#2\hfil$\crcr
 \noalign{\nointerlineskip}\leftrightarrowfill@#1\crcr}}}%
\def\qopnamewl@#1{\mathop{\operator@font#1}\nlimits@}
\let\nlimits@\displaylimits
\def\setboxz@h{\setbox\z@\hbox}
\def\varlim@#1#2{\mathop{\vtop{\ialign{##\crcr
 \hfil$#1\m@th\operator@font lim$\hfil\crcr
 \noalign{\nointerlineskip}#2#1\crcr
 \noalign{\nointerlineskip\kern-\ex@}\crcr}}}}
 \def\rightarrowfill@#1{\m@th\setboxz@h{$#1-$}\ht\z@\z@
  $#1\copy\z@\mkern-6mu\cleaders
  \hbox{$#1\mkern-2mu\box\z@\mkern-2mu$}\hfill
  \mkern-6mu\mathord\rightarrow$}
\def\leftarrowfill@#1{\m@th\setboxz@h{$#1-$}\ht\z@\z@
  $#1\mathord\leftarrow\mkern-6mu\cleaders
  \hbox{$#1\mkern-2mu\copy\z@\mkern-2mu$}\hfill
  \mkern-6mu\box\z@$}
\def\projlim{\qopnamewl@{proj\,lim}}
\def\injlim{\qopnamewl@{inj\,lim}}
\def\varinjlim{\mathpalette\varlim@\rightarrowfill@}
\def\varprojlim{\mathpalette\varlim@\leftarrowfill@}
\def\varliminf{\mathpalette\varliminf@{}}
\def\varliminf@#1{\mathop{\underline{\vrule\@depth.2\ex@\@width\z@
   \hbox{$#1\m@th\operator@font lim$}}}}
\def\varlimsup{\mathpalette\varlimsup@{}}
\def\varlimsup@#1{\mathop{\overline
  {\hbox{$#1\m@th\operator@font lim$}}}}
\def\align{\@verbatim \frenchspacing\@vobeyspaces \@alignverbatim
You are using the "align" environment in a style in which it is not defined.}
\let\csname endalign*\endcsname =\endtrivlist
\def\alignat{\@verbatim \frenchspacing\@vobeyspaces \@alignatverbatim
You are using the "alignat" environment in a style in which it is not defined.}
\let\csname endalignat*\endcsname =\endtrivlist
\def\xalignat{\@verbatim \frenchspacing\@vobeyspaces \@xalignatverbatim
You are using the "xalignat" environment in a style in which it is not defined.}
\let\csname endxalignat*\endcsname =\endtrivlist
\def\gather{\@verbatim \frenchspacing\@vobeyspaces \@gatherverbatim
You are using the "gather" environment in a style in which it is not defined.}
\let\csname endgather*\endcsname =\endtrivlist
\def\multiline{\@verbatim \frenchspacing\@vobeyspaces \@multilineverbatim
You are using the "multiline" environment in a style in which it is not defined.}
\let\csname endmultiline*\endcsname =\endtrivlist
\def\arrax{\@verbatim \frenchspacing\@vobeyspaces \@arraxverbatim
You are using a type of "array" construct that is only allowed in AmS-LaTeX.}
\def\tabulax{\@verbatim \frenchspacing\@vobeyspaces \@tabulaxverbatim
You are using a type of "tabular" construct that is only allowed in AmS-LaTeX.}
\let\csname endarrax*\endcsname =\endtrivlist
\let\csname endtabulax*\endcsname =\endtrivlist
\def\@@eqncr{\let\@tempa\relax
    \ifcase\@eqcnt \def\@tempa{& & &}\or \def\@tempa{& &}%
      \else \def\@tempa{&}\fi
     \@tempa
     \if@eqnsw
        \iftag@
           \@taggnum
        \else
           \@eqnnum\stepcounter{equation}%
        \fi
     \fi
     \global\tag@false
     \global\@eqnswtrue
     \global\@eqcnt\z@\cr}
 \def\endequation{%
     \ifmmode\ifinner 
      \iftag@
        \addtocounter{equation}{-1} 
        $\hfil
           \displaywidth\linewidth\@taggnum\egroup \endtrivlist
        \global\tag@false
        \global\@ignoretrue
      \else
        $\hfil
           \displaywidth\linewidth\@eqnnum\egroup \endtrivlist
        \global\tag@false
        \global\@ignoretrue
      \fi
     \else
      \iftag@
        \addtocounter{equation}{-1} 
        \eqno \hbox{\@taggnum}
        \global\tag@false%
        $$\global\@ignoretrue
      \else
        \eqno \hbox{\@eqnnum}
        $$\global\@ignoretrue
      \fi
     \fi\fi
 }
 \newif\iftag@ \tag@false
 \def\tag{\@ifnextchar*{\@tagstar}{\@tag}}
 \def\@tag#1{%
     \global\tag@true
     \global\def\@taggnum{(#1)}}
 \def\@tagstar*#1{%
     \global\tag@true
     \global\def\@taggnum{#1}%
}
\newtheorem{satz}{Theorem}[section]
\newtheorem{definition}[satz]{Definition}
\newtheorem{lemma}[satz]{Lemma}
\newtheorem{koro}[satz]{Corollary}
\newtheorem{bemerkung}[satz]{Remark}
\newtheorem{notation}[satz]{Notation}
\newenvironment{proof}{\par\noindent {\it Proof:} \hspace{7pt}}{\hfill\hbox{\vrule width 7pt depth 0pt height 7pt}
\par\vspace{10pt}}
\begin{document}

\title{From the 2nd Law of Thermodynamics to AC--Conductivity Measures of
Interacting Fermions in Disordered Media}
\author{J.-B. Bru \and W. de Siqueira Pedra}
\date{\today}
\maketitle

\begin{abstract}
We study the dynamics of interacting lattice fermions with random hopping
amplitudes and random static potentials, in presence of time--dependent
electromagnetic fields. The interparticle interaction is short--range and
translation invariant. Electromagnetic fields are compactly supported in
time and space. In the limit of infinite space supports (macroscopic limit)
of electromagnetic fields, we derive Ohm and Joule's laws in the AC--regime.
An important outcome is the extension to interacting fermions of the notion
of macroscopic AC--conductivity measures, known so far only for free
fermions with disorder. Such excitation measures result from the 2nd law of
thermodynamics and turn out to be L\'{e}vy measures. As compared to the
Drude (Lorentz--Sommerfeld) model, widely used in Physics, the quantum
many--body problem studied here predicts a much smaller AC--conductivity at
large frequencies. This indicates (in accordance with experimental results)
that the relaxation time of the Drude model, seen as an effective parameter
for the conductivity, should be highly frequency--dependent.\ We conclude by
proposing an alternative effective description -- using L\'{e}vy Processes
in Fourier space -- of the pheno%
\-%
menon of electrical conductivity.
\end{abstract}

\tableofcontents%

\section{Introduction}

The present paper belongs to a succession of works on electrical
conductivity, starting with \cite{OhmI,OhmII,OhmIII,OhmIV,OhmV}.

As claimed in the famous paper \cite[p. 505]{meanfreepath},
\textquotedblleft \textit{it must be admitted that there is no entirely
rigorous quantum theory of conductivity}.\textquotedblright\ Concerning
AC--conductivity, however, in the last years significant mathematical
progress has been made. See \cite%
{Annale,JMP-autre,JMP-autre2,Cornean,OhmI,OhmII,OhmIII,OhmIV,W} for examples
of mathematically rigorous derivations of linear conductivity from first
principles of quantum mechanics in the AC--regime. These results indicate a
picture of the microscopic origin of Ohm and Joule's laws which differs from
usual explanations coming from the Drude (Lorentz--Sommerfeld) model. This
is discussed with more details in \cite{OhmII}.

As electrical resistance of conductors may result from the presence of
interactions between charge carriers, the main drawback of these studies is
their restriction to non--interacting systems.

A first attempt in this direction has been tried in the parallel work \cite%
{W}, which uses like us an algebraic approach to tackle such problems. With
regard to interacting systems, explicit constructions of KMS states are
obtained in the Ph.D. thesis \cite{W} for a one--dimensional model of
interacting fermions with a finite range pair interaction. But, the author
studies in \cite[Chap. 9]{W} the linear response theory only for
non--interacting fermions, keeping in mind possible generalizations to
interacting systems.

Therefore, we aim to extend \cite{OhmI,OhmII,OhmIII,OhmIV} to fermion
systems with interactions. As a first step, \cite{OhmV} proves all
assertions of \cite{OhmI,OhmII} for fermion systems with short range
interactions. We perform here the second part of this program by extending
the main results of \cite{OhmIII} to interacting fermion systems:

\begin{itemize}
\item Like in \cite{OhmV}, we investigate some non--auto%
\-%
nomous $C^{\ast }$--dyna%
\-%
mical system on the CAR $C^{\ast }$--algebra of cubic infinite lattices of
any dimension. The (non--auto%
\-%
nomous) dynamics is generated by short--range and translation invariant
interactions between particles, random static potentials, and also random
next neighbor hopping amplitudes in presence of local and time--dependent
electromagnetic fields. Disorder is here defined via ergodic distributions
of random potentials and hopping amplitudes.

\item We study the linear response of interacting fermions at thermal
equilibrium in disordered media to \emph{macroscopic} electric fields that
are time-- and space--dependent. In particular, we obtain Ohm's law with a
(charge) transport coefficient that is a continuous function (of time)
naturally called ma%
\-%
croscopic conductivity.

\item The Fourier transform of the conductivity is named macroscopic AC--con%
\-%
ductivity measure. The fact that this Fourier transform is indeed a measure
follows from the 2nd law of thermodynamics (see Remark \ref{AC--Conductivity
Measure from the 2nd Law}). The latter corresponds here to the
Kelvin--Planck statement while avoiding the concept of \textquotedblleft
cooling\textquotedblright\ \cite[p. 49]{lieb-yngvasonPhysReport}. In
particular, the 2nd law yields the positivity of the heat production for
cyclic processes on equilibrium states. The concept of conductivity measure,
introduced by Klein, Lenoble and M\"{u}ller, has already been used in \cite%
{Annale,JMP-autre,JMP-autre2,OhmII,OhmIII,OhmIV} for the \emph{non}%
--interacting case.

\item We give a comparison of our results and the Drude
(Lorentz--Sommerfeld) model, widely used in Physics \cite%
{meanfreepath,Anderson-physics} to describe the phenomenon of electrical
conductivity. See also \cite{brupedrahistoire} for a historical perspective
of this subject. In particular, we show that the Drude model and its
refinements (like the Drude--Lorentz and the Lorentz--Sommerfeld models)
\emph{always} overestimate\ the in--phase conductivity at high frequencies.
This indicates that the relaxation time of the Drude model, seen as an
effective parameter for the conductivity, should be frequency--dependent, as
already observed for instance in \cite{T,NS1,NS2,SE,Y}. In fact, it should
either vanish or diverge at large frequencies.\

\item We show that the AC--conducti%
\-%
vity measure of the system under consideration is always a L\'{e}vy measure.
An alternative effective description of the pheno%
\-%
menon of linear conductivity by using L\'{e}vy Processes in Fourier space is
discussed. This is reminiscent of Boltzmann equation for collective
oscillatory modes of charge excitations. It was recently shown that L\'{e}vy
statistics can efficiently describe quantum phenomena like (subrecoil) laser
cooling \cite{9780511755668}. As far as we know, there is no mathematically
rigorous proof of this fact.
\end{itemize}

\noindent Note that also new results not presented in \cite{OhmIII} are
obtained here, even for non--interacting fermions. For instance, in contrast
with \cite{OhmIII}, the hopping amplitudes are allowed to be
non--homogeneous in space.

Like in \cite{OhmV}, these results are made possible by \emph{Lieb--Robinson
bounds for multi--commutators}. See \cite{brupedraLR} for more details.
Indeed, we need to get error terms uniformly bounded with respect to
(w.r.t.) the random parameters and the volume $\Lambda _{l}$ of the box
where the electromagnetic field lives. This is a crucial step to get
valuable information in the limit $l\rightarrow \infty $\ of macroscopic
electromagnetic fields, otherwise the results presented here would loose
almost all its interest. To get such error terms, we apply in \cite%
{OhmI,OhmII,OhmIII,OhmIV} tree--decay bounds on multi--commutators in the
sense of \cite[Section 4]{OhmI}. The latter are based on combinatorial
results \cite[Theorem 4.1]{OhmI} already used before, for instance in \cite%
{FroehlichMerkliUeltschi}. Nevertheless, \cite{OhmI,OhmII,OhmIII,OhmIV} or
\cite{FroehlichMerkliUeltschi} require Bogoliubov automorphisms (see \cite[%
Theorem 5.2.5]{BratteliRobinson}). In other words, only \emph{non}%
--interacting fermion systems can be tackled with such combinatorial
arguments (like \cite[Theorem 4.1]{OhmI}).

A solution to that issue for the \emph{interacting} case has only been
recently given in \cite{brupedraLR} via Lieb--Robinson bounds for
multi--commutators\footnote{%
Only Lieb--Robinson bounds for multi--commutators of order 3 is necessary in
our study.}, which is \emph{not} an obvious extension of usual
Lieb--Robinson bounds known since 1972 \cite{liebrobinsonbounds}. This is
explained in \cite[Sections 3.3, 4.3]{brupedraLR} and \cite{OhmV}. Note that
Lieb--Robinson bounds have also been recognized as an important ingredient
in \cite{W} via the so--called strong localization criterion, see \cite[%
Definition 10.1]{W}. Nevertheless, they are not sufficient for our purpose.
Indeed, its extensions to multi--commutators turn out to be pivotal in
Theorem \ref{Theorem AC conductivity measure copy(1)}, which is used to
prove Theorem \ref{Theorem AC conductivity measure}. In fact, the
Lieb--Robinson bounds for multi--commutators make the present paper much
easier from the technical point of view, even for interacting systems.
Compare, for instance, Section \ref{section energetic study} with \cite[%
Section 5.4]{OhmIII}. As a consequence, important conceptual issues, like
the derivation of AC--conductivity measures\footnote{%
The conductivity measure can, indeed, be seen as a excitation measure
related to electric perturbations. Similar constructions can be performed
for many classes of pertubations because of the 2nd law.} for interacting
fermions on lattices by using the 2nd law as a postulate (see Remark \ref%
{AC--Conductivity Measure from the 2nd Law}), become more transparent.

As explained in \cite{OhmV}, note however that Lieb--Robinson bounds for
multi--commutators requires short--range interactions. Our setting includes
density--density interactions resulting from the second quantization of
two--body interactions defined via a real--valued and summable (in a
convenient sense) function $v\left( r\right) :[0,\infty )\rightarrow \mathbb{%
R}$. For instance, the celebrated Hubbard model (and any other system with
finite range interactions) or models with Yukawa--type potentials are all
possible choices, but the Coulomb potential is excluded because it is not
summable in space. For more details, see \cite[Section 2.4]{OhmV}.

Our main assertions are Theorems \ref{thm charged transport coefficient}, %
\ref{main 1 copy(8)}, \ref{main 1 copy(1)}, \ref{Theorem AC conductivity
measure} and \ref{Theorem AC conductivity measure copy(2)}. The paper is
organized as follows:

\begin{itemize}
\item Section \ref{Section - 2 law} is a preliminary conceptual review on
the notion of thermal equilibrium state in relation to the 2nd law of
thermodynamics. In this context, the mathematical results of \cite{PW} are
discussed.

\item Section \ref{sect 2.1 copy(1)} formulates the mathematical setting
used to study charge transport properties of fermions. We define in
particular a Banach space of short--range interactions.

\item Section \ref{Sect Classical Ohm's Law copy(1)} states Ohm's law for
macroscopic electromagnetic fields as well as Green--Kubo relations for
current Duhamel fluctuation increments.

\item In Section \ref{Sect Conductivity Measure From Joule's Law} we derive
the macroscopic AC--conducti%
\-%
vity measure from Joule's law and the 2nd law of thermodynamics. Its
relations with microscopic AC--conducti%
\-%
vity measures and the Drude model are discussed. In Section \ref{sect time
ref} we propose a notion of time--reversal symmetry for fermion systems on
the lattice in presence of disorder and discuss its consequences for the
corresponding charge transport coefficients.

\item Section \ref{epilogue} proposes an effective description of the pheno%
\-%
menon of linear conductivity by using L\'{e}vy Processes.

\item Section \ref{Section technical proof Ohm-VI} gathers technical proofs
on which Sections \ref{Sect Classical Ohm's Law copy(1)}--\ref{epilogue} are
based. The arguments strongly use the results of \cite{OhmV,brupedraLR}.
\end{itemize}

\begin{notation}
\label{remark constant copy(1)}\mbox{
}\newline
To simplify notation, we denote by $D$ positive and finite constants. These
constants do not need to be the same from one statement to another. A norm
on a generic vector space $\mathcal{X}$ is denoted by $\Vert \cdot \Vert _{%
\mathcal{X}}$ and the identity map of $\mathcal{X}$ by $\mathbf{1}_{\mathcal{%
X}}$. To avoid ambiguity, scalar products in $\mathcal{X}$ are sometimes
denoted by $\langle \cdot ,\cdot \rangle _{\mathcal{X}}$.
\end{notation}

\section{2nd Law of Thermodynamics and Thermal States\label{Section - 2 law}}

\noindent \textit{It is impossible, by means of inanimate material agency,
to derive mechanical effect from any portion of matter by cooling it below
the temperature of the coldest of the surrounding objects. }\smallskip

\hfill \lbrack Lord Kelvin, 1851]\bigskip

\noindent See \cite{thomson}. This is the celebrated \emph{2nd law of
thermodynamics}, the history of which starts with Carnot's works in 1824. It
is \textquotedblleft \textit{one of the most perfect laws in physics}%
\textquotedblright\ \cite[Section 1]{lieb-yngvasonPhysReport} and it has
never been faulted by reproducible experiments. As explained in \cite%
{lieb-yngvasonPhysReport,lieb-yngvason}, different popular formulations of
the same principle have been stated by Clausius, Kelvin (and Planck), and
Cara%
\-%
th\'{e}odory. Our study is based on the Kelvin--Planck statement while
avoiding the concept of \textquotedblleft cooling\textquotedblright\ \cite[%
p. 49]{lieb-yngvasonPhysReport}: \bigskip

\noindent \textit{No process is possible, the sole result of which is a
change in the energy of a simple system (without changing the work
coordinates) and the raising of a weight. }\bigskip

\noindent The celebrated formulations of Clausius, Kelvin--Planck and Cara%
\-%
th\'{e}odory are all about impossible processes and let largely open what is
possible. This is useful to define the concept of \emph{thermal equilibrium}
states in a simple way. Note that Lieb and Yngvason's work \cite%
{lieb-yngvasonPhysReport} on the 2nd law is an important structural approach
which involves possible processes, instead.

We mathematically implement the Kelvin--Planck principle by using algebraic
quantum mechanics like in \cite{PW}. Basically, we use some $C^{\ast }$%
--algebra $\mathcal{X}$, the self--adjoint elements of which are the
so--called observables of the physical system. States on the $C^{\ast }$%
--algebra $\mathcal{X}$ are, by definition, linear functionals $\rho \in
\mathcal{X}^{\ast }$ which are normalized and positive, i.e., $\rho (\mathbf{%
1})=1$ and $\rho (B^{\ast }B)\geq 0$ for all $B\in \mathcal{X}$. They
represent the state of the physical system. In the commutative case of
classical physics states are usual probability measures.

To define equilibrium states, \cite{PW} is pivotal because it mathematically
implements the Kelvin--Planck physical notion of equilibrium: \bigskip

\noindent \textit{Systems in the equilibrium are unable to perform
mechanical work in cyclic processes.}\bigskip

\noindent Note at this point that the above principle (2nd law) defining
equilibrium can possibly be violated.

As explained in \cite[p. 276]{PW}, the above formulation of the 2nd law of
thermodynamics is directly related to the notion of \emph{passive} states.
Indeed, one defines a (unperturbed) dynamics of the system by a strongly
continuous one--parameter group $\tau \equiv \{\tau _{t}\}_{t\in {\mathbb{R}}%
}$ of $\ast $--automorphisms of $\mathcal{X}$ with (generally unbounded)
generator $\delta $. The latter is a dissipative and closed derivation of $%
\mathcal{X}$. If the state of the system at $t=t_{0}\in {\mathbb{R}}$ is $%
\rho \in \mathcal{X}^{\ast }$, then it evolves as $\rho _{t}=\rho \circ \tau
_{t-t_{0}}$ for any $t\geq t_{0}$. On this system, one produces
\textquotedblleft excitations\textquotedblright\ by perturbing the generator
of dynamics with bounded time--dependant symmetric derivations%
\begin{equation*}
B\mapsto i\left[ A_{t},B\right] :=i\left( A_{t}B-BA_{t}\right) \ ,\qquad
B\in \mathcal{X}\ ,\ t\in \mathbb{R}\ ,
\end{equation*}%
for arbitrary differentiable families $\{A_{t}\}_{t\geq t_{0}}\subset
\mathcal{X}$ of self--adjoint elements of $\mathcal{X}$. In particular, this
defines a strongly continuous two--parameter family $\{\tau
_{t,t_{0}}\}_{t\geq t_{0}}$ of $\ast $--automorphisms of $\mathcal{X}$ as
the solution of a non--autonomous evolution equation defined, for any $B\in
\mathrm{Dom}(\delta )$, by
\begin{equation*}
\forall t_{0},t\in {\mathbb{R}},\ t\geq t_{0}:\quad \partial _{t}\tau
_{t,t_{0}}\left( B\right) =\tau _{t,t_{0}}\left( \delta \left( B\right) +i%
\left[ A_{t},B\right] \right) ,\quad \tau _{t_{0},t_{0}}\left( B\right) :=B\
.
\end{equation*}%
The state of the system evolves now as $\rho _{t}=\rho \circ \tau _{t,t_{0}}$
for any $t\geq t_{0}$.

As explained in \cite[p. 276]{PW}, the energy exchanged between the external
device and the perturbed system at time $t\geq t_{0}$ is equal to
\begin{equation}
L_{t}^{A}\left( \rho \right) :=\int_{t_{0}}^{t}\rho \circ \tau
_{t,t_{0}}\left( \partial _{t}A_{t}\right) \mathrm{d}t\ .  \label{work}
\end{equation}%
If $L_{t}^{A}\left( \rho \right) \geq 0$ then work is performed on the
system, while $L_{t}^{A}\left( \rho \right) <0$ means that one decreases the
energy of the system. A \emph{cyclic} process of time length $T\geq 0$ is,
by definition, a differentiable family $\{A_{t}\}_{t\geq t_{0}}\subset
\mathcal{X}$ of self--adjoint elements of $\mathcal{X}$ such that $A_{t}=0$
for all $t\leq t_{0}$ and $t\geq t_{1}:=T+t_{0}$. Then, the 2nd law of
thermodynamics can be formulated in this mathematical framework as follows
(cf. \cite[Definition 1.1]{PW}):

\begin{definition}[2nd law of thermodynamics -- Passivity]
\label{def second law1}\mbox{ }\newline
Let $(\mathcal{X},\tau )$ be a $C^{\ast }$--dynamical system. A state $\rho
\in \mathcal{X}^{\ast }$ is passive iff $L_{T}^{A}\left( \rho \right) \geq 0$
for all cyclic processes $\{A_{t}\}_{t\geq t_{0}}\subset \mathcal{X}$ of any
time length $T\geq 0$.
\end{definition}

\noindent By \cite[Theorem 2.1]{PW}, passive states $\rho $ of a dynamical
system $(\mathcal{X},\tau )$ can be equivalently defined as states
satisfying
\begin{equation*}
-i\rho (U^{\ast }\delta (U))\geq 0
\end{equation*}%
for all unitaries $U\in \mathcal{X}$ both in the domain of definition of the
generator $\delta $\ of the group $\tau $ and in the connected component of
the identity of the group of all unitary elements of $\mathcal{X}$ with the
norm topology. See, e.g., \cite[Definition 5.3.21]{BratteliRobinson}. This
last condition is strongly related with internal energy increments and the
1st law of thermodynamics, see, e.g., \cite[Theorem 3.2]{OhmV}.

By \cite[Theorem 1.1]{PW}, such states are invariant with respect to
(w.r.t.) the unperturbed dynamics: any passive state $\rho \in \mathcal{X}%
^{\ast }$ satisfies%
\begin{equation*}
\rho =\rho \circ \tau _{t}\ ,\qquad t\in {\mathbb{R}}\ .
\end{equation*}%
Physically, it means that the dynamics of the system at equilibrium \emph{%
cannot} be observed unless one performs external perturbations $%
\{A_{t}\}_{t\geq t_{0}}$ to extract some \emph{excitation spectrum}. This
last notion will be discussed in detail in a companion paper and the
conductivity measure is one notable example of application.

Moreover, for any $\beta \in \mathbb{R}_{0}^{+}$, all $(\tau ,\beta )$--KMS
states $\varrho ^{(\beta )}$ are passive, see \cite[Theorem 1.2]{PW}. The
same holds true for $\beta =\infty $, that is, for ground states of $(%
\mathcal{X},\tau )$. Any convex combination of passive states is also
passive. In particular, for any $n\in \mathbb{N}$, $\beta _{1},\ldots ,\beta
_{n},\mu _{1},\ldots ,\mu _{n}\in \mathbb{R}^{+}$ with $\Sigma _{j=1}^{n}\mu
_{j}=1$, the state
\begin{equation}
\rho =\overset{n}{\sum\limits_{j=1}}\mu _{j}\varrho ^{(\beta _{j})}
\label{combinations}
\end{equation}%
is passive, but it is neither a KMS nor a ground state of $(\mathcal{X},\tau
)$, in general.

We impose another natural condition related to the physical notion \cite[%
Definition p. 55]{lieb-yngvasonPhysReport} of thermal equilibrium in
thermodynamics that excludes such convex combinations. A minimal requirement
for the system to be in thermal equilibrium is indeed that it cannot produce
work by interacting with any of its copy. To be more precise, prepare $n\in
\mathbb{N}$ copies $(\mathcal{X}^{(1)},\tau ^{(1)},\rho ^{(1)}),\ldots ,(%
\mathcal{X}^{(n)},\tau ^{(n)},\rho ^{(n)})$ of the original system defined
by $(\mathcal{X},\tau ,\rho )$ and consider the compound system%
\begin{equation*}
(\otimes _{j=1}^{n}\mathcal{X}^{(j)},\otimes _{j=1}^{n}\tau ^{(j)},\otimes
_{j=1}^{n}\rho ^{(j)})\ .
\end{equation*}%
If $(\mathcal{X},\tau ,\rho )$ is at \emph{thermal equilibrium}, the
compound system should also be at equilibrium and it must not be possible to
extract any energy from cyclic processes, by the 2nd law of thermodynamics.
Therefore, $\otimes _{j=1}^{n}\rho $ should also be passive for all $n\in
\mathbb{N}$. Such states are named in the literature \emph{completely passive%
} states:

\begin{definition}[Thermal equilibrium states]
\label{def second law2}\mbox{ }\newline
Let $(\mathcal{X},\tau )$ be a $C^{\ast }$--dynamical system. A state $%
\varrho \in \mathcal{X}^{\ast }$ is completely passive iff $\otimes
_{j=1}^{n}\varrho $ is a passive state of $(\otimes _{j=1}^{n}\mathcal{X}%
^{(j)},\otimes _{j=1}^{n}\tau ^{(j)})$ for all $n\in \mathbb{N}$. We name
them thermal equilibrium states of $(\mathcal{X},\tau )$.
\end{definition}

\noindent \cite[Theorem 1.4]{PW} gives an explicit characterization of
thermal equilibrium states:

\begin{satz}[Pusz--Woronowicz]
\label{main 1}\mbox{
}\newline
Let $(\mathcal{X},\tau )$ be a $C^{\ast }$--dynamical system. $\varrho $ is
a thermal equilibrium state of $(\mathcal{X},\tau )$ iff it is a $(\tau
,\beta )$--KMS state of $\left( \mathcal{X},\tau \right) $ for some $\beta
\in \left[ 0,\infty \right] $.
\end{satz}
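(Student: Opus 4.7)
The plan is to split the biconditional into its two implications and prove each separately, following the strategy of Pusz and Woronowicz cited in [PW].

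\medskip

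For the easy direction ($\Leftarrow$), suppose $\varrho$ is a $(\tau,\beta)$--KMS state for some $\beta\in[0,\infty]$. I would first note that for any $n\in\mathbb{N}$, the tensor product $\otimes_{j=1}^{n}\varrho$ is a $(\otimes_{j=1}^{n}\tau^{(j)},\beta)$--KMS state of the compound system $(\otimes_{j=1}^{n}\mathcal{X}^{(j)},\otimes_{j=1}^{n}\tau^{(j)})$: the KMS analyticity condition separates tensorially because the perturbed generator of the compound dynamics is $\sum_{j}\mathbf{1}\otimes\cdots\otimes\delta^{(j)}\otimes\cdots\otimes\mathbf{1}$, and the boundary condition $\varrho(AB)=\varrho(B\tau_{i\beta}(A))$ multiplies across factors. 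By [PW, Theorem 1.2] (quoted in the excerpt), every KMS state is passive, so $\otimes_{j=1}^{n}\varrho$ is passive for every $n$. The boundary cases $\beta=0$ (tracial) and $\beta=\infty$ (ground state) are handled directly from the respective definitions; in both cases passivity of arbitrary tensor powers is immediate from the spectral characterization of the generator.

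\medskip

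For the nontrivial direction ($\Rightarrow$), suppose $\varrho$ is completely passive. By [PW, Theorem 2.1] applied to each tensor power, complete passivity is equivalent to
\[
-i\sum_{j=1}^{n}\varrho(U_{j}^{\ast}\delta(U_{j}))\geq 0
\]
for every $n$ and every finite collection of unitaries $U_{1},\ldots,U_{n}$ in the connected identity component of the unitary group of $\mathcal{X}$ that lie in the domain of $\delta$ (applied via the tensor construction $U=U_{1}\otimes\cdots\otimes U_{n}$ in the compound system). Next I would pass to the GNS representation $(\mathcal{H}_{\varrho},\pi_{\varrho},\Omega_{\varrho})$ associated with $\varrho$; since $\varrho$ is $\tau$--invariant (as any passive state, by [PW, Theorem 1.1]), the dynamics is implemented by a self--adjoint Liouvillean $L$ on $\mathcal{H}_{\varrho}$ with $L\Omega_{\varrho}=0$. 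The key step is then to translate the multi--unitary inequality above into a spectral statement about $L$ relative to the cyclic vector $\Omega_{\varrho}$.

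\medskip

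The main obstacle, and heart of the argument, is extracting a single parameter $\beta\in[0,\infty]$ controlling the asymmetry between $\varrho(A^{\ast}A)$ and $\varrho(AA^{\ast})$ across the whole spectrum of $L$. Concretely, choosing unitaries of the form $U_{j}=\exp(i\epsilon A_{j})$ with $A_{j}$ self--adjoint localized spectrally with respect to $L$, and expanding to second order in $\epsilon$, complete passivity (for $n=2$) should yield the bound
\[
\varrho(A^{\ast}A)\,\leq\,e^{-\beta s}\,\varrho(AA^{\ast})
\]
whenever $A$ belongs to the spectral subspace associated with a shift $s>0$ of $L$, for a unique nonnegative slope $\beta$ (with $\beta=0$ and $\beta=\infty$ allowed as boundary cases). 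Here the passage from a family of pairwise inequalities to a single globally consistent $\beta$ is the combinatorial/convexity step: if different slopes $\beta(s)$ were allowed at different $s$, one could construct pairs $(U_{1},U_{2})$ violating complete passivity by concatenating transitions in opposite directions. Once the uniform $\beta$ is established, the resulting ratio condition is precisely the KMS boundary condition for $(\tau,\beta)$, so analytic continuation in the GNS Hilbert space upgrades it to the full KMS property on $\mathcal{X}$, completing the proof.
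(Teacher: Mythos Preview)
The paper does not give its own proof of this theorem: it is stated as a quotation of \cite[Theorem 1.4]{PW} and nothing more. So there is no argument in the paper to compare your proposal against; you are reconstructing the Pusz--Woronowicz proof itself.

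Your sketch is in the right spirit for the easy direction and for the overall architecture of the hard direction, but there is a concrete gap in the key step. You write that complete passivity ``(for $n=2$)'' already yields the single--$\beta$ bound, and that a violating cycle can be built from ``pairs $(U_1,U_2)$''. This is not how the argument works: the extraction of a \emph{unique} $\beta$ genuinely requires arbitrarily large $n$. If the putative slopes satisfied $\beta(s_1)<\beta(s_2)$ at two energies $s_1,s_2>0$, the violating process in \cite{PW} takes $n_1$ copies undergoing a transition at $s_1$ and $n_2$ copies at $s_2$ in the opposite direction, with the ratio $n_1/n_2$ tuned (by a rational approximation) so that the combined process lowers the energy while remaining in the connected identity component. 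Two copies are not enough to make this balancing work in general.

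A second, smaller issue: the inequality you write, $\varrho(A^\ast A)\le e^{-\beta s}\varrho(AA^\ast)$ for $A$ in the spectral subspace at shift $s$, is in fact an \emph{equality} for KMS states (detailed balance), and the passage from the second--order expansion of $e^{i\epsilon A}$ to this relation is not as direct as you suggest. In \cite{PW} the route goes through the ordering of spectral projections of the Liouvillean relative to $\Omega_\varrho$, and the final upgrade to the full KMS condition uses the modular structure rather than a bare analytic continuation. Your last two sentences compress several nontrivial pages into assertions.
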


\noindent The parameter $\beta \in \left[ 0,\infty \right] $ is named \emph{%
inverse temperature} of the system and is a \emph{consequence} of the 2nd
law of thermodynamics. It is a universal parameter of the (possibly
infinite) system. In fact, $\beta $ tunes the value of the internal energy
density of the system. Equivalently,\emph{\ it fixes a time scale} since $%
\varrho $ is a $(\tau _{t},\beta )$--KMS state iff $\varrho $ is a $(\tau
_{\beta t},1)$--KMS state ($\beta <\infty $). The boundary case $\beta =0$
corresponds to the $\tau $--invariant traces, also called chaotic states,
whereas $(\tau ,\infty )$--KMS states are by definition ground states. [$%
(\tau ,-\beta )$--KMS states correspond to $(\tau ,\beta )$--KMS states with
a reversal of time.]

The notion of local (relative) entropy seems to be more natural than the
concept of local temperature. Indeed, the 2nd law of thermodynamics as
expressed in Definitions \ref{def second law1}--\ref{def second law2} is a
formal expression of the unavoidable lost while one interacts with an
object, which is at equilibrium before the interaction. Entropy is only a
quantitative counterpart of this lost. It corresponds to heat production in
thermodynamics which we study in the context of electricity theory. The
positivity of the heat production, which is the content of the 2nd law of
thermodynamics, implies the existence of the AC--conductivity measure. See
Section \ref{Sect Conductivity Measure From Joule's Law}.

\begin{bemerkung}[Dynamics versus thermal equilibrium\ states]
\label{remark conjecture}\mbox{
}\newline
Let a state $\varrho \in \mathcal{X}^{\ast }$ with GNS representation $(%
\mathcal{H},\pi ,\Psi )$. Its normal extension $\hat{\varrho}$ on $\pi (\mathcal{X})^{\prime
\prime }$ is a KMS state for a $\sigma $--weakly continuous one--parameter
group $\tau \equiv \{\tau _{t}\}_{t\in {\mathbb{R}}}$ of $\ast $%
--automorphisms of $\pi (\mathcal{X})^{\prime \prime }$ iff $\hat{\varrho}$ is faithful. 
See, e.g., \cite[p. 85]{BratteliRobinson}. In this
case, the group $\tau $ is unique. The faithfulness of states is a
physically natural property: By definition, an observable exists iff the
corresponding physical property can be observed. Therefore, one could fix a
state $\varrho \in \mathcal{X}^{\ast }$ of the system that must be, by
definition, a thermal equilibrium state, i.e., a KMS state. This assumption
implicitly imposes the existence of some (unique) dynamics given by a group $%
\tau ^{(\varrho )}$ and is justified a posteriori via the 2nd law.
Constructing KMS states $\varrho ^{(\tau )}$ from a given dynamics $\tau $
may be technically more involved. It is however the approach we use because
the dynamics is fixed by microscopic interactions between particles.
\end{bemerkung}

\section{$C^{\ast }$--Dynamical Systems for Interacting Fermions\label{sect
2.1 copy(1)}\label{Section main results}}

The mathematical framework used here is exactly the one of \cite{OhmV}. It
is concisely described below. The only additional information is the exact
definition of the probability space modelling disorder.

\subsection{Disordered Media within Electromagnetic Fields\label{sect 2.1}}

Disorder in the crystal is modeled by a random variable with distribution $%
\mathfrak{a}_{\Omega }$ taking values in the measurable space $(\Omega ,%
\mathfrak{A}_{\Omega })$. The probability space $(\Omega ,\mathfrak{A}%
_{\Omega },\mathfrak{a}_{\Omega })$ is defined as follows:

\begin{itemize}
\item[$\Omega :$] Let $\mathfrak{L}:=\mathbb{Z}^{d}$ ($d\in \mathbb{N}$) and
\begin{equation}
\mathfrak{b}:=\left\{ \{x,x^{\prime }\}\subset \mathfrak{L}\text{ }:\text{ }%
|x-x^{\prime }|=1\right\}  \label{n-o bonds}
\end{equation}%
be the set of non--oriented bonds of the cubic lattice $\mathfrak{L}$. Then,
\begin{equation*}
\Omega :=[-1,1]^{\mathfrak{L}}\times \mathbb{D}^{\mathfrak{b}}\text{\quad
with\quad }\mathbb{D}:=\{z\in \mathbb{C}:\left\vert z\right\vert \leq 1\}\ .
\end{equation*}%
I.e., any element of $\Omega $ is a pair $\omega =\left( \omega _{1},\omega
_{2}\right) \in \Omega $, where $\omega _{1}$ is a function on lattice sites
with values in $[-1,1]$ and $\omega _{2}$ is a function on bonds with values
in the complex closed unit disc $\mathbb{D}$.

\item[$\mathfrak{A}_{\Omega }:$] Let $\Omega _{x}^{(1)}$, $x\in \mathfrak{L}$%
, be an arbitrary element of the Borel $\sigma $--algebra $\mathfrak{A}%
_{x}^{(1)}$ of the interval $[-1,1]$ w.r.t. the usual metric topology.
Define
\begin{equation*}
\mathfrak{A}_{[-1,1]^{\mathfrak{L}}}:=\bigotimes\limits_{x\in \mathfrak{L}}%
\mathfrak{A}_{x}^{(1)}\ ,
\end{equation*}
i.e., $\mathfrak{A}_{[-1,1]^{\mathfrak{L}}}$ is the $\sigma $--algebra
generated by the cylinder sets $\prod\nolimits_{x\in \mathfrak{L}}\Omega
_{x}^{(1)}$, where $\Omega _{x}^{(1)}=[-1,1]$ for all but finitely many $%
x\in \mathfrak{L}$. In the same way, let
\begin{equation*}
\mathfrak{A}_{\mathbb{D}^{\mathfrak{b}}}:=\bigotimes\limits_{\mathbf{x}\in
\mathfrak{b}}\mathfrak{A}_{\mathbf{x}}^{(2)}\ ,
\end{equation*}%
where $\mathfrak{A}_{\mathbf{x}}^{(2)}$, $\mathbf{x}\in \mathfrak{b}$, is
the Borel $\sigma $--algebra of the complex closed unit disc $\mathbb{D}$
w.r.t. the usual metric topology. Then
\begin{equation*}
\mathfrak{A}_{\Omega }:=\mathfrak{A}_{[-1,1]^{\mathfrak{L}}}\otimes
\mathfrak{A}_{\mathbb{D}^{\mathfrak{b}}}\ .
\end{equation*}

\item[$\mathfrak{a}_{\Omega }:$] The measure $\mathfrak{a}_{\Omega }$ is an
arbitrary \emph{ergodic} probability measure on the measurable space $%
(\Omega ,\mathfrak{A}_{\Omega })$: It is invariant under the action%
\begin{equation}
\left( \omega _{1},\omega _{2}\right) \longmapsto \chi _{x}^{(\Omega
)}\left( \omega _{1},\omega _{2}\right) :=\left( \chi _{x}^{(\mathfrak{L}%
)}\left( \omega _{1}\right) ,\chi _{x}^{(\mathfrak{b})}\left( \omega
_{2}\right) \right) \ ,\qquad x\in \mathbb{Z}^{d}\ ,
\label{translation omega}
\end{equation}%
of the group $(\mathbb{Z}^{d},+)$ of translations on $\Omega $ and, for any $%
\mathcal{X}\in \mathfrak{A}_{\Omega }$ such that $\chi _{x}^{(\Omega
)}\left( \mathcal{X}\right) =\mathcal{X}$ for all $x\in \mathbb{Z}^{d}$, one
has $\mathfrak{a}_{\Omega }(\mathcal{X})\in \{0,1\}$. Here, for any $\omega
=\left( \omega _{1},\omega _{2}\right) \in \Omega $, $x\in \mathbb{Z}^{d}$
and $y,y^{\prime }\in \mathfrak{L}$ with $|y-y^{\prime }|=1$,%
\begin{equation}
\chi _{x}^{(\mathfrak{L})}\left( \omega _{1}\right) \left( y\right) :=\omega
_{1}\left( y+x\right) \ ,\ \chi _{x}^{(\mathfrak{b})}\left( \omega
_{2}\right) \left( \{y,y^{\prime }\}\right) :=\omega _{2}\left(
\{y+x,y^{\prime }+x\}\right) \ .  \label{translation omegabis}
\end{equation}%
We denote by $\mathbb{E}[\ \cdot \ ]$ the expectation value associated with $%
\mathfrak{a}_{\Omega }$.
\end{itemize}

For any $\omega =\left( \omega _{1},\omega _{2}\right) \in \Omega $, $%
V_{\omega }\in \mathcal{B}(\ell ^{2}(\mathfrak{L}))$ is by definition the
self--adjoint multiplication operator with the function $\omega _{1}:%
\mathfrak{L}\rightarrow \lbrack -1,1]$. It represents a bounded static
potential. To all $\omega \in \Omega $ and strength $\vartheta \in \mathbb{R}%
_{0}^{+}$ of hopping disorder, we also associate another self--adjoint
operator $\Delta _{\omega ,\vartheta }\in \mathcal{B}(\ell ^{2}(\mathfrak{L}%
))$ describing the hoppings of a single particle in the lattice:
\begin{eqnarray}
\lbrack \Delta _{\omega ,\vartheta }(\psi )](x) &:=&2d\psi (x)-\sum_{j=1}^{d}\Big((1+\vartheta \overline{\omega _{2}(\{x,x-e_{j}\})})\ \psi (x-e_{j})
\notag \\
&&+\psi (x+e_{j})(1+\vartheta \omega _{2}(\{x,x+e_{j}\}))\Big)
\label{discrete laplacian}
\end{eqnarray}%
for any $x\in \mathfrak{L}$ and $\psi \in \ell ^{2}(\mathfrak{L})$, with $%
\{e_{k}\}_{k=1}^{d}$ being the canonical orthonormal basis of the Euclidian
space $\mathbb{R}^{d}$. In the case of vanishing hopping disorder $\vartheta
=0$ (up to a minus sign) $\Delta _{\omega ,0}$ is the usual $d$--dimensional
discrete Laplacian. Since the hopping amplitudes are complex--valued ($%
\omega _{2}$ takes values in $\mathbb{D}$), note additionally that random
electromagnetic potentials can be implemented in our model.

Then, for any realization $\omega \in \Omega $ of disorder and parameters $%
\vartheta ,\lambda \in \mathbb{R}_{0}^{+}$, the Hamiltonian of a single
quantum particle within a bounded static potential is the discrete Schr\"{o}%
dinger operator $(\Delta _{\omega ,\vartheta }+\lambda V_{\omega })$ acting
on the Hilbert space $\ell ^{2}(\mathfrak{L})$. The coupling constants $%
\vartheta ,\lambda \in \mathbb{R}_{0}^{+}$ represent the strength of
disorder of respectively the external static potential and hopping
amplitudes.

The time--dependent electromagnetic potential is defined by a compactly
supported time--depen%
\-%
dent vector potential%
\begin{equation*}
\mathbf{A}\in \mathbf{C}_{0}^{\infty }:=\underset{l\in \mathbb{R}^{+}}{%
\mathop{\displaystyle \bigcup}}C_{0}^{\infty }(\mathbb{R}\times \left[ -l,l%
\right] ^{d};({\mathbb{R}}^{d})^{\ast })\ ,
\end{equation*}%
where $({\mathbb{R}}^{d})^{\ast }$ is the set of one--forms\footnote{%
In a strict sense, one should take the dual space of the tangent spaces $T({%
\mathbb{R}}^{d})_{x}$, $x\in {\mathbb{R}}^{d}$.} on ${\mathbb{R}}^{d}$ that
take values in $\mathbb{R}$. The smoothness of $\mathbf{A}$ is not essential
in the proofs and is only assumed for simplicity.

\begin{bemerkung}
To simplify notation, we identify in the sequel $(\mathbb{R}^{d})^{\ast }$
with $\mathbb{R}^{d}$ via the canonical scalar product of $\mathbb{R}^{d}$.
\end{bemerkung}

We use the Weyl gauge (also named temporal gauge) for the electromagnetic
field and, as a consequence,%
\begin{equation}
E_{\mathbf{A}}(t,x):=-\partial _{t}\mathbf{A}(t,x)\ ,\quad t\in \mathbb{R},\
x\in \mathbb{R}^{d}\ ,  \label{V bar 0}
\end{equation}%
is the electric field associated with $\mathbf{A}$. We also define the
integrated electric field (or electric tension) along the oriented bond $%
\mathbf{x}:=(x^{(1)},x^{(2)})\in \mathfrak{L}^{2}$ at time $t\in \mathbb{R}$
by%
\begin{equation}
\mathbf{E}_{t}^{\mathbf{A}}\left( \mathbf{x}\right) :=\int\nolimits_{0}^{1}%
\left[ E_{\mathbf{A}}(t,\alpha x^{(2)}+(1-\alpha )x^{(1)})\right]
(x^{(2)}-x^{(1)})\mathrm{d}\alpha \ .  \label{V bar 0bis}
\end{equation}%
Since $\mathbf{A}$ is by assumption compactly supported, the corresponding
electric field satisfies the AC--condition%
\begin{equation}
\int\nolimits_{t_{0}}^{t}E_{\mathbf{A}}(s,x)\mathrm{d}s=0\ ,\quad x\in
\mathbb{R}^{d}\ ,  \label{zero mean field}
\end{equation}%
for sufficiently large times $t\geq t_{1}\geq t_{0}$. From (\ref{zero mean
field}),%
\begin{equation}
t_{1}:=\min \left\{ t\geq t_{0}:\quad \int\nolimits_{t_{0}}^{t^{\prime }}E_{%
\mathbf{A}}(s,x)\mathrm{d}s=0\quad \text{for all }x\in \mathbb{R}^{d}\text{
and }t^{\prime }\geq t\right\}  \label{zero mean field assumption}
\end{equation}%
is the time at which the electric field is turned off. In other words, we
consider \emph{cyclic} electromagnetic processes.

To simplify notation and without loss of generality (w.l.o.g.), fermions are
spinless and have negative charge. The cases of particles with spin and
positively charged particles can be treated by exactly the same methods.
Thus, using the (minimal) coupling of $\mathbf{A}\in \mathbf{C}_{0}^{\infty
} $ to the discrete Laplacian, the discrete magnetic Laplacian is (up to a
minus sign) the self--adjoint operator
\begin{equation*}
\Delta _{\omega ,\vartheta }^{(\mathbf{A})}\equiv \Delta _{\omega ,\vartheta
}^{(\mathbf{A}(t,\cdot ))}\in \mathcal{B}(\ell ^{2}(\mathfrak{L}))\ ,\qquad
t\in \mathbb{R}\ ,
\end{equation*}%
defined\footnote{%
Observe that the sign of the coupling between the electromagnetic potential $%
\mathbf{A}\in \mathbf{C}_{0}^{\infty }$ and the laplacian is wrong in \cite[%
Eq. (2.8)]{OhmI}.} by%
\begin{equation}
\langle \mathfrak{e}_{x},\Delta _{\omega ,\vartheta }^{(\mathbf{A})}%
\mathfrak{e}_{y}\rangle =\exp \left( i\int\nolimits_{0}^{1}\left[ \mathbf{A}%
(t,\alpha y+(1-\alpha )x)\right] (y-x)\mathrm{d}\alpha \right) \langle
\mathfrak{e}_{x},\Delta _{\omega ,\vartheta }\mathfrak{e}_{y}\rangle
\label{eq discrete lapla A}
\end{equation}%
for all $t\in \mathbb{R}$, $\omega \in \Omega $, $\vartheta \in \mathbb{R}%
_{0}^{+}$ and $x,y\in \mathfrak{L}$. Here, $\langle \cdot ,\cdot \rangle $
is the scalar product in $\ell ^{2}(\mathfrak{L})$ and $\left\{ \mathfrak{e}%
_{x}\right\} _{x\in \mathfrak{L}}$ is the canonical orthonormal basis $%
\mathfrak{e}_{x}(y)\equiv \delta _{x,y}$ of $\ell ^{2}(\mathfrak{L})$. In (%
\ref{eq discrete lapla A}), similar to (\ref{V bar 0bis}), $\alpha
y+(1-\alpha )x$ and $y-x$ are seen as vectors in ${\mathbb{R}}^{d}$. In
presence of an electromagnetic field associated to an arbitrary vector
potential $\mathbf{A}\in \mathbf{C}_{0}^{\infty }$, the one--particle
Hamiltonian $(\Delta _{\omega ,\vartheta }+\lambda V_{\omega })$ at fixed $%
\omega \in \Omega $ and $\vartheta ,\lambda \in \mathbb{R}_{0}^{+}$ is
replaced with the time--dependent one%
\begin{equation}
\Delta _{\omega ,\vartheta }^{(\mathbf{A})}+\lambda V_{\omega }\equiv \Delta
_{\omega ,\vartheta }^{(\mathbf{A}(t,\cdot ))}+\lambda V_{\omega }\ ,\qquad
t\in \mathbb{R}\ .  \label{hamiltonian libre total}
\end{equation}

\subsection{Banach Space of Short--Range Interactions\label{Section Banach
space interaction}}

Let $\mathcal{P}_{f}(\mathfrak{L})\subset 2^{\mathfrak{L}}$ be the set of
all finite subsets of $\mathfrak{L}$. For all $\Lambda \in \mathcal{P}_{f}(%
\mathfrak{L})$, $\mathcal{U}_{\Lambda }$ is the finite dimensional $C^{\ast
} $--algebra generated by $\mathbf{1}$ and generators $\{a_{x,\mathrm{s}%
}\}_{x\in \Lambda ,\mathrm{s}\in \mathrm{S}}$ satisfying the canonical
anti--commutation relations, $\mathrm{S}$ being some finite set of spins. As
just explained above, the spin dependence of $a_{x,\mathrm{s}}\equiv a_{x}$
is irrelevant in our proofs (up to trivial modifications) and, w.l.o.g., we
only consider spinless fermions, i.e., the case $\mathrm{S}=\{0\}$.

We denote by $\mathcal{U}$ the CAR $C^{\ast }$--algebra $\mathcal{U}$ of the
infinite system, that is, the inductive limit of the finite dimensional $%
C^{\ast }$--algebras $\{\mathcal{U}_{\Lambda }\}_{\Lambda \in \mathcal{P}%
_{f}(\mathfrak{L})}$. The $C^{\ast }$--algebra of all even elements of $%
\mathcal{U}$ is denoted by $\mathcal{U}^{+}$ and
\begin{equation*}
\mathcal{U}_{0}:=\underset{\Lambda \in \mathcal{P}_{f}(\mathfrak{L})}{%
\bigcup }\mathcal{U}_{\Lambda }\subset \mathcal{U}
\end{equation*}%
is the subset of local elements. See \cite[Section 2.2]{OhmV} for more
details. Finally, let $\{\chi _{x}\}_{x\in \mathfrak{L}}$ be the family of $%
\ast $--automor%
\-%
phisms of $\mathcal{U}$ uniquely defined by the conditions
\begin{equation}
\chi _{x}(a_{y})=a_{y+x}\ ,\qquad x,y\in \mathfrak{L}=\mathbb{Z}^{d}\ .
\label{definition trans U}
\end{equation}

An \emph{interaction} is a family $\Phi =\{\Phi _{\Lambda }\}_{\Lambda \in
\mathcal{P}_{f}(\mathfrak{L})}$ of even and self--adjoint local elements $%
\Phi _{\Lambda }=\Phi _{\Lambda }^{\ast }\in \mathcal{U}^{+}\cap \mathcal{U}%
_{\Lambda }$ with $\Phi _{\emptyset }=0$. We define Banach spaces $\mathcal{W%
}$ of short--range interactions by introducing norms that take into account
space decay of interactions. To this end, we use positive--valued and
non--increasing decay functions $\mathbf{F}:\mathbb{R}_{0}^{+}\rightarrow
\mathbb{R}^{+}$. Like in \cite{OhmV}, we impose the following conditions on $%
\mathbf{F}$:

\begin{itemize}
\item \emph{Summability on }$\mathfrak{L}$\emph{.}
\begin{equation}
\left\Vert \mathbf{F}\right\Vert _{1,\mathfrak{L}}:=\underset{y\in \mathfrak{%
L}}{\sup }\sum_{x\in \mathfrak{L}}\mathbf{F}\left( \left\vert x-y\right\vert
\right) =\sum_{x\in \mathfrak{L}}\mathbf{F}\left( \left\vert x\right\vert
\right) <\infty \ .  \label{(3.1) NS}
\end{equation}

\item \emph{Bounded convolution constant.}
\begin{equation}
\mathbf{D}:=\underset{x,y\in \mathfrak{L}}{\sup }\sum_{z\in \mathfrak{L}}%
\frac{\mathbf{F}\left( \left\vert x-z\right\vert \right) \mathbf{F}\left(
\left\vert z-y\right\vert \right) }{\mathbf{F}\left( \left\vert
x-y\right\vert \right) }<\infty \ .  \label{(3.2) NS}
\end{equation}
\end{itemize}

\noindent Examples of functions $\mathbf{F}:\mathbb{R}_{0}^{+}\rightarrow
\mathbb{R}^{+}$ satisfying (\ref{(3.1) NS})--(\ref{(3.2) NS}) are given by
\begin{equation}
\mathbf{F}\left( r\right) =\left( 1+r\right) ^{-(d+\epsilon )}\quad \text{and%
}\quad \mathbf{F}\left( r\right) =\mathrm{e}^{-\varsigma
r}(1+r)^{-(d+\epsilon )}  \label{example polynomial}
\end{equation}%
for any $\varsigma ,\epsilon \in \mathbb{R}^{+}$. In all the paper, (\ref%
{(3.1) NS})--(\ref{(3.2) NS}) are assumed to be satisfied.

Then, the norm of any interaction $\Phi $ is defined by
\begin{equation}
\left\Vert \Phi \right\Vert _{\mathcal{W}}:=\underset{x,y\in \mathfrak{L}}{%
\sup }\sum\limits_{\Lambda \in \mathcal{P}_{f}(\mathfrak{L}),\;\Lambda
\supset \{x,y\}}\frac{\Vert \Phi _{\Lambda }\Vert _{\mathcal{U}}}{\mathbf{F}%
\left( \left\vert x-y\right\vert \right) }\ .  \label{iteration0}
\end{equation}%
The real separable Banach space $(\mathcal{W},\left\Vert \cdot \right\Vert _{%
\mathcal{W}})$ is the space of interactions $\Phi $ with $\left\Vert \Phi
\right\Vert _{\mathcal{W}}<\infty $. Elements $\Phi \in \mathcal{W}$ are
named \emph{short--range} interactions.

\subsection{Interacting Fermion Systems in Disordered Media\label{Section
dynamics}}

To any $\omega \in \Omega $ and strength $\vartheta \in \mathbb{R}_{0}^{+}$
of hopping disorder, we associate a short--range interaction $\Psi ^{(\omega
,\vartheta )}\in \mathcal{W}$ defined as follows: Fix an interparticle (IP)
interaction $\Psi ^{\mathrm{IP}}\in \mathcal{W}$. Then,
\begin{equation*}
\Psi _{\Lambda }^{(\omega ,\vartheta )}:=\langle \mathfrak{e}_{x},\Delta
_{\omega ,\vartheta }\mathfrak{e}_{y}\rangle a_{x}^{\ast }a_{y}+\langle
\mathfrak{e}_{y},\Delta _{\omega ,\vartheta }\mathfrak{e}_{x}\rangle
a_{y}^{\ast }a_{x}+\Psi _{\{x,y\}}^{\mathrm{IP}}\in \mathcal{U}^{+}\cap
\mathcal{U}_{\Lambda }
\end{equation*}%
whenever $\Lambda =\left\{ x,y\right\} $ for $x,y\in \mathfrak{L}$, and $%
\Psi _{\Lambda }^{(\omega ,\vartheta )}:=\Psi _{\Lambda }^{\mathrm{IP}}$
otherwise.

Let
\begin{equation}
\Lambda _{l}:=\{(x_{1},\ldots ,x_{d})\in \mathfrak{L}\,:\,|x_{1}|,\ldots
,|x_{d}|\leq l\}\ ,\qquad l\in \mathbb{R}_{0}^{+}\ .  \label{eq:def lambda n}
\end{equation}%
We then assume two additional properties of $\Psi ^{\mathrm{IP}}$:

\begin{itemize}
\item \emph{Translation invariance}. For all $x\in \mathfrak{L}$,%
\begin{equation}
\Psi _{\Lambda +x}^{\mathrm{IP}}=\chi _{x}\left( \Psi _{\Lambda }^{\mathrm{IP%
}}\right) \ ,\qquad \Lambda \in \mathcal{P}_{f}(\mathfrak{L})\ .
\label{static potential0}
\end{equation}

\item \emph{Polynomial decay.} There is a constant $\varsigma >2d$ and, for
all $m\in \mathbb{N}_{0}$, an absolutely summable sequence $\{\mathbf{u}%
_{n,m}\}_{n\in \mathbb{N}}\in \ell ^{1}(\mathbb{N})$ such that, for all $%
n\in \mathbb{N}$ with $n>m$,%
\begin{equation}
|\Lambda _{n}\backslash \Lambda _{n-1}|\sum_{z\in \Lambda _{m}}\max_{y\in
\Lambda _{n}\backslash \Lambda _{n-1}}\mathbf{F}\left( \left\vert
z-y\right\vert \right) \leq \frac{\mathbf{u}_{n,m}}{\left( 1+n\right)
^{\varsigma }}\text{ }.  \label{(3.3) NS}
\end{equation}
\end{itemize}

\noindent Examples of functions $\mathbf{F}:\mathbb{R}_{0}^{+}\rightarrow
\mathbb{R}^{+}$ satisfying (\ref{(3.1) NS})--(\ref{(3.2) NS}) and (\ref%
{(3.3) NS}) are obviously given by (\ref{example polynomial}), for
sufficiently large $\epsilon \in \mathbb{R}^{+}$ in the polynomial case.

Conditions (\ref{(3.1) NS})--(\ref{(3.2) NS}) and \cite[Theorem 2.2]{OhmV}
ensure the existence of a (non--autonomous) infinite volume dynamics $\{\tau
_{t,s}^{(\omega ,\vartheta ,\lambda ,\mathbf{A})}\}_{s,t\in {\mathbb{R}}}$,
in presence of electromagnetic fields and static potentials (cf. (\ref%
{hamiltonian libre total})). Indeed, any realization $\omega \in \Omega $,
disorder strengths $\vartheta ,\lambda \in \mathbb{R}_{0}^{+}$,
interparticle interaction $\Psi ^{\mathrm{IP}}\in \mathcal{W}$ and
electromagnetic potential $\mathbf{A}\in \mathbf{C}_{0}^{\infty }$ naturally
define a family $\{\delta _{t}^{(\omega ,\vartheta ,\lambda ,\mathbf{A}%
)}\}_{t\in {\mathbb{R}}}$ of derivations on the subset $\mathcal{U}_{0}$ of
local elements of $\mathcal{U}$. Then, $\{\tau _{t,s}^{(\omega ,\vartheta
,\lambda ,\mathbf{A})}\}_{s,t\in {\mathbb{R}}}$ is the unique strongly
continuous two--para%
\-%
meter family of $\ast $--automor%
\-%
phisms of $\mathcal{U}$ satisfying, in the strong sense on the dense domain $%
\mathcal{U}_{0}\subset \mathcal{U}$,%
\begin{equation*}
\forall s,t\in {\mathbb{R}}:\qquad \partial _{t}\tau _{t,s}^{(\omega
,\vartheta ,\lambda ,\mathbf{A})}=\tau _{t,s}^{(\omega ,\vartheta ,\lambda ,%
\mathbf{A})}\circ \delta _{t}^{(\omega ,\vartheta ,\lambda ,\mathbf{A})}\
,\qquad \tau _{s,s}^{(\omega ,\vartheta ,\lambda ,\mathbf{A})}=\mathbf{1}_{%
\mathcal{U}}\ .
\end{equation*}%
See \cite[Section 2.5]{OhmV} for more details. At $\mathbf{A}=0$, the
(unperturbed) dynamics is autonomous and we denote the corresponding group
of $\ast $--automor%
\-%
phisms by
\begin{equation}
\tau ^{(\omega ,\vartheta ,\lambda )}:=\{\tau _{t}^{(\omega ,\vartheta
,\lambda )}\}_{t\in {\mathbb{R}}}\ .  \label{tho}
\end{equation}

Then, as explained in Section \ref{Section - 2 law}, thermal equilibrium
states are defined to be completely passive states, see Definition \ref{def
second law2}. This definition is based on the 2nd law of thermodynamics. By
Theorem \ref{main 1}, they are $(\tau ^{(\omega ,\vartheta ,\lambda )},\beta
)$--KMS states for some inverse temperature, or time scale, $\beta \in \left[
0,\infty \right] $. For simplicity, we exclude the boundary cases $\beta
=0,+\infty $. As discussed in \cite[Section 2.6]{OhmV}, the set of $(\tau
^{(\omega ,\vartheta ,\lambda )},\beta )$--KMS states is non--empty for any $%
\beta \in \mathbb{R}^{+}$, $\omega \in \Omega $ and $\vartheta ,\lambda \in
\mathbb{R}_{0}^{+}$. Here, $\varrho ^{(\beta ,\omega ,\vartheta ,\lambda )}$
denotes one element of this set.

For any $\beta \in \mathbb{R}^{+}$ and $\vartheta ,\lambda \in \mathbb{R}%
_{0}^{+}$, we impose two natural conditions on the map%
\begin{equation}
\omega \mapsto \varrho ^{(\beta ,\omega ,\vartheta ,\lambda )}  \label{map}
\end{equation}%
from the set $\Omega $ to the dual space $\mathcal{U}^{\ast }$:

\begin{itemize}
\item \emph{Translation invariance.} Recall that $\{\chi _{x}\}_{x\in
\mathfrak{L}}$ is the family of $\ast $--automor%
\-%
phisms of $\mathcal{U}$ uniquely defined by (\ref{definition trans U}). It
implements the action of the group $(\mathbb{Z}^{d},+)$ of lattice
translations on the CAR $C^{\ast }$--algebra $\mathcal{U}$. On the set $%
\Omega $ this action is represented by the family $\{\chi _{x}^{(\Omega
)}\}_{x\in \mathfrak{L}}$, see (\ref{translation omega})--(\ref{translation
omegabis}). Then, we assume that
\begin{equation}
\varrho ^{(\beta ,\chi _{x}^{(\Omega )}(\omega ),\vartheta ,\lambda
)}=\varrho ^{(\beta ,\omega ,\vartheta ,\lambda )}\circ \chi _{x}\ ,\qquad
x\in \mathfrak{L}=\mathbb{Z}^{d}\ .  \label{translation invariant}
\end{equation}

\item \emph{Measurability.} Thermal equilibrium states are supposed to be
random variables. Hence, for any $\beta \in \mathbb{R}^{+}$ and $\vartheta
,\lambda \in \mathbb{R}_{0}^{+}$, we assume that the map (\ref{map}) is
measurable w.r.t. to the $\sigma $--algebra $\mathfrak{A}_{\Omega }$ on $%
\Omega $ and the Borel $\sigma $--algebra $\mathfrak{A}_{\mathcal{U}^{\ast
}} $ of $\mathcal{U}^{\ast }$ generated by the weak$^{\ast }$--topology.
Observe that a similar assumption is also necessary for classical disordered
systems at equilibrium, see, e.g., \cite{bovier}.
\end{itemize}

\noindent These conditions yield the following definition:

\begin{definition}[Random invariant states]
\label{def second law2 copy(1)}\mbox{ }\newline
Let $\omega \mapsto \varrho ^{(\omega )}$ be a map from $\Omega $ to the set
of states on $\mathcal{U}$. We say that this map is a random invariant state
when it is measurable w.r.t. to $\mathfrak{A}_{\Omega }$ and $\mathfrak{A}_{%
\mathcal{U}^{\ast }}$ and translation invariant in the above sense.
\end{definition}

\noindent The map (\ref{map}) is thus a random invariant state. This implies
in particular that, for any $\beta \in \mathbb{R}^{+}$ and $\vartheta
,\lambda \in \mathbb{R}_{0}^{+}$, the averaged state $\bar{\varrho}^{(\beta
,\lambda )}\in \mathcal{U}^{\ast }$ defined by%
\begin{equation}
\bar{\varrho}^{(\beta ,\vartheta ,\lambda )}\left( B\right) :=\mathbb{E}%
\left[ \varrho ^{(\beta ,\omega ,\vartheta ,\lambda )}\left( B\right) \right]
\ ,\qquad B\in \mathcal{U}\ ,  \label{translation invariantbis0}
\end{equation}%
is \emph{translation invariant}, i.e.,
\begin{equation}
\bar{\varrho}^{(\beta ,\vartheta ,\lambda )}=\bar{\varrho}^{(\beta
,\vartheta ,\lambda )}\circ \chi _{x}\ ,\qquad x\in \mathfrak{L}=\mathbb{Z}%
^{d}\ .  \label{translation invariantbis}
\end{equation}%
Recall indeed that $\mathfrak{a}_{\Omega }$ is also a translation invariant
probability measure. [It is even ergodic.]

The existence of such random invariant equilibrium states is not completely
clear in general, similar to the classical case. If the $(\tau ^{(\omega
,\vartheta ,\lambda )},\beta )$--KMS state is unique and (\ref{static
potential0}) is satisfied, then it turns out that the (unique) map (\ref{map}%
) is a random invariant state. Indeed, in this case, the map (\ref{map}) is
even continuous w.r.t. the pointwise convergence in $\Omega $ and the weak$%
^{\ast }$--topology of $\mathcal{U}^{\ast }$. This can be proven by using
\cite[Proposition 5.3.23.]{BratteliRobinson}. Uniqueness of KMS states
appears for instance when either $\Psi ^{\mathrm{IP}},\vartheta =0$, or at
small $\beta \in \mathbb{R}^{+}$, or in dimension $d=1$. Moreover, by using
methods of constructive quantum field theory, one can also verify the
existence of such random invariant thermal equilibrium states at arbitrary $%
\beta \in \mathbb{R}^{+}$ and dimension $d\in \mathbb{N}$ if the
interparticle interaction $\left\Vert \Psi ^{\mathrm{IP}}\right\Vert _{%
\mathcal{W}}$ is small enough and (\ref{static potential0}) holds.

Now, in presence of electromagnetic fields, the time evolution of the state
of the system equals%
\begin{equation}
\rho _{t}^{(\beta ,\omega ,\vartheta ,\lambda ,\mathbf{A})}:=\left\{
\begin{array}{lll}
\varrho ^{(\beta ,\omega ,\vartheta ,\lambda )} & , & \qquad t\leq t_{0}\ ,
\\
\varrho ^{(\beta ,\omega ,\vartheta ,\lambda )}\circ \tau
_{t,t_{0}}^{(\omega ,\vartheta ,\lambda ,\mathbf{A})} & , & \qquad t\geq
t_{0}\ ,%
\end{array}%
\right.  \label{time dependent state}
\end{equation}%
for any $\beta \in \mathbb{R}^{+}$, $\omega \in \Omega $, $\vartheta
,\lambda \in \mathbb{R}_{0}^{+}$ and $\mathbf{A}\in \mathbf{C}_{0}^{\infty }$%
. Recall here that $\mathbf{A}(t,x)=0$ for all $t\leq t_{0}$.

\begin{bemerkung}[Time--dependent states as stochastic processes]
\label{remark stochastic}\mbox{
}\newline
Under the above assumptions, by using Lieb--Robinson bounds as in \cite[%
Lemma 4.3]{brupedraLR}, it is possible to show that the family $\{\omega
\mapsto \rho _{t}^{(\beta ,\omega ,\vartheta ,\lambda ,\mathbf{A})}\}_{t\in
\mathbb{R}}$ defines a stochastic process with values in $\mathcal{U}^{\ast
} $. More precisely, for any $t\in \mathbb{R}$, the map $\omega \mapsto \rho
_{t}^{(\beta ,\omega ,\vartheta ,\lambda ,\mathbf{A})}$ is measurable w.r.t.
to $\mathfrak{A}_{\Omega }$ and $\mathfrak{A}_{\mathcal{U}^{\ast }}$. This
fact is not essential in the sequel.
\end{bemerkung}

\section{Macroscopic Ohm's Law and Green--Kubo Relations\label{Sect
Classical Ohm's Law copy(1)}}

\subsection{Macroscopic Charge Transport Coefficients}

Fix $\omega \in \Omega $, $\vartheta \in \mathbb{R}_{0}^{+}$, $\mathbf{A}\in
\mathbf{C}_{0}^{\infty }$ and time $t\in \mathbb{R}$. For any oriented bond $%
\mathbf{x}:=(x^{(1)},x^{(2)})\in \mathfrak{L}^{2}$, we define the
paramagnetic\ and diamagnetic current observables $I_{\mathbf{x}}^{(\omega
,\vartheta )}$ and $\mathrm{I}_{\mathbf{x}}^{(\omega ,\vartheta ,\mathbf{A}%
)} $ respectively by%
\begin{equation}
I_{\mathbf{x}}^{(\omega ,\vartheta )}:=-2\mathrm{Im}\left( \langle \mathfrak{%
e}_{x^{(1)}},\Delta _{\omega ,\vartheta }\mathfrak{e}_{x^{(2)}}\rangle
a_{x^{(1)}}^{\ast }a_{x^{(2)}}\right) \ ,\qquad \mathbf{x}%
:=(x^{(1)},x^{(2)})\in \mathfrak{L}^{2}\ .  \label{current observable}
\end{equation}%
and%
\begin{eqnarray}
\mathrm{I}_{\mathbf{x}}^{(\omega ,\vartheta ,\mathbf{A})}\equiv \mathrm{I}_{%
\mathbf{x}}^{(\omega ,\vartheta ,\mathbf{A}(t,\cdot ))} &:=&-2\mathrm{Im}%
\Big(\Big(\mathrm{e}^{-i\int\nolimits_{0}^{1}[\mathbf{A}(t,\alpha
x^{(2)}+(1-\alpha )x^{(1)})](x^{(2)}-x^{(1)})\mathrm{d}\alpha }-1\Big)
\notag \\
&&\qquad \qquad \qquad \times \langle \mathfrak{e}_{x^{(1)}},\Delta _{\omega
,\vartheta }\mathfrak{e}_{x^{(2)}}\rangle a_{x^{(1)}}^{\ast }a_{x^{(2)}}\Big)%
\ .  \label{current observable new}
\end{eqnarray}

If the interparticle interaction $\Psi ^{\mathrm{IP}}$ is locally gauge
invariant, that is, for all $x\in \mathfrak{L}$,%
\begin{equation*}
\sum\limits_{\Lambda \in \mathcal{P}_{f}(\mathfrak{L})}\left[ \Psi _{\Lambda
}^{\mathrm{IP}},a_{x}^{\ast }a_{x}\right] =0\ ,
\end{equation*}%
then, in absence of external electromagnetic potentials, $I_{\mathbf{x}%
}^{(\omega ,\vartheta )}$ is the observable related to the flow of particles
from the lattice site $x^{(1)}$ to the lattice site $x^{(2)}$ or the current
from $x^{(2)}$ to $x^{(1)}$. $\mathrm{I}_{\mathbf{x}}^{(\omega ,\vartheta ,%
\mathbf{A})}$ corresponds to a correction, engendered by the presence of an
external electromagnetic potential, to the current $I_{\mathbf{x}}^{(\omega
,\vartheta )}$. See \cite[Section 3.2]{OhmV}. Let%
\begin{equation}
P_{\mathbf{x}}^{(\omega ,\vartheta )}:=2\func{Re}\left( \langle \mathfrak{e}%
_{x^{(1)}},\Delta _{\omega ,\vartheta }\mathfrak{e}_{x^{(2)}}\rangle
a_{x^{(1)}}^{\ast }a_{x^{(2)}}\right) \ ,\qquad \mathbf{x}%
:=(x^{(1)},x^{(2)})\in \mathfrak{L}^{2}\ .  \label{R x}
\end{equation}%
Now, for any $\beta \in \mathbb{R}^{+}$, $\omega \in \Omega $ and $\vartheta
,\lambda \in \mathbb{R}_{0}^{+}$, we define two important functions
associated with these observables:

\begin{itemize}
\item[(p)] The paramagnetic transport coefficient $\sigma _{\mathrm{p}%
}^{(\omega )}\equiv \sigma _{\mathrm{p}}^{(\beta ,\omega ,\vartheta ,\lambda
)}$ is defined, for any $\mathbf{x},\mathbf{y}\in \mathfrak{L}^{2}$ and $%
t\in \mathbb{R}$, by%
\begin{equation}
\sigma _{\mathrm{p}}^{(\omega )}\left( \mathbf{x},\mathbf{y},t\right)
:=\int\nolimits_{0}^{t}\varrho ^{(\beta ,\omega ,\vartheta ,\lambda )}\left(
i[I_{\mathbf{y}}^{(\omega ,\vartheta )},\tau _{s}^{(\omega ,\vartheta
,\lambda )}(I_{\mathbf{x}}^{(\omega ,\vartheta )})]\right) \mathrm{d}s\ .
\label{backwards -1bis}
\end{equation}

\item[(d)] The diamagnetic transport coefficient $\sigma _{\mathrm{d}%
}^{(\omega )}\equiv \sigma _{\mathrm{d}}^{(\beta ,\omega ,\vartheta ,\lambda
)}$ is defined by%
\begin{equation}
\sigma _{\mathrm{d}}^{(\omega )}\left( \mathbf{x}\right) :=\varrho ^{(\beta
,\omega ,\vartheta ,\lambda )}\left( P_{\mathbf{x}}^{(\omega ,\vartheta
)}\right) \ ,\qquad \mathbf{x}\in \mathfrak{L}^{2}\ .
\label{backwards -1bispara}
\end{equation}
\end{itemize}

\noindent For boxes $\Lambda _{l}$ (\ref{eq:def lambda n}), we then define
the space--averaged paramagnetic transport coefficient%
\begin{equation*}
t\mapsto \Xi _{\mathrm{p},l}^{(\omega )}\left( t\right) \equiv \Xi _{\mathrm{%
p},l}^{(\beta ,\omega ,\vartheta ,\lambda )}\left( t\right) \in \mathcal{B}(%
\mathbb{R}^{d})
\end{equation*}%
w.r.t. the canonical orthonormal basis $\{e_{k}\}_{k=1}^{d}$ of the
Euclidian space $\mathbb{R}^{d}$ by%
\begin{equation}
\left\{ \Xi _{\mathrm{p},l}^{(\omega )}\left( t\right) \right\} _{k,q}:=%
\frac{1}{\left\vert \Lambda _{l}\right\vert }\underset{x,y\in \Lambda _{l}}{%
\sum }\sigma _{\mathrm{p}}^{(\omega )}\left( x+e_{q},x,y+e_{k},y,t\right)
\label{average microscopic AC--conductivity}
\end{equation}%
for any $l,\beta \in \mathbb{R}^{+}$, $\omega \in \Omega $, $\vartheta
,\lambda \in \mathbb{R}_{0}^{+}$, $k,q\in \{1,\ldots ,d\}$ and $t\in \mathbb{%
R}$. See \cite[Theorem 3.4, Corollary 3.5]{OhmV} for details on the
properties of $\Xi _{\mathrm{p},l}^{(\omega )}$. The space--averaged
diamagnetic transport coefficient
\begin{equation*}
\Xi _{\mathrm{d},l}^{(\omega )}\equiv \Xi _{\mathrm{d},l}^{(\beta ,\omega
,\vartheta ,\lambda )}\in \mathcal{B}(\mathbb{R}^{d})
\end{equation*}%
corresponds (w.r.t. $\{e_{k}\}_{k=1}^{d}$) to the diagonal matrix defined by%
\begin{equation}
\left\{ \Xi _{\mathrm{d},l}^{(\omega )}\right\} _{k,q}:=\frac{\delta _{k,q}}{%
\left\vert \Lambda _{l}\right\vert }\underset{x\in \Lambda _{l}}{\sum }%
\sigma _{\mathrm{d}}^{(\omega )}\left( x+e_{k},x\right) \in \left[ -2\left(
\vartheta +1\right) ,2\left( \vartheta +1\right) \right] \ .
\label{average microscopic AC--conductivity dia}
\end{equation}%
Both random coefficients turn out to be the paramagnetic and diamagnetic
(in--phase) conductivities.

We define the deterministic paramagnetic transport coefficient
\begin{equation*}
t\mapsto \mathbf{\Xi }_{\mathrm{p}}\left( t\right) \equiv \mathbf{\Xi }_{%
\mathrm{p}}^{(\beta ,\vartheta ,\lambda )}\left( t\right) \in \mathcal{B}(%
\mathbb{R}^{d})
\end{equation*}%
by%
\begin{equation}
\mathbf{\Xi }_{\mathrm{p}}\left( t\right) :=\underset{l\rightarrow \infty }{%
\lim }\mathbb{E}\left[ \Xi _{\mathrm{p},l}^{(\omega )}\left( t\right) \right]
\label{paramagnetic transport coefficient macro}
\end{equation}%
for any $\beta \in \mathbb{R}^{+}$, $\vartheta ,\lambda \in \mathbb{R}%
_{0}^{+}$, $k,q\in \{1,\ldots ,d\}$ and $t\in \mathbb{R}$. It is
well--defined, by Theorem \ref{Theorem AC conductivity measure copy(1)}.
Furthermore, the convergence is uniform for times $t$ in compact sets.
Analogously, we also introduce the deterministic diamagnetic transport
coefficient
\begin{equation*}
\mathbf{\Xi }_{\mathrm{d}}\equiv \mathbf{\Xi }_{\mathrm{d}}^{(\beta
,\vartheta ,\lambda )}\in \mathcal{B}(\mathbb{R}^{d})
\end{equation*}%
defined, for any $\beta \in \mathbb{R}^{+}$ and $\vartheta ,\lambda \in
\mathbb{R}_{0}^{+}$, by
\begin{equation}
\mathbf{\Xi }_{\mathrm{d}}:=\underset{l\rightarrow \infty }{\lim }\mathbb{E}%
\left[ \Xi _{\mathrm{d},l}^{(\omega )}\right] \ .  \label{def sigma_d}
\end{equation}%
Indeed, since the map (\ref{map}) is a random invariant state and $%
\mathfrak{a}_{\Omega }$ is an ergodic measure, we have, for all $l\in
\mathbb{R}^{+}$,
\begin{equation*}
\mathbf{\Xi }_{\mathrm{d}}=\mathbb{E}\left[ \Xi _{\mathrm{d},l}^{(\omega )}%
\right] \,.
\end{equation*}%
Clearly, $\{\mathbf{\Xi }_{\mathrm{d}}\}_{k,k}\in \lbrack -2(\vartheta
+1),2(\vartheta +1)]$ for any $k\in \{1,\ldots ,d\}$.

By using the Akcoglu--Krengel ergodic theorem we show that the limits $%
l\rightarrow \infty $ of $\Xi _{\mathrm{p},l}^{(\omega )}$ and $\Xi _{%
\mathrm{d},l}^{(\omega )}$ converge almost surely to $\mathbf{\Xi }_{\mathrm{%
p}}$ and $\mathbf{\Xi }_{\mathrm{d}}$, respectively.

\begin{satz}[Macroscopic charge transport coefficients]
\label{thm charged transport coefficient}\mbox{
}\newline
Assume (\ref{(3.1) NS})--(\ref{(3.2) NS}), (\ref{static potential0}) and
that the map (\ref{map}) is a random invariant state (see Definition \ref%
{def second law2 copy(1)}). Let $\beta \in \mathbb{R}^{+}$ and $\vartheta
,\lambda \in \mathbb{R}_{0}^{+}$. Then, there is a measurable subset $\tilde{%
\Omega}\equiv \tilde{\Omega}^{(\beta ,\vartheta ,\lambda )}\subset \Omega $
of full measure (that is, $\tilde{\Omega}\in \mathfrak{A}_{\Omega }$ and $%
\mathfrak{a}_{\Omega }(\tilde{\Omega})=1$) such that, for any $\omega \in
\tilde{\Omega}$, one has:\newline
\emph{(p)} Paramagnetic transport coefficient: For all $t\in \mathbb{R}$,%
\begin{equation*}
\mathbf{\Xi }_{\mathrm{p}}\left( t\right) =\underset{l\rightarrow \infty }{%
\lim }\Xi _{\mathrm{p},l}^{(\omega )}\left( t\right) \ .
\end{equation*}%
The above limit is uniform for times $t$ on compact sets.\newline
\emph{(d)} Diamagnetic transport coefficient:
\begin{equation}
\mathbf{\Xi }_{\mathrm{d}}=\underset{l\rightarrow \infty }{\lim }\Xi _{%
\mathrm{d},l}^{(\omega )}\ .  \notag
\end{equation}
\end{satz}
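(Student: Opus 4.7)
The strategy is to reduce both assertions to the Akcoglu--Krengel (multidimensional Birkhoff) ergodic theorem applied to the measure--preserving $\mathbb{Z}^{d}$--action $\{\chi_{x}^{(\Omega)}\}_{x\in \mathfrak{L}}$ on $(\Omega,\mathfrak{A}_{\Omega},\mathfrak{a}_{\Omega})$. The key identity is that, under the translation invariance assumption (\ref{translation invariant}) for the map (\ref{map}) together with the obvious covariance relations $P_{\mathbf{x}+z}^{(\chi_{z}^{(\Omega)}(\omega),\vartheta)}=\chi_{z}(P_{\mathbf{x}}^{(\omega,\vartheta)})$ and $I_{\mathbf{x}+z}^{(\chi_{z}^{(\Omega)}(\omega),\vartheta)}=\chi_{z}(I_{\mathbf{x}}^{(\omega,\vartheta)})$ (both of which follow directly from the definitions and from (\ref{translation omegabis})), the integrands in (\ref{backwards -1bis})--(\ref{backwards -1bispara}) satisfy
\begin{equation*}
\sigma _{\mathrm{d}}^{(\omega )}\left( x+e_{k},x\right) =\sigma _{\mathrm{d}
}^{(\chi _{x}^{(\Omega )}(\omega ))}\left( e_{k},0\right) ,\qquad \sigma _{
\mathrm{p}}^{(\omega )}\left( x+e_{q},x,y+e_{k},y,t\right) =\sigma _{\mathrm{
p}}^{(\chi _{x}^{(\Omega )}(\omega ))}\left( e_{q},0,y-x+e_{k},y-x,t\right) .
\end{equation*}

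For part (d) this is immediate: $\{\Xi _{\mathrm{d},l}^{(\omega )}\}_{k,k}$ equals the $\mathbb{Z}^{d}$--Birkhoff average over $\Lambda _{l}$ of the bounded measurable function $f_{k}(\omega ):=\sigma _{\mathrm{d}}^{(\omega )}(e_{k},0)$, whose expectation is precisely $\{\mathbf{\Xi }_{\mathrm{d}}\}_{k,k}$. The Akcoglu--Krengel theorem applied to this $L^{\infty }$--function and the ergodic action yields almost sure convergence; intersecting the resulting full--measure sets over the finitely many values of $k$ gives $\tilde{\Omega}$.

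For part (p) the reduction requires an additional truncation step, because the relevant sum is double. Writing $z:=y-x$ and $g_{z,t}(\omega ):=\sigma _{\mathrm{p}}^{(\omega )}(e_{q},0,z+e_{k},z,t)$, one has
\begin{equation*}
\left\{ \Xi _{\mathrm{p},l}^{(\omega )}(t)\right\} _{k,q}=\sum_{z\in \mathfrak{L}
}\Big(\frac{1}{|\Lambda _{l}|}\sum_{\substack{ x\in \Lambda _{l} \\ x+z\in
\Lambda _{l}}}g_{z,t}(\chi _{x}^{(\Omega )}(\omega ))\Big).
\end{equation*}
The Lieb--Robinson bounds for multi--commutators from \cite{brupedraLR}, as used in \cite{OhmV} and invoked in Theorem \ref{Theorem AC conductivity measure copy(1)}, provide an $\omega $--uniform, $\mathbf{F}$--summable bound $|g_{z,t}(\omega )|\leq C(t)\,\mathbf{F}(|z|)$ with $C(t)$ locally bounded in $t$. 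Given $\varepsilon >0$ choose $R$ so that $\sum _{|z|>R}\mathbf{F}(|z|)<\varepsilon $. For each fixed $z$ with $|z|\le R$ the Akcoglu--Krengel theorem applied to the bounded function $g_{z,t}$ yields almost sure convergence of the inner Birkhoff average to $\mathbb{E}[g_{z,t}]$ (and the boundary layer $|x-z|>l-R$ is negligible in density). Summing the finitely many $|z|\leq R$ contributions (on the intersection of the corresponding full measure sets, which we include in $\tilde{\Omega}$) and controlling the tail by the summable bound gives pointwise convergence of $\{\Xi _{\mathrm{p},l}^{(\omega )}(t)\}_{k,q}$ to $\sum_{z}\mathbb{E}[g_{z,t}]$, which necessarily coincides with $\{\mathbf{\Xi }_{\mathrm{p}}(t)\}_{k,q}$ defined in (\ref{paramagnetic transport coefficient macro}).

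Finally, uniformity in $t$ on compact sets comes for free: differentiating (\ref{backwards -1bis}) in $t$ gives the single--commutator integrand, whose modulus admits an $\omega $-- and $l$--uniform bound from the same Lieb--Robinson estimates, so the family $\{t\mapsto \Xi _{\mathrm{p},l}^{(\omega )}(t)\}_{l\in \mathbb{R}^{+}}$ is equi--Lipschitz on compact intervals. Combined with pointwise a.s.\ convergence at each rational $t$ (of which there are countably many, so the exceptional set remains of measure zero), the Arzel\`a--Ascoli argument upgrades this to uniform convergence on compacts. The main technical obstacle is packaging the Lieb--Robinson decay in a form uniform in $\omega $ and in the realization of hopping disorder, but this is exactly what the framework of \cite{OhmV,brupedraLR} has been built to deliver; modulo that input, the proof is routine.
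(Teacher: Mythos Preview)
Your proposal is correct and follows essentially the same route as the paper: apply the Akcoglu--Krengel ergodic theorem to a bounded additive process built from the translation covariance of $\sigma_{\mathrm{p}}^{(\omega)}$ and $\sigma_{\mathrm{d}}^{(\omega)}$, control all tails via the Lieb--Robinson decay (\ref{Lieb--Robinson bounds simplifiedbis}), and upgrade pointwise convergence at rational times to uniform convergence on compacta via equicontinuity.

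The only organisational difference is the packaging of the paramagnetic double sum. The paper (see the proof of Lemma~\ref{lemma conductivty3} and Equation~(\ref{conductivity000})) first absorbs the full $y$--sum into the single--site function $\mathfrak{F}_{t,k,q}^{(\omega)}(\{x\}):=\sum_{y\in\mathfrak{L}}\sigma_{\mathrm{p}}^{(\omega)}(y,y-e_q,x,x-e_k,t)$, which is $\omega$--uniformly bounded by (\ref{eq ohm71}), and then applies the ergodic theorem \emph{once}; the discrepancy between $\sum_{y\in\mathfrak{L}}$ and $\sum_{y\in\Lambda_l}$ is handled afterwards by a Lemma~\ref{lemma conductivty1}--type boundary estimate. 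Your version instead keeps the shift variable $z=y-x$ explicit, truncates at radius $R$, and applies the ergodic theorem to each $g_{z,t}$ separately. Both work; the paper's variant avoids the $\varepsilon$--$R$ bookkeeping but needs the separate boundary correction, while yours makes the summability in $z$ more transparent. One small imprecision: the bound $|g_{z,t}(\omega)|\le C(t)\mathbf{F}(|z|)$ and the equi--Lipschitz estimate require only the \emph{ordinary} Lieb--Robinson bound (\ref{Lieb--Robinson bounds simplified})--(\ref{Lieb--Robinson bounds simplifiedbis}), not the multi--commutator bounds of \cite{brupedraLR}; the latter are needed only for $\partial_t^2\mathbf{\Xi}_{\mathrm{p}}$ in Theorem~\ref{Theorem AC conductivity measure copy(1)}.
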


\begin{proof}
Assertion (p) is proven in a similar way as Theorem \ref{main 1 copy(2)}.
See Equation (\ref{conductivity000}) and the arguments thereafter. Note only
that the pointwise convergence of any equicontinuous family of functions on $%
\mathbb{R}$ implies its uniform convergence on compacta. The proof of
Assertion (d) is even simpler because there is no time dependency. We omit
the details.
\end{proof}

\subsection{Macroscopic Ohm's Law\label{section Macroscopic Ohm's Law}}

For any $l\in \mathbb{R}^{+}$ and $\mathbf{A}\in \mathbf{C}_{0}^{\infty }$,
we consider now the space--rescaled vector potential
\begin{equation}
\mathbf{A}_{l}(t,x):=\mathbf{A}(t,l^{-1}x)\ ,\quad t\in \mathbb{R},\ x\in
\mathbb{R}^{d}\ .  \label{rescaled vector potential}
\end{equation}%
Since Ohm's law is a linear\emph{\ }response to electric fields, we also
rescale the strength of the electromagnetic potential $\mathbf{A}_{l}$ by a
real parameter $\eta \in \mathbb{R}$ and study the behavior of current
densities in the limit $\eta \rightarrow 0$.

Exactly like in \cite[Section 3]{OhmII} and \cite[Section 3.3]{OhmV},
w.l.o.g. we consider space--homogeneous (though time--dependent) electric
fields in the box $\Lambda _{l}$ defined by (\ref{eq:def lambda n}) for $%
l\in \mathbb{R}^{+}$. More precisely, let $\vec{w}\in \mathbb{R}^{d}$ be any
(normalized w.r.t. the usual Euclidian norm) vector, $\mathcal{A}\in
C_{0}^{\infty }\left( \mathbb{R};\mathbb{R}\right) $ and set $\mathcal{E}%
_{t}:=-\partial _{t}\mathcal{A}_{t}$ for all $t\in \mathbb{R}$. Then, $%
\mathbf{\bar{A}}\in \mathbf{C}_{0}^{\infty }$ is defined to be the
electromagnetic potential such that the electric field equals $\mathcal{E}%
_{t}\vec{w}$ at time $t\in \mathbb{R}$ for all $x\in \left[ -1,1\right] ^{d}$
and $(0,0,\ldots ,0)$ for $t\in \mathbb{R}$ and $x\notin \left[ -1,1\right]
^{d}$. This choice yields rescaled electromagnetic potentials $\eta \mathbf{%
\bar{A}}_{l}$ as defined by (\ref{rescaled vector potential}) for $l\in
\mathbb{R}^{+}$ and $\eta \in \mathbb{R}$.

For any $l,\beta \in \mathbb{R}^{+}$, $\omega \in \Omega $, $\vartheta
,\lambda \in \mathbb{R}_{0}^{+}$, $\eta \in \mathbb{R}$, $\vec{w}\in \mathbb{%
R}^{d}$, $\mathcal{A}\in C_{0}^{\infty }\left( \mathbb{R};\mathbb{R}\right) $
and $t\geq t_{0}$, the total current density is the sum of three currents
defined from (\ref{current observable}) and (\ref{current observable new}):

\begin{itemize}
\item[(th)] The (thermal) current density
\begin{equation*}
\mathbb{J}_{\mathrm{th}}^{(\omega ,l)}\equiv \mathbb{J}_{\mathrm{th}%
}^{(\beta ,\omega ,\vartheta ,\lambda ,l)}\in \mathbb{R}^{d}
\end{equation*}%
at thermal equilibrium inside the box $\Lambda _{l}$ is defined, for any $%
k\in \{1,\ldots ,d\}$, by%
\begin{equation}
\left\{ \mathbb{J}_{\mathrm{th}}^{(\omega ,l)}\right\} _{k}:=\left\vert
\Lambda _{l}\right\vert ^{-1}\underset{x\in \Lambda _{l}}{\sum }\varrho
^{(\beta ,\omega ,\vartheta ,\lambda )}\left( I_{\left( x+e_{k},x\right)
}^{(\omega ,\vartheta )}\right) \ .  \label{free current}
\end{equation}

\item[(p)] The paramagnetic current density is the map
\begin{equation*}
t\mapsto \mathbb{J}_{\mathrm{p}}^{(\omega ,\eta \mathbf{\bar{A}}_{l})}\left(
t\right) \equiv \mathbb{J}_{\mathrm{p}}^{(\beta ,\omega ,\vartheta ,\lambda
,\eta \mathbf{\bar{A}}_{l})}\left( t\right) \in \mathbb{R}^{d}
\end{equation*}%
defined by the space average of the current increment vector inside the box $%
\Lambda _{l}$ at time $t\geq t_{0}$, that is, for any $k\in \{1,\ldots ,d\}$%
,
\begin{equation}
\left\{ \mathbb{J}_{\mathrm{p}}^{(\omega ,\eta \mathbf{\bar{A}}_{l})}\left(
t\right) \right\} _{k}:=\left\vert \Lambda _{l}\right\vert ^{-1}\underset{%
x\in \Lambda _{l}}{\sum }\rho _{t}^{(\beta ,\omega ,\vartheta ,\lambda ,\eta
\mathbf{\bar{A}}_{l})}\left( I_{\left( x+e_{k},x\right) }^{(\omega
,\vartheta )}\right) -\left\{ \mathbb{J}_{\mathrm{th}}^{(\omega ,l)}\right\}
_{k}\ .  \label{finite volume current density}
\end{equation}

\item[(d)] The diamagnetic (or ballistic) current density%
\begin{equation*}
t\mapsto \mathbb{J}_{\mathrm{d}}^{(\omega ,\eta \mathbf{\bar{A}}_{l})}\left(
t\right) \equiv \mathbb{J}_{\mathrm{d}}^{(\beta ,\omega ,\vartheta ,\lambda
,\eta \mathbf{\bar{A}}_{l})}\left( t\right) \in \mathbb{R}^{d}
\end{equation*}%
is defined analogously, for any $t\geq t_{0}$ and $k\in \{1,\ldots ,d\}$, by%
\begin{equation}
\left\{ \mathbb{J}_{\mathrm{d}}^{(\omega ,\eta \mathbf{\bar{A}}_{l})}\left(
t\right) \right\} _{k}:=\left\vert \Lambda _{l}\right\vert ^{-1}\underset{%
x\in \Lambda _{l}}{\sum }\rho _{t}^{(\beta ,\omega ,\vartheta ,\lambda ,\eta
\mathbf{\bar{A}}_{l})}\left( \mathrm{I}_{(x+e_{k},x)}^{(\omega ,\vartheta
,\eta \mathbf{\bar{A}}_{l})}\right) \ .
\label{finite volume current density2}
\end{equation}
\end{itemize}

\noindent For more details on the physical interpretation of these currents,
see \cite[Section 3.4]{OhmII}.

By \cite[Theorem 3.7]{OhmV} and Conditions (\ref{(3.1) NS})--(\ref{(3.2) NS}%
) and (\ref{(3.3) NS}), the current densities behave, at small $|\eta |$ and
uniformly w.r.t. the size of the box, linearly w.r.t. $\eta $: For any $%
\vartheta _{0}\in \mathbb{R}_{0}^{+}$, $\mathcal{A}\in C_{0}^{\infty }\left(
\mathbb{R};\mathbb{R}\right) $ and $\eta \in \mathbb{R}$,%
\begin{eqnarray*}
\mathbb{J}_{\mathrm{p}}^{(\omega ,\eta \mathbf{\bar{A}}_{l})}\left( t\right)
&=&\eta J_{\mathrm{p},l}^{(\omega ,\mathcal{A})}(t)+\mathcal{O}\left( \eta
^{2}\right) \ ,\quad J_{\mathrm{p},l}^{(\omega ,\mathcal{A}%
)}(t):=\int_{t_{0}}^{t}\left( \Xi _{\mathrm{p},l}^{(\omega )}\left(
t-s\right) \vec{w}\right) \mathcal{E}_{s}\mathrm{d}s\ , \\
\mathbb{J}_{\mathrm{d}}^{(\omega ,\eta \mathbf{\bar{A}}_{l})}\left( t\right)
&=&\eta J_{\mathrm{d},l}^{(\omega ,\mathcal{A})}(t)+\mathcal{O}\left( \eta
^{2}\right) \ ,\quad J_{\mathrm{d},l}^{(\omega ,\mathcal{A})}(t):=\left( \Xi
_{\mathrm{d},l}^{(\omega )}\vec{w}\right) \int_{t_{0}}^{t}\mathcal{E}_{s}%
\mathrm{d}s\ ,
\end{eqnarray*}%
uniformly for $l,\beta \in \mathbb{R}^{+}$, $\omega \in \Omega $, $\vartheta
\in \lbrack 0,\vartheta _{0}]$, $\lambda \in \mathbb{R}_{0}^{+}$, $\vec{w}%
\in \mathbb{R}^{d}$ (normalized) and $t\geq t_{0}$.

The $\mathbb{R}^{d}$--valued linear coefficients
\begin{equation*}
J_{\mathrm{p},l}^{(\omega ,\mathcal{A})}\equiv J_{\mathrm{p},l}^{(\beta
,\omega ,\vartheta ,\lambda ,\vec{w},\mathcal{A})}\qquad \text{and}\qquad J_{%
\mathrm{d},l}^{(\omega ,\mathcal{A})}\equiv J_{\mathrm{d},l}^{(\beta ,\omega
,\vartheta ,\lambda ,\vec{w},\mathcal{A})}
\end{equation*}%
of the paramagnetic and diamagnetic current densities, respectively, become
deterministic for large boxes. They are directly related to $\mathbf{\Xi }_{%
\mathrm{p}}$ and $\mathbf{\Xi }_{\mathrm{d}}$ via Ohm's law:

\begin{satz}[Macroscopic Ohm's law]
\label{main 1 copy(8)}\mbox{
}\newline
Assume (\ref{(3.1) NS})--(\ref{(3.2) NS}), (\ref{static potential0})--(\ref%
{(3.3) NS}) and that the map (\ref{map}) is a random invariant state. Let $%
\beta \in \mathbb{R}^{+}$ and $\vartheta ,\lambda \in \mathbb{R}_{0}^{+}$.
Then, there is a measurable subset $\tilde{\Omega}\equiv \tilde{\Omega}%
^{(\beta ,\vartheta ,\lambda )}\subset \Omega $ of full measure such that,
for any $\omega \in \tilde{\Omega}$, $\vec{w}\in \mathbb{R}^{d}$, $\mathcal{A%
}\in C_{0}^{\infty }\left( \mathbb{R};\mathbb{R}\right) $ and $t\geq t_{0}$,
the following assertions hold true:\newline
\emph{(th)} Thermal current density:
\begin{equation*}
\underset{l\rightarrow \infty }{\lim }\left\{ \mathbb{J}_{\mathrm{th}%
}^{(\omega ,l)}\right\} _{k}=\mathbb{E}\left[ \varrho ^{(\beta ,\omega
,\vartheta ,\lambda )}(I_{\left( e_{k},0\right) }^{(\omega ,\vartheta )})%
\right] \ ,\qquad k\in \{1,\ldots ,d\}\ .
\end{equation*}%
\emph{(p)} Paramagnetic current density:%
\begin{equation}
\underset{l\rightarrow \infty }{\lim }J_{\mathrm{p},l}^{(\omega ,\mathcal{A}%
)}(t)=\underset{l\rightarrow \infty }{\lim }\left( \left. \partial _{\eta }%
\mathbb{J}_{\mathrm{p}}^{(\omega ,\eta \mathbf{\bar{A}}_{l})}\left( t\right)
\right\vert _{\eta =0}\right) =\int_{t_{0}}^{t}\left( \mathbf{\Xi }_{\mathrm{%
p}}\left( t-s\right) \vec{w}\right) \mathcal{E}_{s}\mathrm{d}s\ .  \notag
\end{equation}%
\emph{(d)} Diamagnetic current density:
\begin{equation}
\underset{l\rightarrow \infty }{\lim }J_{\mathrm{d},l}^{(\omega ,\mathcal{A}%
)}(t)=\underset{l\rightarrow \infty }{\lim }\left( \left. \partial _{\eta }%
\mathbb{J}_{\mathrm{d}}^{(\omega ,\eta \mathbf{\bar{A}}_{l})}\left( t\right)
\right\vert _{\eta =0}\right) =\left( \mathbf{\Xi }_{\mathrm{d}}\vec{w}%
\right) \int_{t_{0}}^{t}\mathcal{E}_{s}\mathrm{d}s\ .  \notag
\end{equation}
\end{satz}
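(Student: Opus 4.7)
The plan is to combine the linear response expansion \cite[Theorem 3.7]{OhmV} with the almost sure convergence of the space--averaged transport coefficients from Theorem \ref{thm charged transport coefficient}, and to treat the thermal current density separately by a direct application of the Akcoglu--Krengel ergodic theorem. I would first intersect the three full--measure sets coming from the three assertions, so that it suffices to prove each limit on its own full--measure subset of $\Omega$.

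For Assertion (th), the key identity is the translation covariance
\begin{equation*}
\varrho^{(\beta,\omega,\vartheta,\lambda)}\bigl(I_{(e_k+x,x)}^{(\omega,\vartheta)}\bigr)=f\bigl(\chi_{x}^{(\Omega)}(\omega)\bigr),\qquad f(\omega):=\varrho^{(\beta,\omega,\vartheta,\lambda)}\bigl(I_{(e_k,0)}^{(\omega,\vartheta)}\bigr).
\end{equation*}
This comes from two observations. First, the defining formula (\ref{current observable}) together with the translation action (\ref{translation omegabis}) on $\omega_2$ yields $\chi_x\bigl(I_{(e_k,0)}^{(\chi_x^{(\Omega)}(\omega),\vartheta)}\bigr)=I_{(e_k+x,x)}^{(\omega,\vartheta)}$. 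Second, the translation invariance property (\ref{translation invariant}) of the map $\omega\mapsto\varrho^{(\beta,\omega,\vartheta,\lambda)}$ gives $\varrho^{(\beta,\omega,\vartheta,\lambda)}\circ\chi_x=\varrho^{(\beta,\chi_{x}^{(\Omega)}(\omega),\vartheta,\lambda)}$. Since $f$ is bounded (the current observable has norm bounded by a constant depending on $\vartheta$ only) and measurable (by the measurability built into Definition \ref{def second law2 copy(1)}), the Akcoglu--Krengel ergodic theorem applied to the ergodic $\mathbb{Z}^d$--action $\{\chi_{x}^{(\Omega)}\}_{x\in\mathfrak{L}}$ on the boxes $\Lambda_l$ produces the claimed limit $\mathbb{E}[f]$ on a set of full measure.

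For Assertion (p), I would invoke the finite--volume linear response formula recalled just before the theorem, namely
\begin{equation*}
J_{\mathrm{p},l}^{(\omega,\mathcal{A})}(t)=\int_{t_0}^{t}\bigl(\Xi_{\mathrm{p},l}^{(\omega)}(t-s)\vec{w}\bigr)\,\mathcal{E}_s\,\mathrm{d}s,
\end{equation*}
which follows from \cite[Theorem 3.7]{OhmV}. On the full--measure set supplied by Theorem \ref{thm charged transport coefficient}(p), the family $s\mapsto\Xi_{\mathrm{p},l}^{(\omega)}(t-s)$ converges uniformly on the compact interval $[t_0,t]$ to $s\mapsto\mathbf{\Xi}_{\mathrm{p}}(t-s)$. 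Combined with the boundedness of $\mathcal{E}$ on $[t_0,t]$ (since $\mathcal{A}\in C_0^\infty(\mathbb{R};\mathbb{R})$), this allows exchanging the limit $l\rightarrow\infty$ with the time integral, giving the desired formula. Assertion (d) is even more direct: one uses $J_{\mathrm{d},l}^{(\omega,\mathcal{A})}(t)=(\Xi_{\mathrm{d},l}^{(\omega)}\vec{w})\int_{t_0}^{t}\mathcal{E}_s\,\mathrm{d}s$ and applies Theorem \ref{thm charged transport coefficient}(d) directly, since there is no time integration of the random coefficient involved.

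The only subtle point I anticipate is the identification of the equality of the paramagnetic (resp.\ diamagnetic) current density increments with $\eta\,J_{\mathrm{p},l}^{(\omega,\mathcal{A})}(t)+\mathcal{O}(\eta^2)$ (resp.\ $\eta\,J_{\mathrm{d},l}^{(\omega,\mathcal{A})}(t)+\mathcal{O}(\eta^2)$) \emph{uniformly} in $l\in\mathbb{R}^+$ and $\omega\in\Omega$, which is precisely what the Lieb--Robinson bounds for multi--commutators of \cite{brupedraLR} provide via \cite[Theorem 3.7]{OhmV}; this uniformity is what legitimates the interchange of the limits $\eta\rightarrow 0$ and $l\rightarrow\infty$ in the statement. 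Beyond that, every step reduces either to the ergodic theorem or to dominated/uniform convergence on a compact interval, neither of which should cause real difficulty.
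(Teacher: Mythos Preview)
Your proposal is correct and follows essentially the same route as the paper. The paper's proof of (th) simply cites \cite[Corollary 5.7 (th)]{OhmIII}, which is precisely the Akcoglu--Krengel argument you spell out; for (p) and (d) the paper likewise invokes Theorem \ref{thm charged transport coefficient} together with Lebesgue's dominated convergence theorem (where you instead use the uniform convergence on compacta, which is an equally valid---and in fact slightly cleaner---justification), and the paper also records that the intersection of the three full--measure sets has full measure.
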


\begin{proof}
(th) is similar to \cite[Corollary 5.7 (th)]{OhmIII}. Assertions (p) and (d)
are deduced from Theorem \ref{thm charged transport coefficient} and
Lebesgue's dominated convergence theorem. Note that the intersection of
three measurable sets of full measure has full measure.
\end{proof}

\noindent Like \cite[Theorem 3.7]{OhmV}, Theorem \ref{main 1 copy(8)} can
also be extended to space--inhomo%
\-%
geneous macroscopic electromagnetic fields, that is, for space--rescaled
vector potentials $\mathbf{A}_{l}$ (\ref{rescaled vector potential}) with
arbitrary $\mathbf{A}\in \mathbf{C}_{0}^{\infty }$.

\subsection{Green--Kubo Relations}

Because of Theorem \ref{main 1 copy(8)} (p)--(d), $\mathbf{\Xi }_{\mathrm{p}%
} $ and $\mathbf{\Xi }_{\mathrm{d}}$ are both \emph{charge} transport
coefficients. Thus, they are also named here \emph{paramagnetic }and\emph{\
diamagnetic (in--phase) conductivities}, respectively. From (\ref%
{paramagnetic transport coefficient macro}) we can deduce Green--Kubo
relations for $\mathbf{\Xi }_{\mathrm{p}}$ via current Duhamel fluctuations
as follows.

Fix in all the subsection $\beta \in \mathbb{R}^{+}$ and $\vartheta ,\lambda
\in \mathbb{R}_{0}^{+}$. The Duhamel two--point function $(\cdot ,\cdot
)_{\sim }^{(\omega )}$ is defined by%
\begin{equation*}
(B_{1},B_{2})_{\sim }^{(\omega )}\equiv (B_{1},B_{2})_{\sim }^{(\beta
,\omega ,\vartheta ,\lambda )}:=\int\nolimits_{0}^{\beta }\varrho ^{(\beta
,\omega ,\vartheta ,\lambda )}\left( B_{1}^{\ast }\tau _{i\alpha }^{(\omega
,\vartheta ,\lambda )}(B_{2})\right) \mathrm{d}\alpha
\end{equation*}%
for any $B_{1},B_{2}\in \mathcal{U}$ and $\omega \in \Omega $. See for
instance \cite[Section A]{OhmII} and references therein for more details.
For any $l\in \mathbb{R}^{+}$ and $B\in \mathcal{U}$, set
\begin{equation}
\mathbb{F}^{(l)}\left( B\right) :=\frac{1}{\left\vert \Lambda
_{l}\right\vert ^{1/2}}\underset{x\in \Lambda _{l}}{\sum }\left\{ \chi
_{x}\left( B\right) -\varrho ^{(\beta ,\omega ,\vartheta ,\lambda )}\left(
\chi _{x}\left( B\right) \right) \mathbf{1}_{\mathcal{U}}\right\} \ .
\label{Fluctuation2}
\end{equation}%
We name it the \emph{fluctuation observable }of the element $B\in \mathcal{U}
$ in the box $\Lambda _{l}$. Recall that $\{\chi _{x}\}_{x\in \mathfrak{L}}$
implements the action of the group $(\mathbb{Z}^{d},+)$ of lattice
translations on the CAR $C^{\ast }$--algebra $\mathcal{U}$, see (\ref%
{definition trans U}).

Then, by \cite[Eq. (103)]{OhmII} together with (\ref{paramagnetic transport
coefficient macro}), one obtains \emph{Green--Kubo relations} for the
paramagnetic (in--phase) conductivity: For any $k,q\in \{1,\ldots ,d\}$ and $%
t\in \mathbb{R}$,
\begin{eqnarray}
\left\{ \mathbf{\Xi }_{\mathrm{p}}\left( t\right) \right\} _{k,q} &=&%
\underset{l\rightarrow \infty }{\lim }\mathbb{E}\left[ \left( \mathbb{F}%
^{(l)}(I_{(e_{k},0)}^{(\omega ,\vartheta )}),\mathbb{F}^{(l)}(\tau
_{t}^{(\omega ,\vartheta ,\lambda )}(I_{(e_{q},0)}^{(\omega ,\vartheta
)}))\right) _{\sim }^{(\omega )}\right.  \notag \\
&&\qquad \qquad \qquad \left. -\left( \mathbb{F}^{(l)}(I_{(e_{k},0)}^{(%
\omega ,\vartheta )}),\mathbb{F}^{(l)}(I_{(e_{q},0)}^{(\omega ,\vartheta
)})\right) _{\sim }^{(\omega )}\right]  \label{green kubo current increment}
\end{eqnarray}%
with the current observable $I_{(x,y)}^{(\omega ,\vartheta )}$ defined by (%
\ref{current observable}). The right hand side (r.h.s.) of the above
equation is a current Duhamel fluctuation \emph{increment}. If Conditions (%
\ref{(3.1) NS})--(\ref{(3.2) NS}) and (\ref{static potential0}) hold and the
map (\ref{map}) is a random invariant state, then the above limit always
exists (and is thus finite), by Theorem \ref{Theorem AC conductivity measure
copy(1)}.

Note however that, possibly,
\begin{equation}
\underset{l\rightarrow \infty }{\lim \sup }\ \mathbb{E}\left[ \left( \mathbb{%
F}^{(l)}(I_{(e_{k},0)}^{(\omega ,\vartheta )}),\mathbb{F}%
^{(l)}(I_{(e_{q},0)}^{(\omega ,\vartheta )})\right) _{\sim }^{(\omega )}%
\right] =\infty  \label{current fluct}
\end{equation}%
for some $k,q\in \{1,\ldots ,d\}$. In other words, it is not a priori clear
whether the interacting quantum system has finite current Duhamel
fluctuations or not. When it is finite (and so are\ both terms in the r.h.s.
of (\ref{green kubo current increment})), similar to \cite[Section 3]{OhmIV}%
, we can construct a Hilbert space of fluctuations, which implies the
existence of a finite conductivity measure as a spectral measure. The
finiteness of current Duhamel fluctuations is proven in \cite[Section 3]%
{OhmIV} for the non--interacting case with random static potentials and
space--homogeneous hopping terms. This can also be shown for sufficiently
small $\left\Vert \Psi ^{\mathrm{IP}}\right\Vert _{\mathcal{W}}$ and
disorder strengths $\vartheta ,\lambda $, by using methods of constructive
quantum field theory.

\section{AC--Conductivity Measure From Joule's Law\label{Sect Conductivity
Measure From Joule's Law}}

Similar to \cite[Section 4.3]{OhmIII}, our derivation of a macroscopic
(in--phase) AC--conducti%
\-%
vity measure is based on the 2nd law of thermodynamics. It dovetails with
the celebrated Joule's law of (classical) electricity theory. To this end we
start by introducing energy increment densities, in particular the heat
production density.

\subsection{Energy Increment Densities\label{sect 2.7}}

The internal energy observable $H_{L}^{(\omega ,\vartheta ,\lambda )}\in
\mathcal{U}^{+}\cap \mathcal{U}_{\Lambda }$ of the interacting fermion
system for the box $\Lambda _{L}$ (\ref{eq:def lambda n}) is defined by%
\begin{eqnarray}
H_{L}^{(\omega ,\vartheta ,\lambda )} &:=&\sum\limits_{\Lambda \subset
\Lambda _{L}}\Psi _{\Lambda }^{(\omega ,\vartheta )}+\lambda
\sum\limits_{x\in \Lambda _{L}}\omega _{1}(x)a_{x}^{\ast }a_{x}
\label{def H loc} \\
&=&\sum\limits_{x,y\in \Lambda _{L}}\langle \mathfrak{e}_{x},(\Delta
_{\omega ,\vartheta }+\lambda V_{\omega })\mathfrak{e}_{y}\rangle
a_{x}^{\ast }a_{y}+\sum\limits_{\Lambda \subset \Lambda _{L}}\Psi _{\Lambda
}^{\mathrm{IP}}  \notag
\end{eqnarray}%
for $\omega =(\omega _{1},\omega _{2})\in \Omega $, $\vartheta ,\lambda \in
\mathbb{R}_{0}^{+}$ and $L\in \mathbb{R}^{+}$. When the electromagnetic
field is switched on, i.e., for $t\geq t_{0}$, the total energy observable
for the box $\Lambda _{L}$ that includes the region where the
electromagnetic field does not vanish equals%
\begin{equation*}
H_{L}^{(\omega ,\vartheta ,\lambda )}+W_{t}^{(\omega ,\vartheta ,\mathbf{A}%
)}\ ,
\end{equation*}%
where, for any $\omega \in \Omega $, $\vartheta \in \mathbb{R}_{0}^{+}$, $%
\mathbf{A}\in \mathbf{C}_{0}^{\infty }$ and $t\in \mathbb{R}$,
\begin{equation*}
W_{t}^{(\omega ,\vartheta ,\mathbf{A})}:=\sum\limits_{x,y\in \mathfrak{L}%
}\langle \mathfrak{e}_{x},(\Delta _{\omega ,\vartheta }^{(\mathbf{A}%
)}-\Delta _{\omega ,\vartheta })\mathfrak{e}_{y}\rangle a_{x}^{\ast
}a_{y}\in \mathcal{U}^{+}\cap \mathcal{U}_{0}
\end{equation*}%
is the electromagnetic\emph{\ }potential energy observable.

Like in \cite[Sections 3.1, 3.4]{OhmV}, we now define four sorts of energy
increments associated with the fermion system for any $\beta \in \mathbb{R}%
^{+}$, $\omega \in \Omega $, $\vartheta ,\lambda \in \mathbb{R}_{0}^{+}$ and
$\mathbf{A}\in \mathbf{C}_{0}^{\infty }$:

\begin{itemize}
\item[($\mathbf{Q}$)] The internal energy increment $\mathbf{S}^{(\omega ,%
\mathbf{A})}\equiv \mathbf{S}^{(\beta ,\omega ,\vartheta ,\lambda ,\mathbf{A}%
)}$ is the map from $\mathbb{R}$ to $\mathbb{R}_{0}^{+}$ defined by%
\begin{equation*}
\mathbf{S}^{(\omega ,\mathbf{A})}\left( t\right) :=\lim_{L\rightarrow \infty
}\left\{ \rho _{t}^{(\beta ,\omega ,\vartheta ,\lambda ,\mathbf{A}%
)}(H_{L}^{(\omega ,\vartheta ,\lambda )})-\varrho ^{(\beta ,\omega
,\vartheta ,\lambda )}(H_{L}^{(\omega ,\vartheta ,\lambda )})\right\} \ .
\end{equation*}%
Under Conditions (\ref{(3.1) NS})--(\ref{(3.2) NS}) and (\ref{(3.3) NS}),
this map has non--negative finite value and is the heat production because
of \cite[Theorem 3.2]{OhmV}.

\item[($\mathbf{P}$)] The electromagnetic potential energy increment $%
\mathbf{P}^{(\omega ,\mathbf{A})}\equiv \mathbf{P}^{(\beta ,\omega
,\vartheta ,\lambda ,\mathbf{A})}$ is the map from $\mathbb{R}$ to $\mathbb{R%
}$ defined by%
\begin{equation*}
\mathbf{P}^{(\omega ,\mathbf{A})}\left( t\right) :=\rho _{t}^{(\beta ,\omega
,\vartheta ,\lambda ,\mathbf{A})}(W_{t}^{(\omega ,\vartheta ,\mathbf{A})})\ .
\end{equation*}

\item[(p)] The paramagnetic energy increment $\mathfrak{J}_{\mathrm{p}%
}^{(\omega ,\mathbf{A})}\equiv \mathfrak{I}_{\mathrm{p}}^{(\beta ,\omega
,\vartheta ,\lambda ,\mathbf{A})}$ is the map from $\mathbb{R}$ to $\mathbb{R%
}$ defined by%
\begin{multline*}
\mathfrak{I}_{\mathrm{p}}^{(\omega ,\mathbf{A})}\left( t\right)
:=\lim_{L\rightarrow \infty }\left\{ \rho _{t}^{(\beta ,\omega ,\vartheta
,\lambda ,\mathbf{A})}(H_{L}^{(\omega ,\vartheta ,\lambda )}+W_{t}^{(\omega
,\vartheta ,\mathbf{A})})\right. \\
\left. -\varrho ^{(\beta ,\omega ,\vartheta ,\lambda )}(H_{L}^{(\omega
,\vartheta ,\lambda )}+W_{t}^{(\omega ,\vartheta ,\mathbf{A})})\right\} \ .
\end{multline*}

\item[(d)] The diamagnetic energy increment $\mathfrak{I}_{\mathrm{d}%
}^{(\omega ,\mathbf{A})}\equiv \mathfrak{I}_{\mathrm{d}}^{(\beta ,\omega
,\vartheta ,\lambda ,\mathbf{A})}$ is the map from $\mathbb{R}$ to $\mathbb{R%
}$ defined by
\begin{equation*}
\mathfrak{I}_{\mathrm{d}}^{(\omega ,\mathbf{A})}\left( t\right) :=\varrho
^{(\beta ,\omega ,\vartheta ,\lambda )}(W_{t}^{(\omega ,\vartheta ,\mathbf{A}%
)})\ .
\end{equation*}
\end{itemize}

\noindent See \cite{OhmII} for more discussions on the physical
interpretation of these energies. Note that the limits described in ($%
\mathbf{Q}$) and (p) exist at all times. Indeed, the total energy increment
\begin{equation*}
\rho _{t}^{(\beta ,\omega ,\vartheta ,\lambda ,\mathbf{A})}(H_{L}^{(\omega
,\vartheta ,\lambda )}+W_{t}^{(\omega ,\vartheta ,\mathbf{A})})-\varrho
^{(\beta ,\omega ,\vartheta ,\lambda )}(H_{L}^{(\omega ,\vartheta ,\lambda
)})
\end{equation*}%
is shown in \cite[Theorem 3.2 (ii)]{OhmV} to be the \emph{work} performed by
the electric field and is given in the limit $L\rightarrow \infty $ by an
expression like (\ref{work}), which, on the other hand, equals
\begin{equation*}
\mathbf{S}^{(\omega ,\mathbf{A})}\left( t\right) +\mathbf{P}^{(\omega ,%
\mathbf{A})}\left( t\right) =\mathfrak{I}_{\mathrm{p}}^{(\omega ,\mathbf{A}%
)}\left( t\right) +\mathfrak{I}_{\mathrm{d}}^{(\omega ,\mathbf{A})}\left(
t\right) \ .
\end{equation*}

Under Conditions (\ref{(3.1) NS})--(\ref{(3.2) NS}) and (\ref{(3.3) NS}),
all increment energies defined above are of order $\mathcal{O}\left( \eta
^{2}l^{d}\right) $, as $l\rightarrow \infty $, by \cite[Theorem 3.8]{OhmV}.
Indeed, because of possibly non--vanishing thermal currents, the energy
increments $\mathbf{P}^{(\omega ,\mathbf{A})}$ and $\mathfrak{I}_{\mathrm{d}%
}^{(\omega ,\mathbf{A})}$ are rather $\mathcal{O}\left( \left\vert \eta
\right\vert l^{d}\right) $ at small $l\in \mathbb{R}_{0}^{+}$. As a
consequence, for any $\beta \in \mathbb{R}^{+}$, $\omega \in \Omega $, $%
\vartheta ,\lambda \in \mathbb{R}_{0}^{+}$ and $\mathbf{A}\in \mathbf{C}%
_{0}^{\infty }$, we define four energy densities:

\begin{itemize}
\item[($\mathbf{Q}$)] The \emph{heat production} (or internal energy
increment) density $\mathbf{s}\equiv \mathbf{s}^{(\beta ,\omega ,\vartheta
,\lambda ,\mathbf{A})}$ is the map from $\mathbb{R}$ to $\mathbb{R}_{0}^{+}$
defined by%
\begin{equation}
\mathbf{s}\left( t\right) :=\underset{(\eta ,l^{-1})\rightarrow (0,0)}{\lim }%
\left\{ \left( \eta ^{2}\left\vert \Lambda _{l}\right\vert \right) ^{-1}%
\mathbf{S}^{(\omega ,\eta \mathbf{A}_{l})}\left( t\right) \right\} \ .
\end{equation}

\item[($\mathbf{P}$)] The (electromagnetic) \emph{potential} energy
(increment) density $\mathbf{p}\equiv \mathbf{p}^{(\beta ,\omega ,\vartheta
,\lambda ,\mathbf{A})}$ is the map from $\mathbb{R}$ to $\mathbb{R}$ defined
by%
\begin{equation}
\mathbf{p}\left( t\right) :=\underset{\eta \rightarrow 0}{\lim }\ \underset{%
l\rightarrow \infty }{\lim }\left\{ \left( \eta ^{2}\left\vert \Lambda
_{l}\right\vert \right) ^{-1}\mathbf{P}^{(\omega ,\eta \mathbf{A}%
_{l})}\left( t\right) \right\} \ .
\end{equation}

\item[(p)] The \emph{paramagnetic} energy (increment) density $\mathfrak{i}_{%
\mathrm{p}}\equiv \mathfrak{i}_{\mathrm{p}}^{(\beta ,\omega ,\vartheta
,\lambda ,\mathbf{A})}$ is the map from $\mathbb{R}$ to $\mathbb{R}$ defined
by%
\begin{equation}
\mathfrak{i}_{\mathrm{p}}\left( t\right) :=\underset{(\eta
,l^{-1})\rightarrow (0,0)}{\lim }\left\{ \left( \eta ^{2}\left\vert \Lambda
_{l}\right\vert \right) ^{-1}\mathfrak{I}_{\mathrm{p}}^{(\omega ,\eta
\mathbf{A}_{l})}\left( t\right) \right\} \ .
\label{paramagnetic energy density}
\end{equation}

\item[(d)] The \emph{diamagnetic} energy (increment) density $\mathfrak{i}_{%
\mathrm{d}}\equiv \mathfrak{i}_{\mathrm{d}}^{(\beta ,\omega ,\vartheta
,\lambda ,\mathbf{A})}$ the map from $\mathbb{R}$ to $\mathbb{R}$ defined by
\begin{equation}
\mathfrak{i}_{\mathrm{d}}\left( t\right) :=\underset{\eta \rightarrow 0}{%
\lim }\ \underset{l\rightarrow \infty }{\lim }\left\{ \left( \eta
^{2}\left\vert \Lambda _{l}\right\vert \right) ^{-1}\mathfrak{I}_{\mathrm{d}%
}^{(\omega ,\eta \mathbf{A}_{l})}\left( t\right) \right\} \ .
\end{equation}
\end{itemize}

\noindent On a measurable subset of full measure, all energy (increment)
densities become deterministic functions that are derived in the next
subsection. We explain this in the next subsection.

\subsection{Macroscopic Joule's Law\label{sect Classical Joule's Law}}

Similar to the heuristics presented in \cite[Section 4.2]{OhmIII}, we expect
from Theorem \ref{main 1 copy(8)} that, for $\beta \in \mathbb{R}^{+}$, $%
\vartheta ,\lambda \in \mathbb{R}_{0}^{+}$ and any (possibly space
inhomogeneous) electromagnetic potential $\mathbf{A}\in \mathbf{C}%
_{0}^{\infty }$, the electric field $E_{\mathbf{A}}$ yields \emph{%
space--dependent} paramagnetic and diamagnetic current linear response
coefficients respectively equal to%
\begin{eqnarray}
J_{\mathrm{p}}(t,x) &\equiv &J_{\mathrm{p}}^{(\beta ,\vartheta ,\lambda ,%
\mathbf{A})}(t,x):=\int_{t_{0}}^{t}\mathbf{\Xi }_{\mathrm{p}}\left(
t-s\right) E_{\mathbf{A}}(s,x)\mathrm{d}s\ ,  \label{current para limit} \\
J_{\mathrm{d}}(t,x) &\equiv &J_{\mathrm{d}}^{(\beta ,\vartheta ,\lambda ,%
\mathbf{A})}(t,x):=\mathbf{\Xi }_{\mathrm{d}}\int_{t_{0}}^{t}E_{\mathbf{A}%
}(s,x)\mathrm{d}s\ ,  \label{current dia limit}
\end{eqnarray}%
at any position $x\in \mathbb{R}^{d}$ and time $t\in \mathbb{R}$. These
current linear response coefficients yield two electric work or energy
densities produced by the paramagnetic and diamagnetic currents. This fact
is proven in the following theorem:

\begin{satz}[Macroscopic Joule's law]
\label{main 1 copy(1)}\mbox{
}\newline
Assume (\ref{(3.1) NS})--(\ref{(3.2) NS}), (\ref{static potential0})--(\ref%
{(3.3) NS}) and that the map (\ref{map}) is a random invariant state. Let $%
\beta \in \mathbb{R}^{+}$ and $\vartheta ,\lambda \in \mathbb{R}_{0}^{+}$.
Then, there is a measurable subset $\tilde{\Omega}\equiv \tilde{\Omega}%
^{(\beta ,\vartheta ,\lambda )}\subset \Omega $ of full measure such that,
for any $\omega \in \tilde{\Omega}$, $\mathbf{A}\in \mathbf{C}_{0}^{\infty }$
and $t\geq t_{0}$: \newline
\emph{(p)} Paramagnetic energy density:%
\begin{equation*}
\mathfrak{i}_{\mathrm{p}}\left( t\right) =\int\nolimits_{\mathbb{R}^{d}}%
\mathrm{d}^{d}x\int\nolimits_{t_{0}}^{t}\mathrm{d}s\left\langle E_{\mathbf{A}%
}(s,x),J_{\mathrm{p}}(s,x)\right\rangle _{{\mathbb{R}}^{d}}\ .
\end{equation*}%
\emph{(d)} Diamagnetic energy density:%
\begin{equation*}
\mathfrak{i}_{\mathrm{d}}\left( t\right) =\int\nolimits_{\mathbb{R}^{d}}%
\mathrm{d}^{d}x\int\nolimits_{t_{0}}^{t}\mathrm{d}s\left\langle E_{\mathbf{A}%
}(s,x),J_{\mathrm{d}}(s,x)\right\rangle _{{\mathbb{R}}^{d}}\ .
\end{equation*}%
\emph{(\textbf{Q})} Heat production density:%
\begin{equation*}
\mathbf{s}\left( t\right) =\mathfrak{i}_{\mathrm{p}}\left( t\right)
-\int\nolimits_{\mathbb{R}^{d}}\mathrm{d}^{d}x\int\nolimits_{t_{0}}^{t}%
\mathrm{d}s\left\langle E_{\mathbf{A}}(s,x),J_{\mathrm{p}}(t,x)\right\rangle
_{{\mathbb{R}}^{d}}\ .
\end{equation*}%
\emph{(\textbf{P})} Electromagnetic potential energy density:%
\begin{equation*}
\mathbf{p}\left( t\right) =\mathfrak{i}_{\mathrm{d}}\left( t\right)
+\int\nolimits_{\mathbb{R}^{d}}\mathrm{d}^{d}x\int\nolimits_{t_{0}}^{t}%
\mathrm{d}s\left\langle E_{\mathbf{A}}(s,x),J_{\mathrm{p}}(t,x)\right\rangle
_{{\mathbb{R}}^{d}}\ .
\end{equation*}
\end{satz}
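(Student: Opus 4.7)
The plan is to reduce all four densities to current--field pairings via the work formula of \cite[Theorem~3.2]{OhmV}, and then pass to the macroscopic and linear-response limits using Theorem \ref{main 1 copy(8)} together with the existence results of \cite[Theorem 3.8]{OhmV}. First, straight from the definitions in Section \ref{sect 2.7}, one has the exact identities
\begin{equation*}
\mathbf{S}^{(\omega ,\mathbf{A})}+\mathbf{P}^{(\omega ,\mathbf{A})}=\mathfrak{I}_{\mathrm{p}}^{(\omega ,\mathbf{A})}+\mathfrak{I}_{\mathrm{d}}^{(\omega ,\mathbf{A})},\qquad \mathfrak{I}_{\mathrm{d}}^{(\omega ,\mathbf{A})}(t)=\varrho ^{(\beta ,\omega ,\vartheta ,\lambda )}(W_{t}^{(\omega ,\vartheta ,\mathbf{A})}),
\end{equation*}
while the infinite--volume form of (\ref{work}) established in \cite[Theorem~3.2]{OhmV} gives the work representation
\begin{equation*}
\mathbf{S}^{(\omega ,\mathbf{A})}(t)+\mathbf{P}^{(\omega ,\mathbf{A})}(t)=\int_{t_{0}}^{t}\rho _{s}^{(\beta ,\omega ,\vartheta ,\lambda ,\mathbf{A})}(\partial _{s}W_{s}^{(\omega ,\vartheta ,\mathbf{A})})\,\mathrm{d}s.
\end{equation*}
Subtracting the analogous identity $\mathfrak{I}_{\mathrm{d}}(t)=\int_{t_{0}}^{t}\varrho (\partial _{s}W_{s})\,\mathrm{d}s$ (which uses $W_{t_{0}}=0$ and the time-invariance of $\varrho$) yields an integral representation for $\mathfrak{I}_{\mathrm{p}}^{(\omega ,\mathbf{A})}$ in terms of $(\rho_{s}-\varrho)(\partial_{s}W_{s})$, and the companion identity $\mathbf{P}(t)-\mathfrak{I}_{\mathrm{d}}(t)=(\rho_{t}-\varrho)(W_{t})$, which is precisely what will generate the $J_{\mathrm{p}}(t,x)$ terms appearing in (Q) and (P).

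Next, I would differentiate the phase factor in (\ref{eq discrete lapla A}) with respect to $s$ and, pairing the summands $(x,y)$ and $(y,x)$, show by direct computation from (\ref{V bar 0bis}) and (\ref{current observable})--(\ref{current observable new}) that
\begin{equation*}
\partial _{s}W_{s}^{(\omega ,\vartheta ,\mathbf{A})}=-\tfrac{1}{2}\sum_{\mathbf{x}\in \mathfrak{L}^{2}}\mathbf{E}_{s}^{\mathbf{A}}(\mathbf{x})\bigl(I_{\mathbf{x}}^{(\omega ,\vartheta )}+\mathrm{I}_{\mathbf{x}}^{(\omega ,\vartheta ,\mathbf{A})}\bigr).
\end{equation*}
Since $\mathrm{I}_{\mathbf{x}}^{(\omega ,\vartheta ,\mathbf{A})}$ is itself of order $\mathcal{O}(\mathbf{A})$, this decomposition neatly separates, at each order in the rescaling parameter $\eta $, the paramagnetic contribution (the $I_{\mathbf{x}}$--term paired with $\rho_{s}-\varrho$, producing $J_{\mathrm{p}}$) from the diamagnetic one (the $\mathrm{I}_{\mathbf{x}}^{(\mathbf{A})}$--term paired with $\varrho $, producing $J_{\mathrm{d}}$).

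Third, insert $\eta\mathbf{A}_{l}$, divide by $\eta ^{2}|\Lambda _{l}|$, and pass to the limits $l\to\infty $ and $\eta\to 0$. For (p), at leading order in $\eta$ the factor $(\rho_{s}-\varrho)(I_{\mathbf{x}})$ is precisely the finite-volume paramagnetic current response, which by Theorem \ref{main 1 copy(8)}(p) converges almost surely and uniformly on compact time intervals to $J_{\mathrm{p}}(s,x)$, after converting $|\Lambda_{l}|^{-1}\sum_{x\in\Lambda_{l}}$ into $\int_{\mathbb{R}^{d}}\mathrm{d}^{d}x$ via the Riemann-sum argument associated with the rescaling $y=l^{-1}x$ and the smoothness of $\mathbf{A}$. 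Assertion (d) follows from Theorem \ref{main 1 copy(8)}(d) applied analogously to $\varrho(\mathrm{I}_{\mathbf{x}}^{(\eta\mathbf{A}_{l})})$. For (P), the same argument applied to $\mathbf{P}-\mathfrak{I}_{\mathrm{d}}=(\rho_{t}-\varrho)(W_{t})$ produces exactly the paramagnetic response, now evaluated at the fixed upper time $t$ instead of the running variable $s$, which is what yields $J_{\mathrm{p}}(t,x)$ in (P). Finally, (Q) follows from (p), (d), (P), and the energy balance $\mathbf{s}+\mathbf{p}=\mathfrak{i}_{\mathrm{p}}+\mathfrak{i}_{\mathrm{d}}$.

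The hard part will be the uniform control, in $l$ and $\eta $ and almost surely in $\omega $, of the remainders in the $\eta $-expansions of $\rho_{t}^{(\omega ,\eta \mathbf{A}_{l})}(\partial_{s}W_{s})$ and of $(\rho_{t}-\varrho)(W_{t})$, which is required both to legitimate the interchange of the various limits (including the time integral and the disorder expectation) and to identify the leading order with the linear-response current. In the free case of \cite{OhmIII} such control rested on tree-decay estimates for multi-commutators of Bogoliubov automorphisms \cite[Theorem 4.1]{OhmI}; here these are replaced by the Lieb--Robinson bounds for multi-commutators of \cite{brupedraLR}, as exploited in \cite[Section 5.4]{OhmV}. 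The existence of the four densities and the requisite uniform estimates are packaged in \cite[Theorem 3.8]{OhmV}, so the remaining work really reduces to the explicit identification above.
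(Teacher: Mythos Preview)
Your route differs from the paper's in a meaningful way. The paper does not factor through Ohm's law (Theorem~\ref{main 1 copy(8)}); instead it invokes \cite[Theorem~3.8~(p)]{OhmV}, which already delivers the $\eta^{2}$--expansion
\[
\mathfrak{I}_{\mathrm{p}}^{(\omega,\eta\mathbf{A}_{l})}(t)=\eta^{2}l^{d}\int_{t_{0}}^{t}\!\!\int_{t_{0}}^{s_{1}}\mathbf{X}_{l}^{(\omega)}(s_{1},s_{2})\,\mathrm{d}s_{2}\mathrm{d}s_{1}+\mathcal{O}(\eta^{3}l^{d}),
\]
with $\mathbf{X}_{l}^{(\omega)}(s_{1},s_{2})=|\Lambda_{l}|^{-1}\sum_{\mathbf{x},\mathbf{y}}\sigma_{\mathrm{p}}^{(\omega)}(\mathbf{x},\mathbf{y},s_{1}-s_{2})\,\mathbf{E}_{s_{1}}^{\mathbf{A}_{l}}(\mathbf{x})\mathbf{E}_{s_{2}}^{\mathbf{A}_{l}}(\mathbf{y})$. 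The paper then proves $\mathbf{X}_{l}^{(\omega)}\to\mathbf{X}_{\infty}$ almost surely by a box decomposition of $[-1/2,1/2]^{d}$, a piecewise--constant approximation of $E_{\mathbf{A}}$, and a direct application of the Akcoglu--Krengel ergodic theorem to the additive process built from $\sigma_{\mathrm{p}}^{(\omega)}$ (Lemmata~\ref{lemma conductivty1}--\ref{lemma conductivty3} and Theorem~\ref{main 1 copy(2)}). Your derivation of $\partial_{s}W_{s}$ in terms of $I_{\mathbf{x}}$ and $\mathrm{I}_{\mathbf{x}}^{(\mathbf{A})}$ leads, after the Dyson expansion of $\rho_{s}-\varrho$, to exactly the same bilinear form $\mathbf{X}_{l}^{(\omega)}$; so the two starting points coincide once unfolded.

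The genuine gap in your proposal is the appeal to Theorem~\ref{main 1 copy(8)}(p). That theorem is stated and proved only for spatially \emph{constant} fields $\mathcal{E}_{t}\vec{w}$ in $\Lambda_{l}$, and it asserts convergence of the \emph{space--averaged} linear response $J_{\mathrm{p},l}^{(\omega,\mathcal{A})}$, not of the single--bond quantities $(\rho_{s}-\varrho)(I_{\mathbf{x}})$. These do not converge pointwise in $x$ to $J_{\mathrm{p}}(s,l^{-1}x)$; only suitable spatial averages do, by ergodicity. The object you must control is the weighted sum $|\Lambda_{l}|^{-1}\sum_{\mathbf{x}}\mathbf{E}_{s}^{\mathbf{A}_{l}}(\mathbf{x})\cdot(\cdots)$ with a space--varying weight, and Theorem~\ref{main 1 copy(8)} does not cover this. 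To make your argument rigorous you would have to partition $\mathrm{supp}\,\mathbf{A}$ into small boxes, freeze the field on each, and apply the Akcoglu--Krengel theorem box by box --- which is precisely the content of Section~\ref{section energetic study}. In other words, the ``Riemann--sum argument'' you gesture at is not a routine post-processing step after Theorem~\ref{main 1 copy(8)}; it is the entire analytic core of the proof, and it uses the ergodic theorem directly rather than Theorem~\ref{main 1 copy(8)}.
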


\begin{proof}
The proof is very similar to the proof of \cite[Theorem 4.1]{OhmIII}. It is
a consequence of the Akcoglu--Krengel ergodic theorem, Lieb--Robinson bounds
\cite[Theorem 3.6 (iv)]{brupedraLR} and \cite[Theorem 3.8]{OhmV}. For the
detailed proof of Assertion (p), see Theorem \ref{main 1 copy(2)}. We omit
the details for Assertions (d), (Q) and (P).
\end{proof}

For more discussions on this subject, see \cite[Section 4.2]{OhmIII}. In
fact, the above result is an extension of \cite[Theorem 4.1]{OhmIII} to
fermion systems \emph{with interactions}.

\subsection{AC--Conductivity Measure\label{sect ac cond}}

At $\beta \in \mathbb{R}^{+}$ and $\vartheta ,\lambda \in \mathbb{R}_{0}^{+}$%
, the paramagnetic transport coefficient $\mathbf{\Xi }_{\mathrm{p}}\equiv
\mathbf{\Xi }_{\mathrm{p}}^{(\beta ,\vartheta ,\lambda )}$ is a
well--defined $\mathcal{B}(\mathbb{R}^{d})$--valued function of time. See (%
\ref{paramagnetic transport coefficient macro}). It is also named here
paramagnetic (in--phase) conductivity, because of Theorem \ref{main 1
copy(8)}.

The positivity of the heat production (Theorem \ref{main 1 copy(1)}), i.e.,
the 2nd law of thermodynamics, implies that the symmetric part of $\mathbf{%
\Xi }_{\mathrm{p}}$ is \emph{conditionally} positive definite or,
equivalently \cite[Proposition 4.4]{SSV}, \emph{negative definite} \emph{in
the sense of Schoenberg}. Observe that the symmetric part of $\mathbf{\Xi }_{%
\mathrm{p}}$ is only conditionally positive definite, and not positive
definite, because of the AC--condition (\ref{zero mean field}) on external
electric fields. Therefore, similar to \cite[Theorem 4.12]{SSV} for
complex--valued negative definite functions (in the sense of Schoenberg),
there is a L\'{e}vy--Khintchine representation of the symmetric part of the
(continuous) paramagnetic (in--phase) conductivity $\mathbf{\Xi }_{\mathrm{p}%
}$. The corresponding L\'{e}vy measure $\mu _{\mathrm{AC}}$ is the
AC--conductivity measure we are looking for. Note that the measure $\nu
^{2}\mu _{\mathrm{AC}}\left( \mathrm{d}\nu \right) $ on $\mathbb{R}%
\backslash \{0\}$ is a priori not a finite measure. However, if Conditions (%
\ref{(3.1) NS})--(\ref{(3.2) NS}) and (\ref{static potential0})--(\ref{(3.3)
NS}) hold and the map (\ref{map}) is a random invariant state, then such a
property holds true because $\mathbf{\Xi }_{\mathrm{p}}\in C^{2}\left(
\mathbb{R},\mathcal{B}(\mathbb{R}^{d})\right) $, by Theorem \ref{Theorem AC
conductivity measure copy(1)}.

Indeed, for any $\Upsilon \in \mathcal{B}(\mathbb{R}^{d})$, define its
symmetric and antisymmetric parts, w.r.t. to the canonical scalar product of
$\mathbb{R}^{d}$, respectively by%
\begin{equation}
\lbrack \Upsilon ]_{\mathrm{+}}:=\frac{1}{2}\left( \Upsilon +\Upsilon ^{%
\mathrm{t}}\right) \text{\qquad and\qquad }[\Upsilon ]_{\mathrm{-}}:=\frac{1%
}{2}\left( \Upsilon -\Upsilon ^{\mathrm{t}}\right) \text{ }.
\label{symm-antisymm Theta}
\end{equation}%
Here, $\Upsilon ^{\mathrm{t}}\in \mathcal{B}(\mathbb{R}^{d})$ stands for the
transpose of the operator $\Upsilon \in \mathcal{B}(\mathbb{R}^{d})$ (w.r.t.
the canonical scalar product of $\mathbb{R}^{d}$). Then we have:

\begin{satz}[{L\'{e}vy--Khintchine representation of $[\mathbf{\Xi }_{\mathrm{%
p}}]_{+}$}]
\label{Theorem AC conductivity measure}\mbox{
}\newline
Assume (\ref{(3.1) NS})--(\ref{(3.2) NS}), (\ref{static potential0})--(\ref%
{(3.3) NS}) and that the map (\ref{map}) is a random invariant state. For
any $\beta \in \mathbb{R}^{+}$ and $\vartheta ,\lambda \in \mathbb{R}%
_{0}^{+} $, there is a unique finite and symmetric $\mathcal{B}_{+}(\mathbb{R%
}^{d})$--valued measure $\mathbf{\mu }\equiv \mathbf{\mu }^{(\beta
,\vartheta ,\lambda )}$ on $\mathbb{R}$ such that, for any $t\in \mathbb{R}$%
,
\begin{equation*}
\lbrack \mathbf{\Xi }_{\mathrm{p}}\left( t\right) ]_{+}=-\frac{t^{2}}{2}%
\mathbf{\mu }\left( \left\{ 0\right\} \right) +\int\nolimits_{\mathbb{R}%
\backslash \left\{ 0\right\} }\left( \cos \left( t\nu \right) -1\right) \nu
^{-2}\mathbf{\mu }\left( \mathrm{d}\nu \right) \ .
\end{equation*}%
$\mathcal{B}_{+}(\mathbb{R}^{d})\subset \mathcal{B}(\mathbb{R}^{d})$ stands
for the set of positive linear operators on $\mathbb{R}^{d}$, i.e.,
symmetric operators w.r.t. to the canonical scalar product of $\mathbb{R}%
^{d} $ with positive eigenvalues.
\end{satz}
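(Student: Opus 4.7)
The plan is to read off the representation directly from the non-negativity of the heat production density (Theorem~\ref{main 1 copy(1)}~(Q), which is the 2nd law applied to cyclic processes) and then invoke the L\'evy--Khintchine theorem for continuous negative definite functions. First, I would specialize Theorem~\ref{main 1 copy(1)}~(Q) to space--homogeneous cyclic fields $E_{\mathbf{A}}(s,x) = \mathcal{E}_s \vec{w}$ on $[-1,1]^d$, for $\vec{w}\in\mathbb{R}^d$ and $\mathcal{E}\in C_0^\infty(\mathbb{R};\mathbb{R})$ supported on $[t_0,t_1]$. For any $t\geq t_1$, the AC condition \eqref{zero mean field} yields $\int_{t_0}^{t} E_{\mathbf{A}}(s,x)\,\mathrm{d}s = 0$, hence
\[
\int_{t_0}^{t} \langle E_{\mathbf{A}}(s,x),\,J_{\mathrm{p}}(t,x)\rangle_{\mathbb{R}^d}\,\mathrm{d}s \;=\; \Big\langle \int_{t_0}^{t} E_{\mathbf{A}}(s,x)\,\mathrm{d}s,\; J_{\mathrm{p}}(t,x)\Big\rangle_{\mathbb{R}^d} \;=\; 0,
\]
so Theorem~\ref{main 1 copy(1)}~(Q) reduces to $\mathbf{s}(\infty) = \mathfrak{i}_{\mathrm{p}}(\infty)$. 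A short computation directly from \eqref{backwards -1bis} using $\tau^{(\omega,\vartheta,\lambda)}$--invariance of the KMS state shows $\mathbf{\Xi}_{\mathrm{p}}(-t) = \mathbf{\Xi}_{\mathrm{p}}(t)^{\mathrm{t}}$, so $[\mathbf{\Xi}_{\mathrm{p}}(t)]_+$ is an even function of $t$ and $\langle \vec{w},[\mathbf{\Xi}_{\mathrm{p}}(t)]_- \vec{w}\rangle = 0$. Using this to symmetrize the time--ordered convolution gives
\[
\mathbf{s}(\infty) \;=\; \frac{1}{2}\int_{\mathbb{R}}\int_{\mathbb{R}} \mathcal{E}_s\,\mathcal{E}_r\, \langle \vec{w},\,[\mathbf{\Xi}_{\mathrm{p}}(s-r)]_+\,\vec{w}\rangle_{\mathbb{R}^d}\,\mathrm{d}r\,\mathrm{d}s\,.
\]

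The 2nd law then forces this expression to be non-negative for every cyclic $\mathcal{E}$, i.e.\ every smooth compactly supported real $\mathcal{E}$ with $\int\mathcal{E}=0$. This conditional positivity is precisely the statement that the continuous, even, real function $f_{\vec{w}}(t) := \langle \vec{w},[\mathbf{\Xi}_{\mathrm{p}}(t)]_+\vec{w}\rangle_{\mathbb{R}^d}$ has $-f_{\vec{w}}$ negative definite in the sense of Schoenberg; see \cite[Proposition 4.4]{SSV}. Since $f_{\vec{w}}(0) = 0$ by \eqref{backwards -1bis}, the L\'evy--Khintchine representation \cite[Theorem 4.12]{SSV} applied to $-f_{\vec{w}}$, combined with evenness (which eliminates the drift and turns the exponential into a cosine), produces a unique symmetric L\'evy measure $\tilde{\mu}_{\vec{w}}$ on $\mathbb{R}\setminus\{0\}$ and a Gaussian coefficient $\sigma_{\vec{w}}^2\geq 0$ with
\[
f_{\vec{w}}(t) \;=\; -\tfrac{\sigma_{\vec{w}}^2}{2}\,t^2 + \int_{\mathbb{R}\setminus\{0\}} (\cos(t\nu)-1)\,\tilde{\mu}_{\vec{w}}(\mathrm{d}\nu)\,.
\]
The regularity $\mathbf{\Xi}_{\mathrm{p}}\in C^2(\mathbb{R};\mathcal{B}(\mathbb{R}^d))$ from Theorem~\ref{Theorem AC conductivity measure copy(1)} gives $f_{\vec{w}}''(0)<\infty$, which upgrades the L\'evy condition $\int\min(1,\nu^2)\tilde{\mu}_{\vec{w}}(\mathrm{d}\nu)<\infty$ to $\int_{\mathbb{R}\setminus\{0\}}\nu^2\,\tilde{\mu}_{\vec{w}}(\mathrm{d}\nu)<\infty$. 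Setting $\mu_{\vec{w}}(\mathrm{d}\nu) := \nu^2\,\tilde{\mu}_{\vec{w}}(\mathrm{d}\nu) + \sigma_{\vec{w}}^2\,\delta_0(\mathrm{d}\nu)$ thus produces a finite symmetric scalar measure on $\mathbb{R}$ realizing the formula of the theorem for $\vec{w}$.

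Finally, I would obtain the $\mathcal{B}_+(\mathbb{R}^d)$--valued measure $\mathbf{\mu}$ by polarization. The scalar finite measures $\mu_{\vec{w}}$ extend to a bilinear form $(\vec{u},\vec{v})\mapsto \tfrac{1}{4}(\mu_{\vec{u}+\vec{v}} - \mu_{\vec{u}-\vec{v}})$ on Borel subsets of $\mathbb{R}$, yielding a unique $\mathcal{B}(\mathbb{R}^d)$--valued set function $\mathbf{\mu}$, symmetric in $\nu$ because each $\mu_{\vec{w}}$ is, finite because $\mu_{\vec{w}}(\mathbb{R})$ is uniformly controlled by $\Vert [\mathbf{\Xi}_{\mathrm{p}}''(0)]_+ \Vert_{\mathcal{B}(\mathbb{R}^d)}$, and $\mathcal{B}_+(\mathbb{R}^d)$--valued because $\mu_{\vec{w}}(A)\geq 0$ for every $\vec{w}$ and every Borel $A\subset\mathbb{R}$. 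Countable additivity follows from that of each $\mu_{\vec{w}}$ via polarization, and uniqueness of $\mathbf{\mu}$ descends from uniqueness of the scalar L\'evy--Khintchine decompositions.

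The main obstacle is the first step: reducing Joule's law, as given in Theorem~\ref{main 1 copy(1)}~(Q), to a manifestly symmetric convolution quadratic form in the AC field $\mathcal{E}$. Two ingredients have to be combined to pass from an apparently non--symmetric, time--ordered expression to a Schoenberg--negative--definite scalar kernel on all of $\mathbb{R}$: the cyclicity / AC condition, which discards the boundary term $\int\langle E(s),J_{\mathrm{p}}(t)\rangle\,\mathrm{d}s$, and the time--reversal identity $\mathbf{\Xi}_{\mathrm{p}}(-t) = \mathbf{\Xi}_{\mathrm{p}}(t)^{\mathrm{t}}$, which simultaneously extends the integration domain from $\{r\leq s\}$ to $\mathbb{R}^2$ and projects the kernel onto its symmetric part $[\mathbf{\Xi}_{\mathrm{p}}]_+$. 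Everything thereafter is a direct application of the standard representation theory of negative definite functions.
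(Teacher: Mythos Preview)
Your proposal is correct and follows the same overall strategy as the paper: derive conditional positive definiteness of $[\mathbf{\Xi}_{\mathrm{p}}]_+$ from the positivity of the heat production density for cyclic fields (the 2nd law), symmetrize via $\mathbf{\Xi}_{\mathrm{p}}(-t)=\mathbf{\Xi}_{\mathrm{p}}(t)^{\mathrm{t}}$, invoke a Bochner--type representation, and polarize to obtain the $\mathcal{B}_+(\mathbb{R}^d)$--valued measure. Your reduction via (Q) and the AC condition is in fact more explicit than the paper's rather terse citation of (p) alone; the underlying inequality is the same.

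The only genuine difference is in packaging the representation step. The paper uses the $C^2$ regularity \emph{first}: it integrates by parts to pass the conditional positivity to $-\partial_t^2[\mathbf{\Xi}_{\mathrm{p}}]_+$, which is then a bona fide weakly positive definite $\mathcal{B}(\mathbb{R}^d)$--valued function, applies a matrix--valued Bochner theorem (proved by polarization in Corollary~\ref{Bochner thm2}) to obtain $\mathbf{\mu}$ directly, and finally integrates twice. You instead apply the scalar L\'evy--Khintchine theorem to each $-f_{\vec w}$, use $C^2$ \emph{a posteriori} to upgrade the L\'evy condition to $\int \nu^2\,\tilde\mu_{\vec w}(\mathrm{d}\nu)<\infty$, and only then polarize. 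Both routes are valid; the paper's ordering has the mild advantage that polarization happens inside the standard Bochner framework (where bilinearity of $\vec w\mapsto\mu_{\vec w}$ follows immediately from injectivity of the Fourier transform), whereas in your route one should note that the parallelogram identity $f_{\vec u+\vec v}+f_{\vec u-\vec v}=2f_{\vec u}+2f_{\vec v}$ together with uniqueness of the L\'evy--Khintchine data forces $\vec w\mapsto\mu_{\vec w}$ to be a quadratic form, so that the polarized set function is genuinely bilinear and hence operator--valued.
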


\begin{proof}
For all $\varphi \in C_{0}^{\infty }(\mathbb{R};\mathbb{R}^{d})$, observe
that its derivative $\varphi ^{\prime }\in C_{0}^{\infty }(\mathbb{R};%
\mathbb{R}^{d})$ satisfies
\begin{equation}
\int\nolimits_{\mathbb{R}}\varphi ^{\prime }\left( s\right) \mathrm{d}s=0\in
\mathbb{R}^{d}\ .  \label{distrib sympa 2}
\end{equation}%
As a consequence, we infer from Theorem \ref{main 1 copy(1)} (p) and the
equality%
\begin{equation}
\mathbf{\Xi }_{\mathrm{p}}\left( -t\right) =\mathbf{\Xi }_{\mathrm{p}}\left(
t\right) ^{\mathrm{t}},\qquad t\in \mathbb{R}\ ,  \label{t moins t Sigma p}
\end{equation}%
that, for any $\varphi \in C_{0}^{\infty }(\mathbb{R};\mathbb{R}^{d})$,%
\begin{eqnarray}
\frac{1}{2}\int\nolimits_{\mathbb{R}}\mathrm{d}s\int\nolimits_{\mathbb{R}}%
\mathrm{d}t\ \left\langle \varphi ^{\prime }\left( s\right) ,[\mathbf{\Xi }_{%
\mathrm{p}}\left( t-s\right) ]_{+}\varphi ^{\prime }\left( t\right)
\right\rangle _{\mathbb{R}^{d}} &=&  \label{positivity bochner} \\
\int_{t_{0}}^{t_{1}}\mathrm{d}s\int_{t_{0}}^{s}\mathrm{d}t\ \left\langle
\varphi ^{\prime }\left( s\right) ,\mathbf{\Xi }_{\mathrm{p}}\left(
t-s\right) \varphi ^{\prime }\left( t\right) \right\rangle _{\mathbb{R}^{d}}
&\geq &0\ .  \notag
\end{eqnarray}%
Note that (\ref{t moins t Sigma p}) is a simple consequence of the
stationarity of KMS states. By Theorem \ref{Theorem AC conductivity measure
copy(1)}, if (\ref{(3.1) NS})--(\ref{(3.2) NS}) and (\ref{static potential0}%
)--(\ref{(3.3) NS}) hold and the map (\ref{map}) is a random invariant
state, then $[\mathbf{\Xi }_{\mathrm{p}}]_{+}\in C^{2}\left( \mathbb{R},%
\mathcal{B}(\mathbb{R}^{d})\right) $. By integration by parts, it follows
from (\ref{positivity bochner}) that
\begin{equation*}
\int\nolimits_{\mathbb{R}}\mathrm{d}s\int\nolimits_{\mathbb{R}}\mathrm{d}%
t\left\langle \varphi \left( s\right) ,\partial _{t}^{2}[\mathbf{\Xi }_{%
\mathrm{p}}\left( t-s\right) ]_{+}\varphi \left( t\right) \right\rangle _{%
\mathbb{R}^{d}}\leq 0\ .
\end{equation*}%
By (\ref{symm-antisymm Theta}) and (\ref{t moins t Sigma p}), $[\mathbf{\Xi }%
_{\mathrm{p}}(-t)]_{+}=[\mathbf{\Xi }_{\mathrm{p}}(t)]_{+}$. Hence,
\begin{equation*}
\partial _{t}^{2}[\mathbf{\Xi }_{\mathrm{p}}(-t)]_{+}=\partial _{t}^{2}[%
\mathbf{\Xi }_{\mathrm{p}}(t)]_{+}
\end{equation*}%
for any $t\in \mathbb{R}$. Therefore, $-\partial _{t}^{2}[\mathbf{\Xi }_{%
\mathrm{p}}]_{+}:\mathbb{R}\rightarrow \mathcal{B}(\mathbb{R}^{d})$ is a
weakly positive definite continuous map that is symmetric w.r.t. time
reversal. Moreover, for any $t\in \mathbb{R}$, $\partial _{t}^{2}[\mathbf{%
\Xi }_{\mathrm{p}}(t)]_{+}$ is (by definition) a symmetric operator w.r.t.
the canonical scalar product of $\mathbb{R}^{d}$. Then, we can apply
Corollary \ref{Bochner thm2} with $\Upsilon =-\partial _{t}^{2}[\mathbf{\Xi }%
_{\mathrm{p}}]_{+}$ to deduce the existence of a unique finite and symmetric
$\mathcal{B}_{+}(\mathbb{R}^{d})$--valued measure $\mathbf{\mu }$ on $%
\mathbb{R}$ such that%
\begin{equation}
\partial _{t}^{2}[\mathbf{\Xi }_{\mathrm{p}}(t)]_{+}=-\int\nolimits_{\mathbb{%
R}}\cos \left( t\nu \right) \mathbf{\mu }\left( \mathrm{d}\nu \right) \ .
\label{derive seconde}
\end{equation}%
Observe that $[\mathbf{\Xi }_{\mathrm{p}}(0)]_{+}=\partial _{t}[\mathbf{\Xi }%
_{\mathrm{p}}(0)]_{+}=0$. Therefore, by integrating this last expression
twice, we then obtain that%
\begin{equation}
\lbrack \mathbf{\Xi }_{\mathrm{p}}(t)]_{+}=-\frac{t^{2}}{2}\mathbf{\mu }%
\left( \left\{ 0\right\} \right) -\int\nolimits_{0}^{t}\mathrm{d}%
s\int\nolimits_{0}^{s}\mathrm{d}\alpha \int\nolimits_{\mathbb{R}\backslash
\left\{ 0\right\} }\mathbf{\mu }\left( \mathrm{d}\nu \right) \cos \left(
\alpha \nu \right) \ .  \label{derive seconde2}
\end{equation}%
Since $\mathbf{\mu }$ is a a finite measure on $\mathbb{R}$, we can apply
twice the Fubini (--Tonelli) theorem to deduce that
\begin{eqnarray*}
\lbrack \mathbf{\Xi }_{\mathrm{p}}(t)]_{+} &=&-\frac{t^{2}}{2}\mathbf{\mu }%
\left( \left\{ 0\right\} \right) -\int\nolimits_{0}^{t}\mathrm{d}%
s\int\nolimits_{\mathbb{R}\backslash \left\{ 0\right\} }\mathbf{\mu }\left(
\mathrm{d}\nu \right) \left( \nu ^{-1}\sin \left( s\nu \right) \right) \\
&=&-\frac{t^{2}}{2}\mathbf{\mu }\left( \left\{ 0\right\} \right)
+\int\nolimits_{\mathbb{R}\backslash \left\{ 0\right\} }\left( \cos \left(
t\nu \right) -1\right) \nu ^{-2}\mathbf{\mu }\left( \mathrm{d}\nu \right) \ .
\end{eqnarray*}%
All integrals are of course well--defined because $\sin \left( \nu \right) =%
\mathcal{O}(\nu )$ and $1-\cos \left( \nu \right) =\mathcal{O}(\nu ^{2})$,
as $\nu \rightarrow 0$.
\end{proof}

From Theorem \ref{main 1 copy(1)} it is easy to see that the restriction of $%
\nu ^{-2}\mathbf{\mu }\left( \mathrm{d}\nu \right) $ on $\mathbb{R}%
\backslash \{0\}$ quantifies the heat production per unit volume due to the
component of frequency $\nu \in \mathbb{R}\backslash \{0\}$ of the electric
field in accordance with Joule's law in the AC--regime. By (\ref{positivity
bochner}), note at this point that the antisymmetric component $[\mathbf{\Xi
}_{\mathrm{p}}]_{-}$ of the paramagnetic conductivity does not contribute to
heat production. Therefore, we define this measure to be the (in--phase)\
\emph{AC--cond%
\-%
uctivity measure}:

\begin{definition}[AC--conductivity measure]
\label{def second law2 copy(2)}\mbox{ }\newline
We name $\mu _{\mathrm{AC}}\equiv \mu _{\mathrm{AC}}^{(\beta ,\vartheta
,\lambda )}$, the restriction of $\nu ^{-2}\mathbf{\mu }\left( \mathrm{d}\nu
\right) $ to $\mathbb{R}\backslash \{0\}$, the (in--phase)\emph{\ }AC--cond%
\-%
uctivity measure.
\end{definition}

\begin{bemerkung}[AC--conductivity measure from the 2nd law]
\label{AC--Conductivity Measure from the 2nd Law}\mbox{
}\newline
AC--Conductivity measures are obtained here for thermal equilibrium states
at strictly positive temperatures, that are, $(\tau ^{(\omega ,\vartheta
,\lambda )},\beta )$--KMS states with $\beta \in (0,\infty )$. See Theorem %
\ref{main 1} and Section \ref{Section dynamics}. The use of KMS states is
however not strictly necessary to get such measures: Theorem \ref{Theorem AC
conductivity measure} also holds for passive states $\varrho ^{(\omega )}$,
provided the map $\omega \mapsto \varrho ^{(\omega )}$ is a random invariant
state (Definition \ref{def second law2 copy(1)}). In other words,
AC--conductivity measures result from the 2nd law, only. This will be
discussed in more detail in a review article in preparation. In fact, in the
present paper, we have only considered KMS states to stick to \cite{OhmV}
where heat productions $\mathbf{Q}^{(\omega ,\mathbf{A})}$ are considered
and known to be well--defined for KMS states, see \cite[Definition 3.1,
Theorem 3.2]{OhmV}.
\end{bemerkung}

The AC--conductivity measure does not vanish in general, see, e.g., \cite[%
Theorem 4.7]{OhmIV}. Moreover, in the non--interacting case, we show in \cite%
[Theorem 4.1]{OhmIV} that $\mathbf{\mu }\left( \left\{ 0\right\} \right) =0$
and $\mu _{\mathrm{AC}}$ is a \emph{finite} measure on $\mathbb{R}\backslash
\{0\}$:%
\begin{equation*}
\Vert \mu _{\mathrm{AC}}\Vert _{\mathcal{B}(\mathbb{R}^{d})}\left( \mathbb{R}%
\backslash \{0\}\right) =\int\nolimits_{\mathbb{R}\backslash \{0\}}\nu
^{-2}\Vert \mathbf{\mu }\Vert _{\mathcal{B}(\mathbb{R}^{d})}\left( \mathrm{d}%
\nu \right) <\infty \ .
\end{equation*}%
In particular, the measure $\mathbf{\mu }([-\nu ,\nu ])$ is $\mathcal{O}(\nu
^{2})$ in the limit $\nu \rightarrow 0^{+}$. These properties are directly
related with the finiteness of current Duhamel fluctuations in the limit of
large space scales, which is not clear in presence of interactions, see (\ref%
{current fluct}) and discussion thereafter.

At high frequencies, by finiteness of the positive measure $\mathbf{\mu }$,
the AC--conductivity measure satisfies
\begin{equation}
\mu _{\mathrm{AC}}\left( \left[ \nu ,\infty \right) \right) \leq \nu ^{-2}%
\mathbf{\mu }\left( \left[ \nu ,\infty \right) \right) \leq \nu ^{-2}\mathbf{%
\mu }\left( \mathbb{R}\right) \ ,\qquad \nu \in \mathbb{R}^{+}\ .
\label{AC behavior}
\end{equation}%
The same property of course holds for negative frequencies, by symmetry of $%
\mathbf{\mu }$ (w.r.t. $\nu $). We can compare this property with the
corresponding one of the celebrated Drude model.

Indeed, the (in--phase) AC--conductivity measure obtained from the Drude
model is absolutely continuous w.r.t. the Lebesgue measure with the function
\begin{equation}
\nu \mapsto \vartheta _{\mathrm{T}}\left( \nu \right) \sim \frac{\mathrm{T}}{%
1+\mathrm{T}^{2}\nu ^{2}}  \label{drude function}
\end{equation}%
being the corresponding Radon--Nikodym derivative. Here, the \emph{%
relaxation time} $\mathrm{T}>0$ is related to the mean time interval between
two collisions of a charged carrier with defects in the crystal. This
function is the Fourier transform of the in--phase conductivity
\begin{equation*}
t\mapsto D\exp \left( -\mathrm{T}^{-1}\left\vert t\right\vert \right) \ ,
\end{equation*}%
where $D\in \mathbb{R}^{+}$ is some strictly positive constant. See for
instance \cite[Section 1]{OhmIV} for more discussions.

At high frequencies, Drude's approach \emph{heavily overestimates} the
AC--conduc%
\-%
tivity measure $\mu _{\mathrm{AC}}$ obtained from the more realistic\ model
studied here. Indeed, we can infer from (\ref{AC behavior}) that, in the
limit $\nu \rightarrow \infty $ of high frequencies,
\begin{equation}
\lim_{\nu \rightarrow \infty }\left\{ \nu ^{2}\mu _{\mathrm{AC}}\left( \left[
\nu ,\infty \right) \right) \right\} =0\ ,  \label{saymptotics}
\end{equation}%
whereas, by (\ref{drude function}), the corresponding quantity for the Drude
model diverges in the same limit:%
\begin{equation*}
\nu ^{2}\int_{\nu }^{\infty }\vartheta _{\mathrm{T}}\left( z\right) \mathrm{d%
}z\sim \nu ^{2}\arctan \left( \frac{1}{\mathrm{T}\nu }\right) =\mathcal{O}%
\left( \mathrm{T}^{-1}\nu \right) \ ,\qquad \text{as }\nu \rightarrow \infty
\ .
\end{equation*}%
The same behavior as for the Drude model holds for the AC--conductivity
measure obtained from the Lorentz--Drude model.

Hence, the asymptotics (\ref{saymptotics}) motivates the use of the
relaxation time as an effective $\nu $--dependent parameter of the Drude
model, i.e., one replaces $\mathrm{T}$ with $\mathrm{T}(\nu )$ in (\ref%
{drude function}), as observed for instance in \cite{T}. Indeed, with this
Ansatz and the asymptotics (\ref{saymptotics}), either $\mathrm{T}(\nu )$
vanishes faster than $\nu ^{-3}$ or it diverges faster than $\nu $, as $\nu
\rightarrow \infty $. Note that experimental measurements seem to indicate
that
\begin{equation*}
\mathrm{T}(\nu )=\frac{\mathrm{T}(0)}{1+D\mathrm{T}(0)\nu ^{2}}
\end{equation*}%
in some metals. See for instance \cite{T} for one experimental evidence of
this fact and \cite{NS1,NS2,SE,Y} for theoretical studies.

The concept of relaxation time or mean free path \cite{meanfreepath} (of
electrons) in the Drude model and its extensions is very intuitive. However,
the microscopic interpretation of this classical notion is difficult, in
particular if one has to take $\mathrm{T}$ as a $\nu $--dependent parameter.
Quoting meanwhile \cite[p. 24]{Anderson-physics}:\bigskip

\noindent \textit{Physicists had to wait for the discovery of quantum
mechanics to understand why electrons apparently do not scatter from ions
that occupy regular lattice sites: The wave character of an electron causes
the electron to diffract from an ideal crystal. Resistance appears only when
electrons scatter from imperfections in the crystal. With that quantum
mechanical revision, the Drude model can still be used, but in the new
picture an electron is envisaged as zigzagging between impurities.} \bigskip

\noindent Indeed, the average length an electron travels before it seems to
collide with an ion or defects in the crystal is experimentally measured in
metals to be about \emph{two order of magnitude} larger than the lattice
constant. [Note however that defects in our model are allowed to appear on
all lattice sites via the probability measure $\mathfrak{a}_{\Omega }$, see
Section \ref{sect 2.1}.]

The high frequency asymptotics of the (in--phase) AC--conductivity discussed
above makes explicit further problems with this classical picture. Observe
moreover that if the interparticle interaction has stronger polynomial decay
than in the assumptions of Theorem \ref{Theorem AC conductivity measure},
then the asymptotics (\ref{saymptotics}) can be improved by replacing $\nu
^{2}$ with $\nu ^{k}$ for an integer $k>2$. To show this, one uses
Lieb--Robinson bounds for multi--commutators \cite[Theorems 3.8--3.9]%
{brupedraLR} of order $k+1>3$ to get $\mathbf{\Xi }_{\mathrm{p}}\in
C^{k}\left( \mathbb{R},\mathcal{B}(\mathbb{R}^{d})\right) $. See also Remark %
\ref{remark cond decay}. However, we expect the model to physically break
down for frequencies $\nu $ corresponding to wavelengths (of light) of the
order of the lattice spacing. For usual materials, it would dovetail with
the frequency range of hard X--rays.

Similar to \cite[Corollary 3.5]{OhmV}, we deduce now general properties of
the paramagnetic conductivity from Theorem \ref{Theorem AC conductivity
measure}:

\begin{koro}[{Properties of $[\mathbf{\Xi }_{\mathrm{p}}]_{+}$}]
\label{Theorem AC conductivity measure2}\mbox{
}\newline
Assume all conditions of Theorem \ref{Theorem AC conductivity measure} and
let $\beta \in \mathbb{R}^{+}$ and $\vartheta ,\lambda \in \mathbb{R}%
_{0}^{+} $. Then, $[\mathbf{\Xi }_{\mathrm{p}}]_{+}\in C^{2}\left( \mathbb{R}%
,\mathcal{B}(\mathbb{R}^{d})\right) $ and the following holds:\newline
\emph{(i)} Time--reversal symmetry of $[\mathbf{\Xi }_{\mathrm{p}}]_{+}$: $[%
\mathbf{\Xi }_{\mathrm{p}}(0)]_{+}=0$ and%
\begin{equation*}
\lbrack \mathbf{\Xi }_{\mathrm{p}}(-t)]_{+}=[\mathbf{\Xi }_{\mathrm{p}%
}(t)]_{+}\ ,\qquad t\in \mathbb{R}\ .
\end{equation*}%
\emph{(ii)} Negativity of $[\mathbf{\Xi }_{\mathrm{p}}]_{+}$:
\begin{equation*}
-[\mathbf{\Xi }_{\mathrm{p}}(t)]_{+}\in \mathcal{B}_{+}(\mathbb{R}^{d})\
,\qquad t\in \mathbb{R}\ .
\end{equation*}%
\emph{(iii)} Ces\`{a}ro mean of $[\mathbf{\Xi }_{\mathrm{p}}]_{+}$: If $%
\mathbf{\mu }\left( \left\{ 0\right\} \right) =0$ and $\Vert \mu _{\mathrm{AC%
}}\Vert _{\mathcal{B}(\mathbb{R}^{d})}\left( \mathbb{R}\backslash
\{0\}\right) <\infty $ then
\begin{equation*}
\underset{t\rightarrow \infty }{\lim }\ \frac{1}{t}\int_{0}^{t}[\mathbf{\Xi }%
_{\mathrm{p}}(s)]_{+}\mathrm{d}s=-\mu _{\mathrm{AC}}\left( \mathbb{R}%
\backslash \left\{ 0\right\} \right) \ .
\end{equation*}
\end{koro}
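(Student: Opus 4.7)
My plan is to read off all three assertions directly from the Lévy--Khintchine representation
\begin{equation*}
[\mathbf{\Xi}_{\mathrm{p}}(t)]_{+} = -\frac{t^{2}}{2}\mathbf{\mu}\left(\{0\}\right) + \int_{\mathbb{R}\setminus\{0\}}(\cos(t\nu)-1)\nu^{-2}\mathbf{\mu}(\mathrm{d}\nu)
\end{equation*}
established in Theorem \ref{Theorem AC conductivity measure}, together with the fact that $\mathbf{\mu}$ is a finite $\mathcal{B}_{+}(\mathbb{R}^{d})$--valued measure. The $C^{2}$ regularity of $[\mathbf{\Xi}_{\mathrm{p}}]_{+}$ is already a byproduct of the proof of that theorem (it is what allowed the passage from the weakly positive-definite second derivative to the measure $\mathbf{\mu}$), so I would simply quote it, noting that one may differentiate under the integral twice thanks to the finiteness of $\mathbf{\mu}$ and the uniform bound $|\cos(t\nu)-1|\nu^{-2}\leq t^{2}/2$ near $\nu=0$.

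For assertion (i), setting $t=0$ annihilates both $t^{2}/2$ and the integrand, giving $[\mathbf{\Xi}_{\mathrm{p}}(0)]_{+}=0$. The symmetry $[\mathbf{\Xi}_{\mathrm{p}}(-t)]_{+}=[\mathbf{\Xi}_{\mathrm{p}}(t)]_{+}$ is then immediate from $\cos(-t\nu)=\cos(t\nu)$ and $(-t)^{2}=t^{2}$. For assertion (ii), I would note that $\mathbf{\mu}(\{0\})\in\mathcal{B}_{+}(\mathbb{R}^{d})$, that the scalar weight $1-\cos(t\nu)\geq 0$ makes the integrand pointwise an element of $-\mathcal{B}_{+}(\mathbb{R}^{d})$, and that $\mathcal{B}_{+}(\mathbb{R}^{d})$ is closed under $\mathbf{\mu}$--integration. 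Summing gives $-[\mathbf{\Xi}_{\mathrm{p}}(t)]_{+}\in\mathcal{B}_{+}(\mathbb{R}^{d})$.

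For assertion (iii), under the hypotheses $\mathbf{\mu}(\{0\})=0$ and $\Vert\mu_{\mathrm{AC}}\Vert_{\mathcal{B}(\mathbb{R}^{d})}(\mathbb{R}\setminus\{0\})<\infty$, the Lévy--Khintchine formula reduces to
\begin{equation*}
[\mathbf{\Xi}_{\mathrm{p}}(s)]_{+} = \int_{\mathbb{R}\setminus\{0\}}(\cos(s\nu)-1)\,\mu_{\mathrm{AC}}(\mathrm{d}\nu)\ ,
\end{equation*}
with $\mu_{\mathrm{AC}}$ now a finite $\mathcal{B}_{+}(\mathbb{R}^{d})$--valued measure on $\mathbb{R}\setminus\{0\}$. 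I would apply Fubini to interchange the time average with the $\mu_{\mathrm{AC}}$--integration, yielding
\begin{equation*}
\frac{1}{t}\int_{0}^{t}[\mathbf{\Xi}_{\mathrm{p}}(s)]_{+}\mathrm{d}s = \int_{\mathbb{R}\setminus\{0\}}\left(\frac{\sin(t\nu)}{t\nu}-1\right)\mu_{\mathrm{AC}}(\mathrm{d}\nu)\ .
\end{equation*}
Dominated convergence (the integrand is bounded by $2$ uniformly, and converges pointwise to $-1$ as $t\to\infty$ on $\mathbb{R}\setminus\{0\}$) then gives the Cesàro limit $-\mu_{\mathrm{AC}}(\mathbb{R}\setminus\{0\})$.

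There is no real obstacle here: all three statements are essentially algebraic consequences of the representation theorem. The only minor points to verify carefully are that operator-valued integration preserves positivity in the sense of $\mathcal{B}_{+}(\mathbb{R}^{d})$ (which follows from pairing both sides with arbitrary $\xi\in\mathbb{R}^{d}$ to reduce to scalar positive measures) and the Fubini step in (iii), which is justified by finiteness of $\mu_{\mathrm{AC}}$.
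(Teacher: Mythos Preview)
Your proposal is correct and follows exactly the route the paper takes: the paper's proof simply states that (i)--(iii) are direct consequences of Theorem~\ref{Theorem AC conductivity measure}, the Fubini (--Tonelli) theorem, and Lebesgue's dominated convergence theorem. Your write-up is in fact more detailed than the paper's one-line justification, but the underlying argument is identical.
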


\begin{proof}
(i)--(iii) are direct consequences of Theorem \ref{Theorem AC conductivity
measure}, the Fubini (--Tonelli) theorem and Lebesgue's dominated
convergence theorem.
\end{proof}

Assuming (\ref{(3.1) NS})--(\ref{(3.2) NS}), note that, for any $l,\beta \in
\mathbb{R}^{+}$, $\omega \in \Omega $ and $\vartheta ,\lambda \in \mathbb{R}%
_{0}^{+}$, there exists\footnote{$\mu _{\mathrm{p},l}^{(\omega )}$ is a
finite measure because we take KMS states. For passive states, we only have
the existence of finite volume AC--conductivity measures, similar to Theorem %
\ref{Theorem AC conductivity measure} and Definition \ref{def second law2
copy(2)} for $l\in \mathbb{R}^{+}$.} a (generally non--zero) symmetric and
finite $\mathcal{B}_{+}(\mathbb{R}^{d})$--valued measure $\mu _{\mathrm{p}%
,l}^{(\omega )}\equiv \mu _{\mathrm{p},l}^{(\beta ,\omega ,\vartheta
,\lambda )}$ on $\mathbb{R}$ such that%
\begin{equation}
\lbrack \Xi _{\mathrm{p},l}^{(\omega )}(t)]_{+}=\int_{\mathbb{R}}\left( \cos
\left( t\nu \right) -1\right) \mu _{\mathrm{p},l}^{(\omega )}(\mathrm{d}\nu
)\ ,\qquad t\in \mathbb{R}\ .  \label{micro conduc}
\end{equation}%
Away from $\nu =0$ and as $l\rightarrow \infty $ the finite microscopic
conductivity measure $\mu _{\mathrm{p},l}^{(\omega )}$ converges in the weak$%
^{\ast }$--topology to the macroscopic AC--conductivity measure $\mu _{%
\mathrm{AC}}$:

\begin{satz}[From microscopic to macroscopic AC--conductivity measures]
\label{Theorem AC conductivity measure copy(2)}Assume Conditions (\ref{(3.1)
NS})--(\ref{(3.2) NS}), (\ref{static potential0}), (\ref{(3.3) NS}) with $%
\varsigma >3d$, and that the map (\ref{map}) is a random invariant state.
Let $\beta \in \mathbb{R}^{+}$ and $\vartheta ,\lambda \in \mathbb{R}%
_{0}^{+} $. Then, $\mathbf{\Xi }_{\mathrm{p}}\in C^{3}\left( \mathbb{R},%
\mathcal{B}(\mathbb{R}^{d})\right) $ and there is a measurable subset $%
\tilde{\Omega}\equiv \tilde{\Omega}^{(\beta ,\vartheta ,\lambda )}\subset
\Omega $ of full measure such that, for all $\omega \in \tilde{\Omega}$:%
\newline
\emph{(i)} Tightness: The sequence $\{\mu _{\mathrm{p},l}^{(\omega
)}\}_{l\in \mathbb{R}^{2}}$ of finite measures is tight.\newline
\emph{(ii)} Weak$^{\ast }$--convergence away from $\nu =0$: For any $\vec{w}%
_{1},\vec{w}_{2}\in \mathbb{R}^{d}$ and any bounded continuous function $f$
on $\mathbb{R}\backslash \{0\}$ with $0\notin \overline{\mathrm{supp}f}$,%
\begin{equation*}
\lim_{l\rightarrow \infty }\int_{\mathbb{R}}f\left( \nu \right) \nu
^{2}\langle \vec{w}_{1},\mu _{\mathrm{p},l}^{(\omega )}\left( \mathrm{d}\nu
\right) \vec{w}_{2}\rangle _{\mathbb{R}^{d}}=\int_{\mathbb{R}\backslash
\{0\}}f\left( \nu \right) \nu ^{2}\langle \vec{w}_{1},\mu _{\mathrm{AC}%
}\left( \mathrm{d}\nu \right) \vec{w}_{2}\rangle _{\mathbb{R}^{d}}\ .
\end{equation*}
\end{satz}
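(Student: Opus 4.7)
The plan is to build on Theorem~\ref{thm charged transport coefficient} (pointwise a.s.\ convergence $\Xi_{\mathrm{p},l}^{(\omega)}\to \mathbf{\Xi}_{\mathrm{p}}$ uniformly on compacta) and on the cosine representations~(\ref{micro conduc}) and~(\ref{derive seconde}), promoting the $C^{2}$--regularity of $\mathbf{\Xi}_{\mathrm{p}}$ from Theorem~\ref{Theorem AC conductivity measure copy(1)} to $C^{3}$ by means of Lieb--Robinson bounds for multi--commutators \emph{one order higher} than what was needed there. The strengthened decay $\varsigma>3d$ in~(\ref{(3.3) NS}) is exactly the input that permits this upgrade. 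I would first establish $\mathbf{\Xi}_{\mathrm{p}}\in C^{3}(\mathbb{R},\mathcal{B}(\mathbb{R}^{d}))$ together with the uniform--in--$(l,\omega)$ bound
\[
\sup_{t\in[-T,T]}\,\sum_{k=0}^{3}\,\|\partial_{t}^{k}\Xi_{\mathrm{p},l}^{(\omega)}(t)\|_{\mathcal{B}(\mathbb{R}^{d})}\ \le\ C(T)
\]
on each compact interval, valid on a full--measure subset $\tilde{\Omega}\subset\Omega$. Differentiating~(\ref{backwards -1bis}) $k$ times in $t$ (using $\partial_{s}\tau_{s}^{(\omega,\vartheta,\lambda)}=\tau_{s}^{(\omega,\vartheta,\lambda)}\circ\delta$) produces the KMS expectation of a $(k{+}1)$--fold iterated commutator of local observables with a Hamiltonian density. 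The $|\Lambda_{l}|^{-1}$--normalised double spatial sum in~(\ref{average microscopic AC--conductivity}) is then absolutely convergent uniformly in $l$ and $\omega$ by the multi--commutator estimates~\cite[Theorems~3.8--3.9]{brupedraLR} combined with~(\ref{(3.3) NS}) for $\varsigma>3d$; this is the $k=3$ analogue of the argument used for Theorem~\ref{Theorem AC conductivity measure copy(1)} and in~\cite[Section~5]{OhmV}, after intersecting with the full--measure sets of Theorem~\ref{thm charged transport coefficient}.

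For tightness in (i), note that $[\Xi_{\mathrm{p},l}^{(\omega)}(0)]_{+}=0$ and, by the time--reversal identity $[\Xi_{\mathrm{p},l}^{(\omega)}(-t)]_{+}=[\Xi_{\mathrm{p},l}^{(\omega)}(t)]_{+}$, also $\partial_{t}[\Xi_{\mathrm{p},l}^{(\omega)}(0)]_{+}=0$. Differentiating~(\ref{micro conduc}) twice at $t=0$ then gives
\[
\int_{\mathbb{R}}\nu^{2}\,\mu_{\mathrm{p},l}^{(\omega)}(\mathrm{d}\nu)\ =\ -\partial_{t}^{2}[\Xi_{\mathrm{p},l}^{(\omega)}(0)]_{+}\ \le\ C,
\]
uniformly in $l$ by Step~1. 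Markov's inequality yields $\mu_{\mathrm{p},l}^{(\omega)}([-R,R]^{c})\le CR^{-2}$ uniformly in $l$, which is the no--escape condition. A uniform bound on the total mass $\mu_{\mathrm{p},l}^{(\omega)}(\mathbb{R})$ follows from a Ces\`aro argument at the finite--volume level: for each $l$ the dynamics has purely discrete spectrum, so $[\Xi_{\mathrm{p},l}^{(\omega)}]_{+}$ is almost periodic, and averaging~(\ref{micro conduc}) over $t\in[0,T]$ and sending $T\to\infty$ identifies $\mu_{\mathrm{p},l}^{(\omega)}(\mathbb{R}\setminus\{0\})$ with $-\lim_{T\to\infty}T^{-1}\int_{0}^{T}[\Xi_{\mathrm{p},l}^{(\omega)}(t)]_{+}\,\mathrm{d}t$; this is uniformly bounded because $\sup_{l,t,\omega}|[\Xi_{\mathrm{p},l}^{(\omega)}(t)]_{+}|<\infty$ by the $k=0$ case of Step~1.

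For~(ii), set $\tilde{\mu}_{l}(\mathrm{d}\nu):=\nu^{2}\,\mu_{\mathrm{p},l}^{(\omega)}(\mathrm{d}\nu)$, a finite measure whose total mass is uniformly bounded in $l$ (Step~2) and whose cosine transform equals $-\partial_{t}^{2}[\Xi_{\mathrm{p},l}^{(\omega)}(t)]_{+}$. The uniform $C^{3}$--bound of Step~1 makes the family $\{\partial_{t}^{2}[\Xi_{\mathrm{p},l}^{(\omega)}]_{+}\}_{l}$ uniformly equicontinuous on compacta, so Arzel\`a--Ascoli combined with Theorem~\ref{thm charged transport coefficient} upgrades the convergence $\Xi_{\mathrm{p},l}^{(\omega)}\to\mathbf{\Xi}_{\mathrm{p}}$ to uniform convergence of the second derivatives on compacta almost surely on $\tilde{\Omega}$. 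In view of~(\ref{derive seconde}), this says that the cosine transforms of $\tilde{\mu}_{l}$ converge pointwise to that of $\mathbf{\mu}$. L\'evy's continuity theorem, together with the tightness of $\{\tilde{\mu}_{l}\}$ (uniformly bounded mass plus a tail estimate $\tilde{\mu}_{l}([-R,R]^{c})\le R^{-2}\int\nu^{4}\,\mu_{\mathrm{p},l}^{(\omega)}(\mathrm{d}\nu)$ obtained by a fourth--derivative analogue of Step~2, which is what forces $\varsigma>3d$ rather than $2d$), yields $\tilde{\mu}_{l}\to\mathbf{\mu}$ in the weak--$^{\ast}$ topology on $C_{b}(\mathbb{R})$. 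Since the test function $f$ in~(ii) is bounded continuous with $0\notin\overline{\mathrm{supp}\,f}$, it lies in $C_{b}(\mathbb{R})$, and
\[
\int_{\mathbb{R}}f(\nu)\,\nu^{2}\,\mu_{\mathrm{p},l}^{(\omega)}(\mathrm{d}\nu)\ =\ \int_{\mathbb{R}}f(\nu)\,\tilde{\mu}_{l}(\mathrm{d}\nu)\ \longrightarrow\ \int_{\mathbb{R}\setminus\{0\}}f(\nu)\,\mathbf{\mu}(\mathrm{d}\nu)\ =\ \int_{\mathbb{R}\setminus\{0\}}f(\nu)\,\nu^{2}\,\mu_{\mathrm{AC}}(\mathrm{d}\nu).
\]

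The main obstacle is Step~1, the uniform $C^{3}$--control of $\Xi_{\mathrm{p},l}^{(\omega)}$: each additional time derivative inserts one more commutator with the Hamiltonian inside the current--current correlator, and convergence of the resulting triply--nested spatial sum uniformly in $l$ and in the disorder $\omega$ hinges precisely on the strengthening $\varsigma>3d$ of~(\ref{(3.3) NS}) together with the multi--commutator version of the Lieb--Robinson bounds developed in~\cite[Sections~3.3,~4.3]{brupedraLR}. All remaining steps --- tightness, equicontinuity upgrades, L\'evy continuity, and localisation to $\overline{\mathrm{supp}\,f}\not\ni 0$ --- are comparatively routine consequences of this single technical input.
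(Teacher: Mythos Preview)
Your core strategy matches the paper's: pass to $\tilde{\mu}_l := \nu^2 \mu_{\mathrm{p},l}^{(\omega)}$, identify its cosine transform with $-\partial_t^2[\Xi_{\mathrm{p},l}^{(\omega)}]_+$, use the $C^3$--control coming from Lieb--Robinson bounds for $4$--fold multi--commutators (this is precisely where $\varsigma > 3d$ enters) to obtain a.s.\ convergence of the second derivatives on a single full--measure set, and conclude weak$^*$--convergence via L\'evy's continuity theorem. The paper does exactly this, citing \cite[Theorems~3.2.3 and~3.3.6]{Durrett}; your Arzel\`a--Ascoli bootstrap from Theorem~\ref{thm charged transport coefficient} is a legitimate variant of the paper's direct repetition of the ergodic argument at the level of $\partial_t^2$.

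Two side arguments in your write--up are flawed, though neither is actually needed. The almost--periodic step in (i) rests on a false premise: $\Xi_{\mathrm{p},l}^{(\omega)}$ is built from the \emph{infinite}--volume dynamics $\tau^{(\omega,\vartheta,\lambda)}$ and the full KMS state (only the spatial average runs over $\Lambda_l$), so there is no discrete spectrum and no almost periodicity in general. The tail estimate $\mu_{\mathrm{p},l}^{(\omega)}(\{|\nu|>R\}) \le R^{-2}\tilde{\mu}_l(\mathbb{R})$ that you already derive from Step~1 suffices for (i) without any uniform mass bound. Separately, the fourth--moment argument you invoke for tightness of $\{\tilde{\mu}_l\}_l$ is both unnecessary and inconsistent: unnecessary because L\'evy's continuity theorem delivers tightness as a \emph{conclusion} once the limiting transform $-\partial_t^2[\mathbf{\Xi}_{\mathrm{p}}]_+$ is continuous at $0$ (which follows already from $\mathbf{\Xi}_{\mathrm{p}}\in C^2$); inconsistent because a uniform bound on $\int \nu^4 \mu_{\mathrm{p},l}^{(\omega)}(\mathrm{d}\nu) = -\partial_t^4[\Xi_{\mathrm{p},l}^{(\omega)}(0)]_+$ would require $5$--fold commutator estimates and hence $\varsigma>4d$ by your own Step~1 scaling, not $\varsigma>3d$. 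The sole role of $\varsigma>3d$ is the equicontinuity of $\partial_t^2[\Xi_{\mathrm{p},l}^{(\omega)}]_+$, exactly as you correctly identify in Step~1 and in your closing paragraph.
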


\begin{proof}
Fix $\beta \in \mathbb{R}^{+}$ and $\vartheta ,\lambda \in \mathbb{R}%
_{0}^{+} $. Under assumptions of the theorem, $\mathbf{\Xi }_{\mathrm{p}}\in
C^{3}\left( \mathbb{R},\mathcal{B}(\mathbb{R}^{d})\right) $ and there is a
measurable subset $\tilde{\Omega}\equiv \tilde{\Omega}^{(\beta ,\vartheta
,\lambda )}\subset \Omega $ of full measure such that%
\begin{equation}
\partial _{t}^{2}[\Xi _{\mathrm{p},l}^{(\omega )}(t)]_{+}=\underset{%
l\rightarrow \infty }{\lim }\partial _{t}^{2}[\Xi _{\mathrm{p},l}^{(\omega
)}(t)]_{+}\ ,\qquad t\in \mathbb{R}\ .  \label{pointwise limit derive 2}
\end{equation}%
The proof is omitted as the arguments are very similar to those proving
Theorem \ref{thm charged transport coefficient} (p). Note only that
Condition (\ref{(3.3) NS}) with $\varsigma >3d$ is imposed to obtain
Lieb--Robinson bounds for multi--commutators \cite[Theorems 3.8--3.9]%
{brupedraLR} of order \emph{four}. This is needed to obtain the
equicontinuity of the family
\begin{equation*}
\left\{ t\mapsto \partial _{t}^{2}[\Xi _{\mathrm{p},l}^{(\omega
)}(t)]_{+}\right\} _{l\in \mathbb{R}^{+},\omega \in \Omega }
\end{equation*}%
of functions of time. See for instance Remark \ref{remark cond decay}, the
proofs of Theorems \ref{Theorem AC conductivity measure copy(1)} and \ref%
{main 1 copy(2)}.

Meanwhile, for $l\in \mathbb{R}^{+}$, we apply twice the Fubini (--Tonelli)
theorem to deduce that
\begin{equation}
\partial _{t}^{2}[\Xi _{\mathrm{p},l}^{(\omega )}(t)]_{+}=-\int\nolimits_{%
\mathbb{R}}\cos \left( t\nu \right) \mathbf{\mu }_{l}^{(\omega )}(\mathrm{d}%
\nu )\ ,\qquad t\in \mathbb{R}\ ,  \label{pointwise limit derive 3}
\end{equation}%
with $\mathbf{\mu }_{l}^{(\omega )}:=\nu ^{2}\mu _{\mathrm{p},l}^{(\omega )}$%
. Observe from (\ref{derive seconde}) and (\ref{pointwise limit derive 2})--(%
\ref{pointwise limit derive 3}) that $\mathbf{\mu }_{l}^{(\omega )}$ is a
finite measure and
\begin{equation}
\lim_{l\rightarrow \infty }\mathbf{\mu }_{l}^{(\omega )}(\mathbb{R}%
)=\lim_{l\rightarrow \infty }\partial _{t}^{2}[\Xi _{\mathrm{p},l}^{(\omega
)}(0)]_{+}=\partial _{t}^{2}[\mathbf{\Xi }_{\mathrm{p}}\left( 0\right) ]_{+}=%
\mathbf{\mu }(\mathbb{R})\in \mathcal{B}_{+}(\mathbb{R}^{d})\ .
\label{pointwise limit derive 4}
\end{equation}

Now, take any vector $\vec{w}\in \mathbb{R}^{d}$. Let $\mathbf{\mu }_{l,\vec{%
w}}^{(\omega )}$ and $\mathbf{\mu }_{\vec{w}}$ be the measures on $\mathbb{R}
$ respectively defined, for any Borel set $\mathcal{X}\subset \mathbb{R}$,
by
\begin{equation*}
\mathbf{\mu }_{l,\vec{w}}^{(\omega )}\left( \mathcal{X}\right) :=\langle
\vec{w},\mathbf{\mu }_{l}^{(\omega )}\left( \mathcal{X}\right) \vec{w}%
\rangle _{\mathbb{R}^{d}}\quad \text{and}\quad \mathbf{\mu }_{\vec{w}}\left(
\mathcal{X}\right) :=\langle \vec{w},\mathbf{\mu }\left( \mathcal{X}\right)
\vec{w}\rangle _{\mathbb{R}^{d}}\ .
\end{equation*}%
Assume w.l.o.g. that $\mathbf{\mu }_{\vec{w}}(\mathbb{R})>0$. Then, by
combining (\ref{derive seconde}) and (\ref{micro conduc})--(\ref{pointwise
limit derive 4}) with $\partial _{t}^{2}[\mathbf{\Xi }_{\mathrm{p}}]_{+}\in
C(\mathbb{R},\mathcal{B}(\mathbb{R}^{d}))$ (Theorem \ref{Theorem AC
conductivity measure copy(1)}) and \cite[Theorems 3.2.3 and 3.3.6]{Durrett},
we deduce that, on the subset $\tilde{\Omega}$ of full measure, the sequence
$\{\mathbf{\mu }_{l,\vec{w}}^{(\omega )}\}_{l\in \mathbb{R}^{2}}$ is tight
and converges in the weak$^{\ast }$--topology to $\mathbf{\mu }_{\vec{w}}$,
as $l\rightarrow \infty $. By Definition \ref{def second law2 copy(2)}, this
implies Assertion (ii) for $\vec{w}_{1}=\vec{w}_{2}$. Its extension to
arbitrary vectors $\vec{w}_{1},\vec{w}_{2}\in \mathbb{R}^{d}$ is a
consequence of the polarization identity, see, e.g., (\ref{polarization
identity}). Assertion (i) easily follows from the tightness of $\{\mathbf{%
\mu }_{l,\vec{w}}^{(\omega )}\}_{l\in \mathbb{R}^{2}}$ for $\vec{w}\in
\mathbb{R}^{d}$ and the polarization identity.
\end{proof}

\subsection{Time--Reversal Invariance of Random Equilibrium States\label%
{sect time ref}}

In this subsection we define time--reversal invariance of \emph{random}
fermion systems and derive its consequences on conductivity. We do not
define this symmetry of random systems as the almost surely time--reversal
invariance. But instead, we give a weaker, and hence more general, notion of
\textquotedblleft time--reversal invariance in average\textquotedblright .
This is done in the same spirit of what we do above to introduce translation
invariance for disordered systems at thermal equilibrium. See, for instance,
Definition \ref{def second law2 copy(1)}. In fact, by doing this, we allow
for a large class of random magnetic potentials.

Let $\mathcal{X}$ be a $C^{\ast }$--algebra with unity $\mathbf{1}$ and
assume the existence of a map $\Theta :\mathcal{X}\rightarrow \mathcal{X}$
with the following properties:

\begin{itemize}
\item $\Theta $ is antilinear and continuous.

\item $\Theta \left( \mathbf{1}\right) =\mathbf{1}$ and $\Theta \circ \Theta
=\mathrm{Id}_{\mathcal{X}}$.

\item $\Theta \left( B_{1}B_{2}\right) =\Theta \left( B_{1}\right) \Theta
\left( B_{2}\right) $ for all $B_{1},B_{2}\in \mathcal{X}$.

\item $\Theta \left( B^{\ast }\right) =\Theta \left( B\right) ^{\ast }$ for
all $B\in \mathcal{X}$.
\end{itemize}

\noindent Such a map is called a \emph{time--reversal} operation of the $%
C^{\ast }$--algebra $\mathcal{X}$. For $\mathcal{X}=\mathcal{U}$ (CAR $%
C^{\ast }$--algebra of the lattice $\mathfrak{L}$), there is a natural
time--reversal operation $\mathfrak{T}$, which is uniquely defined by the
condition%
\begin{equation}
\mathfrak{T}(a_{x})=a_{x}\qquad x\in \mathfrak{L}\ .  \label{tr 1}
\end{equation}%
See also \cite[Section 2.1.4]{OhmII}.

For any strongly continuous one--parameter group $\tau :=\{\tau _{t}\}_{t\in
{\mathbb{R}}}$ of $\ast $--auto%
\-%
morphisms of $\mathcal{X}$, the family $\tau ^{\Theta }:=\{\tau _{t}^{\Theta
}\}_{t\in {\mathbb{R}}}$ defined by
\begin{equation*}
\tau _{t}^{\Theta }:=\Theta \circ \tau _{t}\circ \Theta \ ,\qquad t\in {%
\mathbb{R}}\ ,
\end{equation*}%
is again a strongly continuous one--parameter group of automorphisms.
Similarly, for any state $\rho \in \mathcal{X}^{\ast }$, the linear
functional $\rho ^{\Theta }$ defined by%
\begin{equation*}
\rho ^{\Theta }\left( B\right) =\overline{\rho \circ \Theta \left( B\right) }%
\ ,\qquad B\in \mathcal{X}\ ,
\end{equation*}%
is again a state. We say that $\tau $\ and $\rho $ are \emph{time--reversal
invariant} w.r.t. $\Theta $ if they satisfy $\tau _{t}^{\Theta }=\tau _{-t}$
for all $t\in \mathbb{R}$ and $\rho ^{\Theta }=\rho $. If $\tau $\ is
time--reversal invariant then, for all $\beta \in \mathbb{R}^{+}$, there is
at least one time--reversal invariant $(\tau ,\beta )$--KMS state $\varrho
\in \mathcal{X}^{\ast }$, provided the set of $(\tau ,\beta )$--KMS states
is not empty. This follows from the convexity of the set of KMS states, see
\cite[Lemma A.12]{OhmII}.

Now, we introduce a notion of time--reversal invariance for the random
system considered here. If $\Psi $ is an interaction, we call it
time--reversal invariant whenever
\begin{equation*}
\mathfrak{T}(\Psi _{\Lambda })=\Psi _{\Lambda },\qquad \Lambda \in \mathcal{P%
}_{f}(\mathfrak{L})\ .
\end{equation*}%
For any $\omega =(\omega _{1},\omega _{2})\in \Omega $, we define $\overline{%
\omega }:=(\omega _{1},\overline{\omega _{2}})\in \Omega $, where%
\begin{equation*}
\overline{\omega _{2}}(b):=\overline{\omega _{2}(b)},\qquad b\in \mathfrak{b}%
\ .
\end{equation*}%
We say that the random state (\ref{map}) is time--reversal symmetric if, for
all $\omega \in \Omega $,%
\begin{equation*}
\rho ^{(\beta ,\overline{\omega },\vartheta ,\lambda )}=[\rho ^{(\beta
,\omega ,\vartheta ,\lambda )}]^{\mathfrak{T}}\ .
\end{equation*}%
Similarly, we call the random dynamic (\ref{tho}) on $\mathcal{U}$
time--reversal symmetric if, for all $\omega \in \Omega $,%
\begin{equation*}
\tau _{-t}^{(\overline{\omega },\vartheta ,\lambda )}=\mathfrak{T}\circ \tau
_{t}^{(\omega ,\vartheta ,\lambda )}\circ \mathfrak{T}\ ,\qquad t\in {%
\mathbb{R}}\ .
\end{equation*}%
It is not difficult to see that, if the interparticle interaction $\Psi ^{%
\mathrm{IP}}$ is time--reversal invariant then the (unperturbed)\ random
dynamics $\tau ^{(\omega ,\vartheta ,\lambda )}$ is time--reversal symmetric
in the above sense for any $\vartheta ,\lambda \in \mathbb{R}_{0}^{+}$.
Further, we say that the $\Omega $--valued random variable $\omega $, the
distribution of which is given by the probability space $(\Omega ,\mathfrak{A%
}_{\Omega },\mathfrak{a}_{\Omega })$, is time--reversal invariant if the map
$\omega \mapsto \overline{\omega }$ is measurable w.r.t. $\mathfrak{A}%
_{\Omega }$ and preserves the measure $\mathfrak{a}_{\Omega }$.

Like in the case of translation invariance, the existence of random
invariant thermal equilibrium states which are time--reversal symmetric in
the above sense is not clear in general. If the $(\tau ^{(\omega ,\vartheta
,\lambda )},\beta )$--KMS state is unique and $\Psi ^{\mathrm{IP}}$ is
time--reversal invariant, then the (unique) map (\ref{map}) is a random
state which is time--reversal symmetric. The arguments to prove this are
similar to the ones used in the proof of \cite[Lemma A.12]{OhmII}. As
already discussed, if (\ref{static potential0}) holds then (\ref{map}) is,
moreover, a random invariant state. See Section \ref{Section dynamics}.

Time--reversal invariance implies the following important properties of
charge transport coefficients related to the models considered here:

\begin{satz}[Consequences of time--reversal symmetry]
\mbox{
}\newline
Assume (\ref{(3.1) NS})--(\ref{(3.2) NS}), (\ref{static potential0})--(\ref%
{(3.3) NS}), time--reversal invariance of the interparticle interaction $%
\Psi ^{\mathrm{IP}}$ and the ($\Omega $--valued) random variable $\omega $,
as well as that the map (\ref{map}) is a random invariant state which is
time--reversal symmetric. Let $\beta \in \mathbb{R}^{+}$ and $\vartheta
,\lambda \in \mathbb{R}_{0}^{+}$. Then, the following assertions hold true:%
\newline
\emph{(th)} Vanishing thermal current density:%
\begin{equation*}
\underset{l\rightarrow \infty }{\lim }\left\{ \mathbb{J}_{\mathrm{th}%
}^{(\omega ,l)}\right\} _{k}=\mathbb{E}\left[ \varrho ^{(\beta ,\omega
,\vartheta ,\lambda )}(I_{\left( e_{k},0\right) }^{(\omega ,\vartheta )})%
\right] =0\ ,\qquad k\in \{1,\ldots ,d\}\ .
\end{equation*}%
\emph{(p)} Vanishing antisymmetric part of the paramagnetic conductivity:\
\begin{equation*}
\lbrack \mathbf{\Xi }_{\mathrm{p}}\left( t\right) ]_{-}=0\ ,\qquad t\in
\mathbb{R}\ .
\end{equation*}
\end{satz}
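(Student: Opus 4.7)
The plan is to exploit the interplay between the antilinear time-reversal $\mathfrak{T}$, conjugation $\omega_2 \mapsto \overline{\omega_2}$ on the disorder, and the KMS stationarity of $\varrho^{(\beta,\omega,\vartheta,\lambda)}$. The first step I would carry out, common to both assertions, is to determine the action of $\mathfrak{T}$ on the current observables. From the explicit form of $\Delta_{\omega,\vartheta}$ one checks that
\[
\langle \mathfrak{e}_x,\Delta_{\overline{\omega},\vartheta}\mathfrak{e}_y\rangle = \overline{\langle \mathfrak{e}_x,\Delta_{\omega,\vartheta}\mathfrak{e}_y\rangle}\ ,
\]
and, since $\mathfrak{T}(a_x^{\ast}a_y)=a_x^{\ast}a_y$ while $\mathfrak{T}$ is antilinear and preserves self-adjointness, one obtains the key identity
\[
\mathfrak{T}\bigl(I_{\mathbf{x}}^{(\omega,\vartheta)}\bigr) = -I_{\mathbf{x}}^{(\overline{\omega},\vartheta)}\ .
\]

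For assertion (th), I would plug this into the defining property of a time-reversal symmetric random state, $\varrho^{(\beta,\overline{\omega},\vartheta,\lambda)}(B)=\overline{\varrho^{(\beta,\omega,\vartheta,\lambda)}(\mathfrak{T}(B))}$. Applied to the self-adjoint observable $I_{(e_k,0)}^{(\overline{\omega},\vartheta)}$, whose image under $\mathfrak{T}$ equals $-I_{(e_k,0)}^{(\omega,\vartheta)}$, one reads off
\[
\varrho^{(\beta,\overline{\omega},\vartheta,\lambda)}\bigl(I_{(e_k,0)}^{(\overline{\omega},\vartheta)}\bigr) = -\,\varrho^{(\beta,\omega,\vartheta,\lambda)}\bigl(I_{(e_k,0)}^{(\omega,\vartheta)}\bigr)\ .
\]
Taking $\mathbb{E}[\,\cdot\,]$ and using measure-preservation of $\omega\mapsto\overline{\omega}$ forces the expectation to equal its own negative, hence to vanish. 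The identification with $\lim_{l\to\infty}\{\mathbb{J}_{\mathrm{th}}^{(\omega,l)}\}_k$ is exactly Theorem~\ref{main 1 copy(8)} (th).

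For assertion (p), I would transport the same identity through the two-point function. Using the symmetry $\tau_t^{(\omega,\vartheta,\lambda)}=\mathfrak{T}\circ\tau_{-t}^{(\overline{\omega},\vartheta,\lambda)}\circ\mathfrak{T}$, antilinearity of $\mathfrak{T}$ (which turns $i$ into $-i$), the fact that $\mathfrak{T}$ is a $\ast$-automorphism at the level of commutators, and time-reversal of the state, a direct computation yields
\[
\varrho^{(\beta,\omega,\vartheta,\lambda)}\!\bigl(i[I_{\mathbf{y}}^{(\omega,\vartheta)},\tau_s^{(\omega,\vartheta,\lambda)}(I_{\mathbf{x}}^{(\omega,\vartheta)})]\bigr)
= -\,\varrho^{(\beta,\overline{\omega},\vartheta,\lambda)}\!\bigl(i[I_{\mathbf{y}}^{(\overline{\omega},\vartheta)},\tau_{-s}^{(\overline{\omega},\vartheta,\lambda)}(I_{\mathbf{x}}^{(\overline{\omega},\vartheta)})]\bigr)\ .
\]
Then I would use the KMS stationarity of $\varrho^{(\beta,\overline{\omega},\vartheta,\lambda)}$ under $\tau^{(\overline{\omega},\vartheta,\lambda)}$ to swap $\mathbf{x}$ and $\mathbf{y}$ inside the commutator, and a change of variable $s\mapsto -s$ in the time integral defining $\sigma_{\mathrm{p}}$, arriving at the symmetry
\[
\sigma_{\mathrm{p}}^{(\omega)}(\mathbf{x},\mathbf{y},t) = \sigma_{\mathrm{p}}^{(\overline{\omega})}(\mathbf{y},\mathbf{x},t)\ .
\]
Inserting this into the definition~\eqref{average microscopic AC--conductivity} of $\Xi_{\mathrm{p},l}^{(\omega)}$ and relabeling the summation variables $x\leftrightarrow y$ gives $\{\Xi_{\mathrm{p},l}^{(\omega)}(t)\}_{q,k}=\{\Xi_{\mathrm{p},l}^{(\overline{\omega})}(t)\}_{k,q}$. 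Averaging over $\omega$ and using measure-preservation of $\omega\mapsto\overline{\omega}$ yields $\mathbb{E}[\{\Xi_{\mathrm{p},l}^{(\omega)}(t)\}_{q,k}]=\mathbb{E}[\{\Xi_{\mathrm{p},l}^{(\omega)}(t)\}_{k,q}]$, and the limit $l\to\infty$ gives the symmetry of $\mathbf{\Xi}_{\mathrm{p}}(t)$ and hence $[\mathbf{\Xi}_{\mathrm{p}}(t)]_{-}=0$.

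The main obstacle is purely bookkeeping: correctly tracking the signs coming from antilinearity of $\mathfrak{T}$, the opposite sign in $I_{\mathbf{x}}^{(\omega,\vartheta)}\mapsto -I_{\mathbf{x}}^{(\overline{\omega},\vartheta)}$, and the time reflection $\tau_t\mapsto\tau_{-t}$, while simultaneously keeping the disorder argument consistent between $\omega$ and $\overline{\omega}$. The analytic input (existence of the limits, convergence of $\Xi_{\mathrm{p},l}^{(\omega)}\to\mathbf{\Xi}_{\mathrm{p}}$) is already provided by Theorem~\ref{thm charged transport coefficient}, so no new estimate beyond those in~\cite{OhmV,brupedraLR} is required.
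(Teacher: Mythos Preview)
Your proposal is correct and follows essentially the same route as the paper's proof: both assertions rest on the identity $\mathfrak{T}(I_{\mathbf{x}}^{(\omega,\vartheta)})=-I_{\mathbf{x}}^{(\overline{\omega},\vartheta)}$ combined with time--reversal symmetry of the random state, of the dynamics (from the assumption on $\Psi^{\mathrm{IP}}$), measure--preservation of $\omega\mapsto\overline{\omega}$, and KMS stationarity. One minor remark: once you have the identity with $\tau_{-s}^{(\overline{\omega},\vartheta,\lambda)}$, stationarity alone (applying $\tau_s^{(\overline{\omega},\vartheta,\lambda)}$ inside the state) already yields $\sigma_{\mathrm{p}}^{(\omega)}(\mathbf{x},\mathbf{y},t)=\sigma_{\mathrm{p}}^{(\overline{\omega})}(\mathbf{y},\mathbf{x},t)$, so the additional change of variable $s\mapsto -s$ in the time integral is unnecessary.
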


\begin{proof}
(th)\emph{\ }directly follows form Theorem \ref{main 1 copy(8)} (th), the
equality $\mathfrak{T}(I_{\left( e_{k},0\right) }^{(\omega ,\vartheta
)})=-I_{\left( e_{k},0\right) }^{(\overline{\omega },\vartheta )}$, which is
a consequence of (\ref{tr 1}), $\varrho ^{(\beta ,\omega ,\vartheta ,\lambda
)}(I_{\left( e_{k},0\right) }^{(\omega ,\vartheta )})\in \mathbb{R}$, the
time--reversal invariance of the random variable $\omega $ and the
time--reversal symmetry of the random state $\varrho ^{(\beta ,\omega
,\vartheta ,\lambda )}$. These facts combined with the time--reversal
symmetry of $\tau ^{(\omega ,\vartheta ,\lambda )}$, which follows from the
assumptions on $\Psi ^{\mathrm{IP}}$, and the stationarity of KMS states
imply (p).
\end{proof}

\section{Epilogue: AC--Conductivity and L\'{e}vy Processes\label{epilogue}}

By Theorem \ref{Theorem AC conductivity measure}, charge transport
properties of interacting fermions in disordered media are governed by a L%
\'{e}vy measure. This suggests an alternative effective description of the
phenomenon of linear conductivity by using L\'{e}vy Processes in Fourier
space. It is a very interesting mathematical result since L\'{e}vy
statistics turn out to efficiently describe quantum phenomena. Indeed,
quantum Monte-Carlo methods have already permitted to observe that certain
quantum processes obey L\'{e}vy statistics. Moreover, a relation between
quantum systems and (classical) stochastic processes has also been
experimentally observed: For instance, in quantum optics, the (subrecoil)
cooling process of atom in presence of laser radiation can be modeled by a L%
\'{e}vy process \cite{9780511755668} with so--called quantum jumps in
momentum space (w.r.t. \emph{space} variables). This gives very good
agreements with experimental measurements, see \cite[Chap. 8]{9780511755668}%
. However, as far as we know, there is no rigorous derivation of this fact
from quantum mechanics. Thus, this section is written to propose an approach
to that issue and suggest a L\'{e}vy processes that could be behind the
phenomenon of linear conductivity.

For simplicity, we assume that the paramagnetic conductivity $\mathbf{\Xi }_{%
\mathrm{p}}$ is of the form $\mathbf{\sigma }_{\mathrm{p}}\mathbf{1}_{%
\mathbb{R}^{d}}$ with $\mathbf{\sigma }_{\mathrm{p}}$ being a real--valued
function of time. In particular, $\mathbf{[\Xi }_{\mathrm{p}}]_{-}=0$ and,
by (\ref{t moins t Sigma p}), $\mathbf{\sigma }_{\mathrm{p}}(t)=\mathbf{%
\sigma }_{\mathrm{p}}(-t)$ for any $t\in \mathbb{R}$ with $\mathbf{\sigma }_{%
\mathrm{p}}(0)=0$. This property of $\mathbf{\Xi }_{\mathrm{p}}$ holds true,
for instance, if the random variables $\left\{ \left( \omega _{1}\left(
x\right) ,\omega _{2}\left( b\right) \right) \right\} _{x\in \mathfrak{L}%
,b\in \mathfrak{b}}$ are independently and identically distributed and the
interparticle interaction $\Psi ^{\mathrm{IP}}\in \mathcal{W}$ has the form%
\begin{equation*}
\Psi _{\Lambda }^{\mathrm{IP}}=v\left( \left\vert x-y\right\vert \right)
a_{x}^{\ast }a_{x}a_{y}^{\ast }a_{y}
\end{equation*}%
whenever $\Lambda =\left\{ x,y\right\} $ for $x,y\in \mathfrak{L}$, and $%
\Psi _{\Lambda }^{\mathrm{IP}}=0$ when $\left\vert \Lambda \right\vert >2$.
Here, $v\left( r\right) :\mathbb{R}_{0}^{+}\rightarrow \mathbb{R}^{+}$ is a
real--valued function such that%
\begin{equation*}
\underset{r\in \mathbb{R}_{0}^{+}}{\sup }\left\{ \frac{v\left( r\right) }{%
\mathbf{F}\left( r\right) }\right\} <\infty \ .
\end{equation*}%
See \cite[Lemma 5.23]{OhmIII} for more details.

In this case, by Theorem \ref{Theorem AC conductivity measure}, for any $%
\beta \in \mathbb{R}^{+}$ and $\vartheta ,\lambda \in \mathbb{R}_{0}^{+}$,
there is a unique finite and symmetric $\mathbb{R}$--valued measure $%
\mathfrak{m}_{\mathrm{AC}}$ on $\mathbb{R}\backslash \left\{ 0\right\} $
such that, for any $\alpha \in \mathbb{R}$,
\begin{equation}
\mathbf{\sigma }_{\mathrm{p}}\left( \alpha \right) =-\frac{\alpha ^{2}}{2}%
D_{\left\{ 0\right\} }+\int\nolimits_{\mathbb{R}\backslash \left\{ 0\right\}
}\left( \mathrm{e}^{i\alpha \nu }-1-i\alpha \nu \mathbf{1}\left[ \left\vert
\nu \right\vert <1\right] \right) \mathfrak{m}_{\mathrm{AC}}\left( \mathrm{d}%
\nu \right) \ ,  \label{Levy measure0}
\end{equation}%
where $\mathbf{\mu }\left( \left\{ 0\right\} \right) =D_{\left\{ 0\right\} }%
\mathbf{1}_{\mathbb{R}^{d}}$ with $D_{\left\{ 0\right\} }\in \mathbb{R}$ and
\begin{equation}
\int\nolimits_{\mathbb{R}}\left( 1\wedge \nu ^{2}\right) \mathfrak{m}_{%
\mathrm{AC}}\left( \mathrm{d}\nu \right) <\infty \ .  \label{Levy measure1}
\end{equation}%
Equations (\ref{Levy measure0})--(\ref{Levy measure1}) correspond to the L%
\'{e}vy--Khintchine representation of the function $\mathbf{\sigma }_{%
\mathrm{p}}$. Observe that $\mu _{\mathrm{AC}}=\mathfrak{m}_{\mathrm{AC}}%
\mathbf{1}_{\mathbb{R}^{d}}$ and by a slight abuse of terminology, we also
name $\mathfrak{m}_{\mathrm{AC}}$ the AC--conductivity measure.

Therefore, by \cite[Theorem 2.1.]{Kyp}, there is a probability space $%
(\Omega _{L},\mathfrak{F},\mathbb{P})$ on which a $\mathbb{R}$--valued L\'{e}%
vy process $\digamma =\left\{ \digamma _{t}:t\in \mathbb{R}_{0}^{+}\right\} $
with characteristic exponent $\mathbf{\sigma }_{\mathrm{p}}$ (up to a minus
sign) exists. More explicitly,
\begin{equation*}
\mathbb{E}_{\mathbb{P}}\left[ \exp \left( i\alpha \digamma _{t}\right) %
\right] =\exp \left( t\mathbf{\sigma }_{\mathrm{p}}\left( \alpha \right)
\right) \ ,\qquad \alpha \in \mathbb{R},\ t\in \mathbb{R}_{0}^{+}\ ,
\end{equation*}%
with $\mathbb{E}_{\mathbb{P}}[\ \cdot \ ]$ being the expectation value
associated with the probability measure $\mathbb{P}$. In this context, $%
\mathfrak{m}_{\mathrm{AC}}$ is called the L\'{e}vy measure of $\digamma $.
It describes the jumps of $\digamma $. In other words, similar to laser
cooling \cite{9780511755668}, such a L\'{e}vy Process describes quantum
jumps in Fourier space (but w.r.t. \emph{time} coordinates instead of
position coordinates as in sub-recoil laser cooling). For a comprehensive
account on L\'{e}vy processes, see for instance \cite{bertoin,Kyp} and
references therein.

By (\ref{Levy measure0}), $\digamma $ has no drift but a diffusion component
when $D_{\left\{ 0\right\} }>0$. There is also a Poisson random measure $N$
(see, e.g., \cite[Definition 2.3.]{Kyp}) distributed on
\begin{equation*}
\left( \mathbb{R}_{0}^{+}\times \mathbb{R}\backslash \left\{ 0\right\} ,%
\mathfrak{A}_{\mathbb{R}_{0}^{+}\times \mathbb{R}\backslash \left\{
0\right\} }\right) \ ,
\end{equation*}%
$\mathfrak{A}_{\mathbb{R}_{0}^{+}\times \mathbb{R}\backslash \left\{
0\right\} }$ being the Borel $\sigma $--algebra of $\mathbb{R}_{0}^{+}\times
\mathbb{R}\backslash \left\{ 0\right\} $, with characteristic measure (or
intensity)$\ \mathfrak{m}_{\mathrm{AC}}$ such that
\begin{equation}
\digamma _{t}=\sqrt{D_{\left\{ 0\right\} }}B_{t}+\int\nolimits_{0}^{t}\int%
\nolimits_{\left\vert \nu \right\vert \geq 1}\nu N\left( \mathrm{d}s\mathrm{d%
}\nu \right) +\int\nolimits_{0}^{t}\int\nolimits_{\left\vert \nu \right\vert
<1}\nu M\left( \mathrm{d}s\mathrm{d}\nu \right) \ ,\qquad t\in \mathbb{R}%
_{0}^{+}\ .  \label{levy processes}
\end{equation}%
Here, $M$ is the associated martingale measure%
\begin{equation*}
M\left( \mathrm{d}s\mathrm{d}\nu \right) :=N\left( \mathrm{d}s\mathrm{d}\nu
\right) -\mathrm{d}s\mathfrak{m}_{\mathrm{AC}}\left( \mathrm{d}\nu \right)
\end{equation*}%
and $B$ is a Brownian motion. The second term in the r.h.s. of (\ref{levy
processes}) is a compound Poisson process with rate $\mathfrak{m}_{\mathrm{AC%
}}(\mathbb{R}\backslash (-1,1))$ and jump distribution
\begin{equation*}
\left( \mathfrak{m}_{\mathrm{AC}}(\mathbb{R}\backslash (-1,1))\right) ^{-1}%
\mathfrak{m}_{\mathrm{AC}}\ ,
\end{equation*}%
provided $\mathfrak{m}_{\mathrm{AC}}(\mathbb{R}\backslash (-1,1))>0$. The
third term in the r.h.s. of (\ref{levy processes}) is another L\'{e}vy
process, which is a square integrable martingale on the same probability
space. It is the uniform limit $\varepsilon \rightarrow 0^{+}$ (along an
appropriate deterministic subsequence) on compacta of the compound Poisson
process with drift
\begin{equation*}
\int\nolimits_{0}^{t}\int\nolimits_{\varepsilon \leq \left\vert \nu
\right\vert <1}\nu N\left( \mathrm{d}s\mathrm{d}\nu \right)
-t\int\nolimits_{\varepsilon \leq \left\vert \nu \right\vert <1}\nu
\mathfrak{m}_{\mathrm{AC}}\left( \mathrm{d}\nu \right) \ ,\qquad t\in
\mathbb{R}_{0}^{+},\ \varepsilon \in \left( 0,1\right) \ .
\end{equation*}%
The limit L\'{e}vy process can also be seen as a superposition of an
infinite number of compound Poisson processes with drift, see for instance
\cite[Section 2.5]{Kyp}.

When
\begin{equation}
0<\mathfrak{m}_{\mathrm{AC}}\left( \mathbb{R}\backslash \left\{ 0\right\}
\right) <\infty \qquad \text{and}\qquad D_{\left\{ 0\right\} }=0\ ,
\label{finiteness}
\end{equation}%
$\digamma _{t}$ is a compound Poisson process with rate $\mathfrak{m}_{%
\mathrm{AC}}(\mathbb{R}\backslash \left\{ 0\right\} )$ and jump distribution
\begin{equation}
\left( \mathfrak{m}_{\mathrm{AC}}(\mathbb{R}\backslash \left\{ 0\right\}
)\right) ^{-1}\mathfrak{m}_{\mathrm{AC}}\ .  \label{probability measure}
\end{equation}%
See \cite[Lemma 2.13]{Kyp}. In particular, the AC--conductivity measure $%
\mathfrak{m}_{\mathrm{AC}}$ describes the jump structure of the symmetric L%
\'{e}vy process $\digamma $ in the frequency domain $\mathbb{R}$.

As an example, we can take the AC--conductivity measure obtained from the
Drude model. This measure is absolutely continuous w.r.t. the Lebesgue
measure with Radon--Nikodym density $\vartheta _{\mathrm{T}}$ defined by (%
\ref{drude function}). Recall that the relaxation time $\mathrm{T}>0$ is the
mean time interval between two collisions of a charged carrier with defects
in the crystal. For all $\mathrm{T}>0$, the measure of the full set $\mathbb{%
R}\backslash \left\{ 0\right\} $ equals $\Vert \vartheta _{\mathrm{T}}\Vert
_{1}=D$. In particular, the mean time between frequency jumps does not
depend on $\mathrm{T}>0$ in this new classical process. In the limit $%
\mathrm{T}\rightarrow 0^{+}$ of perfect isolator $\vartheta _{\mathrm{T}%
}\rightarrow 0$ uniformly on $\mathbb{R}$ while in the limit $\mathrm{T}%
\rightarrow \infty $ of perfect conductor $\vartheta _{\mathrm{T}%
}\rightarrow 0$ uniformly on $\mathbb{R}\backslash \lbrack -\varepsilon
,\varepsilon ]$ for any $\varepsilon >0$. Hence, by a similar expression to (%
\ref{probability measure}) for the Drude model and because of (\ref{drude
function}), the probability of large (frequency) jumps increases in the
limit $\mathrm{T}\rightarrow 0^{+}$ (isolator limit), but decreases when $%
\mathrm{T}\rightarrow \infty $ (conductor limit). The stochastic process $%
\digamma $ gives an alternative classical picture to electrical conduction.

\section{Technical Proofs\label{Section technical proof Ohm-VI}}

\subsection{Study of the Paramagnetic Conductivity}

Lieb--Robinson bounds and their extensions \cite{brupedraLR} to
multi--commutators are here pivotal mathematical tools.

For any $\vartheta _{0},\lambda \in \mathbb{R}_{0}^{+}$, $\vartheta \in
\lbrack 0,\vartheta _{0}]$, $\omega \in \Omega $, $t\in \mathbb{R}$, $%
B_{1}\in \mathcal{U}^{+}\cap \mathcal{U}_{\Lambda ^{(1)}}$ and $B_{2}\in
\mathcal{U}_{\Lambda ^{(2)}}$ with disjoint sets $\Lambda ^{(1)},\Lambda
^{(2)}\in \mathcal{P}_{f}(\mathfrak{L})$,
\begin{eqnarray}
\left\Vert \left[ \tau _{t}^{(\omega ,\vartheta ,\lambda )}\left(
B_{1}\right) ,B_{2}\right] \right\Vert _{\mathcal{U}} &\leq &2\mathbf{D}%
^{-1}\left\Vert B_{1}\right\Vert _{\mathcal{U}}\left\Vert B_{2}\right\Vert _{%
\mathcal{U}}\left( \mathrm{e}^{2\mathbf{D}\left\vert t\right\vert
D_{\vartheta _{0}}}-1\right)  \label{Lieb--Robinson bounds simplified} \\
&&\times \sum_{x\in \Lambda ^{(1)}}\sum_{y\in \Lambda ^{(2)}}\mathbf{F}%
\left( \left\vert x-y\right\vert \right) \ .  \notag
\end{eqnarray}%
This is the usual Lieb--Robinson bound. See, e.g., \cite[Theorem 2.1 (iii)]%
{OhmV}. Here, the real constant $D_{\vartheta _{0}}$ is defined, for any $%
\vartheta _{0}\in \mathbb{R}_{0}^{+}$, by%
\begin{equation}
D_{\vartheta _{0}}:=\sup \left\{ \left\Vert \Psi ^{(\omega ,\vartheta
)}\right\Vert _{\mathcal{W}}:\omega \in \Omega ,\ \vartheta \in \lbrack
0,\vartheta _{0}]\right\} <\infty \ .  \label{norm sup interaction}
\end{equation}%
See Sections \ref{sect 2.1} and \ref{Section dynamics}. As a consequence,
the paramagnetic transport coefficient $\sigma _{\mathrm{p}}^{(\omega )}$
defined by (\ref{backwards -1bis}) satisfies%
\begin{eqnarray}
\left\vert \sigma _{\mathrm{p}}^{(\omega )}\left( \mathbf{x},\mathbf{y}%
,t\right) \right\vert &\leq &8\mathbf{D}^{-1}\left( 1+\vartheta _{0}\right)
^{2}\left\vert t\right\vert \left( \mathrm{e}^{2\mathbf{D}\left\vert
t\right\vert D_{\vartheta _{0}}}-1\right)
\label{Lieb--Robinson bounds simplifiedbis} \\
&&\times \sum_{x\in \{x^{(1)},x^{(2)}\}}\sum_{y\in \{y^{(1)},y^{(2)}\}}%
\mathbf{F}\left( \left\vert x-y\right\vert \right)  \notag
\end{eqnarray}%
for $t\in \mathbb{R}$ and%
\begin{equation*}
\mathbf{x:=}(x^{(1)},x^{(2)})\in \mathfrak{L}^{2}\ ,\qquad \mathbf{y:=}%
(y^{(1)},y^{(2)})\in \mathfrak{L}^{2}
\end{equation*}%
with $\{x^{(1)},x^{(2)}\}\cap \{y^{(1)},y^{(2)}\}=\emptyset $. This
inequality implies the existence of the macroscopic paramagnetic
conductivity defined by (\ref{paramagnetic transport coefficient macro})
with its first derivative. The existence and continuity of its second
derivative follow from Lieb--Robinson bounds for multi--commutators \cite[%
Theorems 3.8--3.9]{brupedraLR} of order three:

\begin{satz}[Paramagnetic conductivity]
\label{Theorem AC conductivity measure copy(1)}\mbox{
}\newline
Assume (\ref{(3.1) NS})--(\ref{(3.2) NS}), (\ref{static potential0}) and
that the map (\ref{map}) is a random invariant state. Let $\beta \in \mathbb{%
R}^{+}$ and $\vartheta ,\lambda \in \mathbb{R}_{0}^{+}$. Then, there is $%
\mathbf{\Xi }_{\mathrm{p}}\in C^{1}(\mathbb{R},\mathcal{B}(\mathbb{R}^{d}))$
such that, uniformly for times $t$ on compacta,
\begin{equation*}
\mathbf{\Xi }_{\mathrm{p}}\left( t\right) =\underset{l\rightarrow \infty }{%
\lim }\mathbb{E}\left[ \Xi _{\mathrm{p},l}^{(\omega )}\left( t\right) \right]
\quad \text{and}\quad \partial _{t}\mathbf{\Xi }_{\mathrm{p}}\left( t\right)
=\underset{l\rightarrow \infty }{\lim }\ \partial _{t}\mathbb{E}\left[ \Xi _{%
\mathrm{p},l}^{(\omega )}\left( t\right) \right] \ .
\end{equation*}%
Moreover, if (\ref{(3.3) NS}) also holds, then $\mathbf{\Xi }_{\mathrm{p}%
}\in C^{2}(\mathbb{R},\mathcal{B}(\mathbb{R}^{d}))$ and, uniformly for times
$t$ on compacta,
\begin{equation*}
\partial _{t}^{2}\mathbf{\Xi }_{\mathrm{p}}\left( t\right) =\underset{%
l\rightarrow \infty }{\lim }\ \partial _{t}^{2}\mathbb{E}\left[ \Xi _{%
\mathrm{p},l}^{(\omega )}\left( t\right) \right] \ .
\end{equation*}
\end{satz}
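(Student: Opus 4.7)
The plan is to exploit translation invariance of the averaged state $\bar\varrho^{(\beta,\vartheta,\lambda)}$ to turn the double lattice sum in the definition of $\Xi_{\mathrm{p},l}^{(\omega)}$ into a Ces\`aro-type average over a single relative position, and then to establish the regularity of $\mathbf{\Xi}_{\mathrm{p}}$ by applying dominated convergence with $\ell^{1}$-summable majorants coming from Lieb--Robinson bounds. The $C^{0}$ and $C^{1}$ statements use the ordinary Lieb--Robinson bound; the $C^{2}$ statement is the main obstacle and requires the multi-commutator extensions from \cite{brupedraLR}.

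\emph{Reduction and $C^{0}$--$C^{1}$ convergence.} The dynamics $\tau^{(\omega,\vartheta,\lambda)}$, the currents $I_{\mathbf{x}}^{(\omega,\vartheta)}$ and the state $\varrho^{(\beta,\omega,\vartheta,\lambda)}$ all transform covariantly under the lattice translations $\chi_{x}$ (cf.\ (\ref{translation invariant})); since $\mathfrak{a}_{\Omega}$ is itself translation invariant, the expectation $(x,y)\mapsto\mathbb{E}[\sigma_{\mathrm{p}}^{(\omega)}(x+e_{q},x,y+e_{k},y,t)]$ depends only on $z:=y-x$. Setting $f_{k,q}(z,t):=\mathbb{E}[\sigma_{\mathrm{p}}^{(\omega)}(e_{q},0,z+e_{k},z,t)]$, a change of summation variable yields
$$\mathbb{E}\bigl[\{\Xi_{\mathrm{p},l}^{(\omega)}(t)\}_{k,q}\bigr]=\sum_{z\in\mathfrak{L}}\frac{|\Lambda_{l}\cap(\Lambda_{l}-z)|}{|\Lambda_{l}|}\,f_{k,q}(z,t),$$
where the prefactor is bounded by $1$ and tends to $1$ pointwise in $z$. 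The Lieb--Robinson estimate (\ref{Lieb--Robinson bounds simplifiedbis}) applied to the definition (\ref{backwards -1bis}) of $\sigma_{\mathrm{p}}^{(\omega)}$ gives, for all but the finitely many $z\in\{0,e_{q},-e_{k},e_{q}-e_{k}\}$, an upper bound of the form $|f_{k,q}(z,t)|\leq C_{t}\sum_{x\in\{0,e_{q}\},\,y\in\{z,z+e_{k}\}}\mathbf{F}(|x-y|)$ with $C_{t}$ locally bounded in $t$; together with (\ref{(3.1) NS}) this furnishes an $\ell^{1}(\mathfrak{L})$-majorant that is locally uniform in $t$. A three-$\epsilon$ argument (finite sum + tail control) then yields the claimed uniform-on-compacta convergence to $\{\mathbf{\Xi}_{\mathrm{p}}(t)\}_{k,q}=\sum_{z}f_{k,q}(z,t)$, which is continuous in $t$. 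Differentiating (\ref{backwards -1bis}) under the integral gives $\partial_{t}\sigma_{\mathrm{p}}^{(\omega)}(\mathbf{x},\mathbf{y},t)=\varrho^{(\beta,\omega,\vartheta,\lambda)}(i[I_{\mathbf{y}}^{(\omega,\vartheta)},\tau_{t}^{(\omega,\vartheta,\lambda)}(I_{\mathbf{x}}^{(\omega,\vartheta)})])$, to which the ordinary Lieb--Robinson bound (\ref{Lieb--Robinson bounds simplified}) again applies and produces the same sort of $\ell^{1}$ estimate (now without the extra factor of $|t|$); repeating the argument upgrades the result to $\mathbf{\Xi}_{\mathrm{p}}\in C^{1}$ with uniform-on-compacta convergence of $\partial_{t}\mathbb{E}[\Xi_{\mathrm{p},l}^{(\omega)}]$.

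\emph{The $C^{2}$ step.} A second derivative in $t$ produces, using the autonomy of the unperturbed dynamics,
$$\partial_{t}^{2}\sigma_{\mathrm{p}}^{(\omega)}(\mathbf{x},\mathbf{y},t)=\varrho^{(\beta,\omega,\vartheta,\lambda)}\!\Bigl(i\bigl[I_{\mathbf{y}}^{(\omega,\vartheta)},\tau_{t}^{(\omega,\vartheta,\lambda)}\bigl(\delta^{(\omega,\vartheta,\lambda)}(I_{\mathbf{x}}^{(\omega,\vartheta)})\bigr)\bigr]\Bigr),$$
in which the generator $\delta^{(\omega,\vartheta,\lambda)}$ is itself of commutator form with the formal infinite-volume Hamiltonian. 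The hard part is that the ordinary Lieb--Robinson bound is not enough to produce an $\ell^{1}(\mathfrak{L})$-majorant in $z$: naive expansion of $\delta(I_{\mathbf{x}})$ yields an unbounded sum over lattice sites that retains $\mathbf{F}$-decay only in one of the two variables needed for summability in $z$. This is precisely the obstruction that, as explained in the introduction, is resolved by the Lieb--Robinson bounds for multi-commutators of order three from \cite[Theorems~3.8--3.9]{brupedraLR}; together with the polynomial-decay hypothesis (\ref{(3.3) NS}) they yield an estimate $|\partial_{t}^{2}f_{k,q}(z,t)|\leq C_{t}\,g(z)$ with $g\in\ell^{1}(\mathfrak{L})$ and $C_{t}$ locally bounded in $t$. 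Plugging this into the same dominated-convergence / three-$\epsilon$ machinery as before delivers $\mathbf{\Xi}_{\mathrm{p}}\in C^{2}(\mathbb{R},\mathcal{B}(\mathbb{R}^{d}))$ together with the claimed uniform-on-compacta convergence of $\partial_{t}^{2}\mathbb{E}[\Xi_{\mathrm{p},l}^{(\omega)}]$.
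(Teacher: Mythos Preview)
Your proposal is correct and follows essentially the same route as the paper: both exploit translation invariance of the averaged state to rewrite $\mathbb{E}[\Xi_{\mathrm{p},l}^{(\omega)}(t)]$ as a single sum over relative positions with Ces\`aro weights $|\Lambda_{l}\cap(\Lambda_{l}-z)|/|\Lambda_{l}|$ (the paper's $\xi_{l}$), use the ordinary Lieb--Robinson bound (\ref{Lieb--Robinson bounds simplifiedbis}) for the $C^{0}$--$C^{1}$ majorants, and then invoke the multi-commutator Lieb--Robinson bounds of \cite{brupedraLR} together with (\ref{(3.3) NS}) to control the $\delta^{(\omega,\vartheta,\lambda)}(I_{\mathbf{x}})$ term in the second derivative. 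The paper's version spells out the $C^{2}$ estimate more explicitly via the shell decomposition $\mathcal{D}(x,m)$ and \cite[Corollary~3.10]{brupedraLR}, but the strategy is the same.
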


\begin{proof}
The three limits are proven in the same way. The first two only need (\ref%
{Lieb--Robinson bounds simplifiedbis}), which follows from usual
Lieb--Robinson bounds. By contrast, the last limit requires Lieb--Robinson
bounds for multi--commutators \cite[Theorems 3.8--3.9]{brupedraLR} of order
three and is thus technically more difficult than the other ones. As a
consequence, we focus on the limit of $\partial _{t}^{2}\mathbb{E}[\Xi _{%
\mathrm{p},l}^{(\omega )}\left( t\right) ]$, as $l\rightarrow \infty $, and
we omit the details for the first two.

Fix $\beta \in \mathbb{R}^{+}$, $\vartheta _{0},\lambda \in \mathbb{R}%
_{0}^{+}$, $\vartheta \in \lbrack 0,\vartheta _{0}]$, $k,q\in \{1,\ldots
,d\} $ and $t\in \mathbb{R}$. By \cite[Theorem 2.1 (i)]{OhmV}, $\{\tau
_{t}^{(\omega ,\vartheta ,\lambda )}\}_{t\in {\mathbb{R}}}$ is a $C_{0}$%
--group of $\ast $--auto%
\-%
morphisms with generator $\delta ^{(\omega ,\vartheta ,\lambda )}$. We thus
compute from Equations (\ref{backwards -1bis}) and (\ref{average microscopic
AC--conductivity}) that
\begin{equation*}
\partial _{t}^{2}\left\{ \mathbb{E}\left[ \Xi _{\mathrm{p},l}^{(\omega
)}\left( t\right) \right] \right\} _{k,q}=\frac{1}{\left\vert \Lambda
_{l}\right\vert }\underset{x,y\in \Lambda _{l}}{\sum }\mathbb{E}\left[
\varrho ^{(\beta ,\omega ,\vartheta ,\lambda )}\left( i[I_{\left(
y+e_{k},y\right) }^{(\omega ,\vartheta )},\tau _{t}^{(\omega ,\vartheta
,\lambda )}\circ \delta ^{(\omega ,\vartheta ,\lambda )}(I_{\mathbf{(}%
x+e_{q},x\mathbf{)}}^{(\omega ,\vartheta )})]\right) \right] \ .
\end{equation*}%
Then, since the map (\ref{map}) is a random invariant state and $\mathfrak{a}%
_{\Omega }$ is an ergodic measure while (\ref{static potential0}) holds, one
computes that%
\begin{eqnarray}
&&\partial _{t}^{2}\left\{ \mathbb{E}\left[ \Xi _{\mathrm{p},l}^{(\omega
)}\left( t\right) \right] \right\} _{k,q}  \notag \\
&=&\frac{1}{\left\vert \Lambda _{l}\right\vert }\underset{x,y\in \Lambda _{l}%
}{\sum }\mathbb{E}\left[ \varrho ^{(\beta ,\chi _{y}^{(\Omega )}\left(
\omega \right) ,\vartheta ,\lambda )}\left( i[I_{\left( e_{k},0\right)
}^{(\chi _{y}^{(\Omega )}\left( \omega \right) ,\vartheta )},\tau
_{t}^{(\chi _{y}^{(\Omega )}\left( \omega \right) ,\vartheta ,\lambda
)}\circ \delta ^{(\chi _{y}^{(\Omega )}\left( \omega \right) ,\vartheta
,\lambda )}(I_{\mathbf{(}x-y+e_{q},x-y\mathbf{)}}^{(\chi _{y}^{(\Omega
)}\left( \omega \right) ,\vartheta )})]\right) \right]   \notag \\
&=&\underset{x\in \mathfrak{L}}{\sum }\xi _{l}\left( x\right) \mathbb{E}%
\left[ \varrho ^{(\beta ,\omega ,\vartheta ,\lambda )}\left( i[I_{\left(
e_{k},0\right) }^{(\omega ,\vartheta )},\tau _{t}^{(\omega ,\vartheta
,\lambda )}\circ \delta ^{(\omega ,\vartheta ,\lambda )}(I_{\mathbf{(}%
x+e_{q},x\mathbf{)}}^{(\omega ,\vartheta )})]\right) \right]
\label{derivee seconde1}
\end{eqnarray}
with%
\begin{equation*}
\xi _{l}\left( x\right) :=\frac{1}{\left\vert \Lambda _{l}\right\vert }%
\underset{y\in \Lambda _{l}}{\sum }\mathbf{1}_{\left\{ x\in \Lambda
_{l}-y\right\} }\in \left[ 0,1\right] \ ,\qquad x\in \mathfrak{L}\ ,\ l\in
\mathbb{R}^{+}\ .
\end{equation*}%
For any $l\in \mathbb{R}^{+}$, the map $x\mapsto \xi _{l}\left( x\right) $
on $\mathfrak{L}$ has finite support and, for any $x\in \mathfrak{L}$,
\begin{equation}
\underset{l\rightarrow \infty }{\lim }\xi _{l}\left( x\right) =1\ .
\label{limit xi}
\end{equation}%
Paramagnetic current observables (\ref{current observable}) are obviously
local elements, i.e.,$\ I_{\mathbf{x}}^{(\omega ,\vartheta )}\in \mathcal{U}%
_{0}$ for any $\mathbf{x}\in \mathfrak{L}^{2}$, while from \cite[Theorem 2.1
(ii)]{OhmV}%
\begin{equation*}
\delta ^{(\omega ,\vartheta ,\lambda )}(B)=i\sum\limits_{z,u\in \mathfrak{L}%
}\langle \mathfrak{e}_{z},(\Delta _{\omega ,\vartheta }+\lambda V_{\omega })%
\mathfrak{e}_{u}\rangle \left[ a_{z}^{\ast }a_{u},B\right]
+i\sum\limits_{\Lambda \in \mathcal{P}_{f}(\mathfrak{L})}\left[ \Psi
_{\Lambda }^{\mathrm{IP}},B\right]
\end{equation*}%
for any $B\in \mathcal{U}_{0}$. Therefore, we get that%
\begin{eqnarray}
&&\underset{x\in \mathfrak{L}}{\sum }\xi _{l}\left( x\right) \left(
i[I_{\left( e_{k},0\right) }^{(\omega ,\vartheta )},\tau _{t}^{(\omega
,\vartheta ,\lambda )}\circ \delta ^{(\omega ,\vartheta ,\lambda )}(I_{%
\mathbf{(}x+e_{q},x\mathbf{)}}^{(\omega ,\vartheta )})]\right)  \notag \\
&=&i\underset{x\in \mathfrak{L}}{\sum }\xi _{l}\left( x\right)
\sum\limits_{z,u\in \mathfrak{L}}\langle \mathfrak{e}_{z},(\vartheta \Delta
_{\omega }+\lambda V_{\omega })\mathfrak{e}_{u}\rangle \lbrack I_{\left(
e_{k},0\right) }^{(\omega ,\vartheta )},\tau _{t}^{(\omega ,\vartheta
,\lambda )}([a_{z}^{\ast }a_{u},I_{\mathbf{(}x+e_{q},x\mathbf{)}}^{(\omega
,\vartheta )}])]  \notag \\
&&+i\underset{x\in \mathfrak{L}}{\sum }\xi _{l}\left( x\right)
\sum\limits_{\Lambda \in \mathcal{P}_{f}(\mathfrak{L})}[I_{\left(
e_{k},0\right) }^{(\omega ,\vartheta )},\tau _{t}^{(\omega ,\vartheta
,\lambda )}([\Psi _{\Lambda }^{\mathrm{IP}},I_{\mathbf{(}x+e_{q},x\mathbf{)}%
}^{(\omega ,\vartheta )}])]\ .  \label{ref *}
\end{eqnarray}%
The most delicate term in this equation is the last one. In fact, for all $%
x\in \mathfrak{L}$ and $m\in \mathbb{N}$, define the set
\begin{equation*}
\mathcal{D}\left( x,m\right) :=\left\{ \Lambda \in \mathcal{P}_{f}(\mathfrak{%
L}):x\in \Lambda ,\text{ }\Lambda \subseteq \Lambda _{m}+x,\ \Lambda
\nsubseteq \Lambda _{m-1}+x\right\} \subset 2^{\mathfrak{L}}\ ,
\end{equation*}%
while $\mathcal{D}\left( x,0\right) :=\{\{x\}\}$. By using (\ref{(3.3) NS}),
$\varsigma >2d$, and
\begin{equation*}
\mathcal{P}_{f}\left( \mathfrak{L}\right) =\underset{x\in \mathfrak{L},\
m\in \mathbb{N}_{0}}{\bigcup }\mathcal{D}\left( x,m\right) \ ,
\end{equation*}%
together with Lieb--Robinson bounds for multi--commutators of order three
\cite[Corollary 3.10]{brupedraLR} (tree--decay bounds), one gets that, for $%
\omega \in \Omega $, $\vartheta _{0},\lambda \in \mathbb{R}_{0}^{+}$, $%
\vartheta \in \lbrack 0,\vartheta _{0}]$, $k,q\in \{1,\ldots ,d\}$ and $T\in
\mathbb{R}^{+}$,%
\begin{eqnarray*}
&&\underset{x\in \mathfrak{L}}{\sum }\sum\limits_{\Lambda \in \mathcal{P}%
_{f}(\mathfrak{L})}\sup_{t\in \left[ -T,T\right] }\left\Vert [I_{\left(
e_{k},0\right) }^{(\omega ,\vartheta )},\tau _{t}^{(\omega ,\vartheta
,\lambda )}([\Psi _{\Lambda }^{\mathrm{IP}},I_{\mathbf{(}x+e_{q},x\mathbf{)}%
}^{(\omega ,\vartheta )}])]\right\Vert _{\mathcal{U}} \\
&=&\underset{x\in \mathfrak{L}}{\sum }\sum\limits_{\Lambda \in \mathcal{P}%
_{f}(\mathfrak{L})}\sup_{t\in \left[ -T,T\right] }\left\Vert [\tau
_{-t}^{(\omega ,\vartheta ,\lambda )}(I_{\left( e_{k},0\right) }^{(\omega
,\vartheta )}),[I_{\mathbf{(}x+e_{q},x\mathbf{)}}^{(\omega ,\vartheta
)},\Psi _{\Lambda }^{\mathrm{IP}}]]\right\Vert _{\mathcal{U}} \\
&\leq &D\left( 1+\vartheta _{0}\right) ^{2}d^{\varsigma }\left(
2D_{\vartheta _{0}}\left\Vert \mathbf{u}_{\cdot ,1}\right\Vert _{\ell ^{1}(%
\mathbb{N})}\left\vert T\right\vert \mathrm{e}^{4\mathbf{D}\left\vert
T\right\vert D_{\vartheta _{0}}}+2^{\varsigma }\right) ^{2} \\
&&\times \underset{x\in \mathfrak{L}}{\sup }\sum\limits_{m\in \mathbb{N}%
_{0}}\left( m+1\right) ^{\varsigma }\sum\limits_{\Lambda \in \mathcal{D}%
\left( x,m\right) }\left\Vert \Psi _{\Lambda }^{\mathrm{IP}}\right\Vert _{%
\mathcal{U}} \\
&<&\infty \ .
\end{eqnarray*}%
Here, the positive constant $D\in \mathbb{R}^{+}$ does not depend on $\omega
\in \Omega $, $\vartheta _{0},\lambda \in \mathbb{R}_{0}^{+}$, $\vartheta
\in \lbrack 0,\vartheta _{0}]$ and $k,q\in \{1,\ldots ,d\}$. Note that
\begin{equation*}
\underset{x\in \mathfrak{L}}{\sup }\sum\limits_{m\in \mathbb{N}_{0}}\left(
m+1\right) ^{\varsigma }\sum\limits_{\Lambda \in \mathcal{D}\left(
x,m\right) }\left\Vert \Psi _{\Lambda }^{\mathrm{IP}}\right\Vert _{\mathcal{U%
}}<\infty
\end{equation*}%
is a consequence of (\ref{(3.3) NS}) and $\Psi ^{\mathrm{IP}}\in \mathcal{W}$%
. The same kind of inequality holds for the 1st term in the r.h.s. of (\ref%
{ref *}). Then, using Lebesgue's dominated convergence theorem, one gets
from (\ref{derivee seconde1})--(\ref{limit xi}) that the map
\begin{equation*}
t\mapsto \partial _{t}^{2}\mathbb{E}\left[ \Xi _{\mathrm{p},l}^{(\omega
)}\left( t\right) \right] =\mathbb{E}\left[ \partial _{t}^{2}\Xi _{\mathrm{p}%
,l}^{(\omega )}\left( t\right) \right]
\end{equation*}%
converges uniformly on compacta, as $l\rightarrow \infty $, to a continuous
function $\partial _{t}^{2}\mathbf{\Xi }_{\mathrm{p}}\in C(\mathbb{R},%
\mathcal{B}(\mathbb{R}^{d}))$.
\end{proof}

\begin{bemerkung}[Conductivity and space decays of interactions]
\label{remark cond decay}\mbox{ }\newline
Under stronger assumptions like in the case of exponential decays of
interactions, much stronger results can be deduced from Lieb--Robinson
bounds for multi--commutators. In particular, under assumptions of \cite[%
Theorem 4.6]{brupedraLR} in the autonomous case, one verifies that $\mathbf{%
\Xi }_{\mathrm{p}}\in C^{\infty }(\mathbb{R},\mathcal{B}(\mathbb{R}^{d}))$
is a Gevrey map of order $d$. In particular, for $d=1$, $\mathbf{\Xi }_{%
\mathrm{p}}$ is in this case a real analytic map. Recall that $d\in \mathbb{N%
}$ is the space dimension of the lattice $\mathfrak{L}=\mathbb{Z}^{d}$.
\end{bemerkung}

\subsection{Study of the Paramagnetic Energy Increment\label{section
energetic study}}

The aim of this subsection is to derive the paramagnetic energy density $%
\mathfrak{i}_{\mathrm{p}}$ defined by (\ref{paramagnetic energy density}).
This is achieved in various lemmata which then yield two theorems and one
corollary. The derivation ends with Theorem \ref{main 1 copy(2)}, which
serves as springboard to obtain Theorem \ref{main 1 copy(1)}.

First, by assuming (\ref{(3.1) NS})--(\ref{(3.2) NS}) and (\ref{(3.3) NS}),
\cite[Theorem 3.8 (p)]{OhmV} says that, for any $l,\beta \in \mathbb{R}^{+}$%
, $\omega \in \Omega $, $\vartheta _{0},\lambda \in \mathbb{R}_{0}^{+}$, $%
\vartheta \in \lbrack 0,\vartheta _{0}]$, $\eta \in \mathbb{R}$, $\mathbf{A}%
\in \mathbf{C}_{0}^{\infty }$ and $t\geq t_{0}$,
\begin{equation}
\mathfrak{I}_{\mathrm{p}}^{(\omega ,\eta \mathbf{A}_{l})}\left( t\right)
=\eta ^{2}l^{d}\int\nolimits_{t_{0}}^{t}\int\nolimits_{t_{0}}^{s_{1}}\mathbf{%
X}_{l}^{(\omega )}(s_{1},s_{2})\ \mathrm{d}s_{2}\mathrm{d}s_{1}+\mathcal{O}%
(\eta ^{3}l^{d})\ ,  \label{Lemma non existing}
\end{equation}%
where, for any $s_{1},s_{2}\in \mathbb{R}$,
\begin{equation}
\mathbf{X}_{l}^{(\omega )}(s_{1},s_{2})\equiv \mathbf{X}_{l}^{(\beta ,\omega
,\vartheta ,\lambda ,\mathbf{A})}:=\frac{1}{\left\vert \Lambda
_{l}\right\vert }\underset{\mathbf{x},\mathbf{y}\in \mathfrak{K}}{\sum }%
\sigma _{\mathrm{p}}^{(\omega )}\left( \mathbf{x},\mathbf{y,}%
s_{1}-s_{2}\right) \mathbf{E}_{s_{1}}^{\mathbf{A}_{l}}(\mathbf{x})\mathbf{E}%
_{s_{2}}^{\mathbf{A}_{l}}(\mathbf{y})\ .  \label{Lemma non existing2}
\end{equation}%
The subleading term in the r.h.s. of (\ref{Lemma non existing}) is order $%
\mathcal{O}(\eta ^{3}l^{d})$, uniformly for $\beta \in \mathbb{R}^{+}$, $%
\omega \in \Omega $, $\vartheta \in \lbrack 0,\vartheta _{0}]$, $\lambda \in
\mathbb{R}_{0}^{+}$ and $t\geq t_{0}$. Here,
\begin{equation}
\mathfrak{K}:=\left\{ \mathbf{x}=(x^{(1)},x^{(2)})\in \mathfrak{L}^{2}\ :\
|x^{(1)}-x^{(2)}|=1\right\}  \label{proche voisins}
\end{equation}%
is the set of oriented bonds of nearest neighbors. Note also that the
integral in (\ref{Lemma non existing}) can be exchanged with the (finite)
sum (\ref{Lemma non existing2}) because $\mathbf{A}\in \mathbf{C}%
_{0}^{\infty }$.

The first important result of the present subsection is a proof that the
random variable $\mathbf{X}_{l}^{(\omega )}$ almost surely converges to a
constant function, as $l\rightarrow \infty $. See Corollary \ref{lemma
conductivty4 copy(6)}. To prove this, Condition (\ref{(3.3) NS}) is not
anymore necessary. Then, Lebesgue's dominated convergence theorem yields the
paramagnetic energy increment $\mathfrak{I}_{\mathrm{p}}^{(\omega ,\eta
\mathbf{A}_{l})}\left( t\right) $ in the limit $(\eta ,l^{-1})\rightarrow
(0,0)$, see Theorem \ref{main 1 copy(2)}.

We use the same strategy of proof as the one of \cite[Section 5.4]{OhmIII}
for the non--interacting case with homogeneous hopping terms. However, in
spite of interactions, we strongly simplify the corresponding technical
arguments by using Lieb--Robinson bounds. In particular, we do not anymore
need complex times. But like in \cite[Section 5.4]{OhmIII}, the (compact)
support $\mathrm{supp}(\mathbf{A}(t,\cdot ))\subset \mathbb{R}^{d}$ of the
vector potential $\mathbf{A}(t,\cdot )$ at $t\in \mathbb{R}$ is divided in
small regions to use the piecewise--constant approximation of the smooth
electric field $E_{\mathbf{A}}$. To do this, we assume w.l.o.g. that, for
all $t\in \mathbb{R}$,
\begin{equation}
\mathrm{supp}(\mathbf{A}(t,\cdot ))\subset \lbrack -1/2,1/2]^{d}\ .
\label{assumption A}
\end{equation}%
From now on we fix the parameters $\beta \in \mathbb{R}^{+}$, $\vartheta
_{0},\lambda \in \mathbb{R}_{0}^{+}$, $\vartheta \in \lbrack 0,\vartheta
_{0}]$ and $\mathbf{A}\in \mathbf{C}_{0}^{\infty }$ with (\ref{assumption A}%
).

Then, for every $n\in \mathbb{N}$, we divide the elementary box $%
[-1/2,1/2]^{d}$ in $n^{d}$ boxes $\{b_{j}\}_{j\in \mathcal{D}_{n}}$ of
side--length $1/n$, where
\begin{equation}
\mathcal{D}_{n}:=\{-\left( n-1\right) /2,-\left( n-3\right) /2,\cdots
,\left( n-3\right) /2,\left( n-1\right) /2\}^{d}\ .  \label{boxes b1}
\end{equation}%
Explicitly, for any $j\in \mathcal{D}_{n}$,%
\begin{equation}
b_{j}:=jn^{-1}+n^{-1}[-1/2,1/2]^{d}\text{\quad and\quad }[-1/2,1/2]^{d}=%
\underset{j\in \mathcal{D}_{n}}{\bigcup }b_{j}\ .  \label{boxes b2}
\end{equation}%
For any $l\in \mathbb{R}^{+}$, $\omega \in \Omega $, $n\in \mathbb{N}$ and $%
s_{1},s_{2}\in \mathbb{R}$, let
\begin{equation}
\mathbf{Y}_{l,n}^{(\omega )}(s_{1},s_{2}):=\frac{1}{\left\vert \Lambda
_{l}\right\vert }\ \underset{j\in \mathcal{D}_{n}}{\sum }\ \underset{\mathbf{%
x},\mathbf{y}\in \mathfrak{K}\cap (lb_{j})^{2}}{\sum }\sigma _{\mathrm{p}%
}^{(\omega )}\left( \mathbf{x},\mathbf{y,}s_{1}-s_{2}\right) \mathbf{E}%
_{s_{1}}^{\mathbf{A}_{l}}(\mathbf{x})\mathbf{E}_{s_{2}}^{\mathbf{A}_{l}}(%
\mathbf{y})\ .  \label{B fractbisbis}
\end{equation}%
We show now that the accumulation points of $\mathbf{Y}_{l,n}^{(\omega )}$,
as $l\rightarrow \infty $, do not depend on $n\in \mathbb{N}$ and coincide
with those of $\mathbf{X}_{l}^{(\omega )}$:

\begin{lemma}[Approximation I]
\label{lemma conductivty1}\mbox{
}\newline
Assume (\ref{(3.1) NS})--(\ref{(3.2) NS}). Let $n\in \mathbb{N}$. Then,
\begin{equation*}
\underset{l\rightarrow \infty }{\lim }\left\vert \mathbf{X}_{l}^{(\omega
)}(s_{1},s_{2})-\mathbf{Y}_{l,n}^{(\omega )}(s_{1},s_{2})\right\vert =0\ ,
\end{equation*}%
uniformly for $\omega \in \Omega $ and $s_{1},s_{2}\in \mathbb{R}$.
\end{lemma}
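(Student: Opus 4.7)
The plan is to control $\mathbf{X}_l^{(\omega)}(s_1,s_2)-\mathbf{Y}_{l,n}^{(\omega)}(s_1,s_2)$ as a sum over \emph{bad} pairs of oriented nearest--neighbor bonds $(\mathbf{x},\mathbf{y})\in\mathfrak{K}^2$, namely those which fail to lie jointly in some $(lb_j)^2$. Two standing facts make uniformity in $\omega$ and in $(s_1,s_2)$ automatic: the tensions $\mathbf{E}_{s}^{\mathbf{A}_l}(\mathbf{x})$ are $L^\infty$--bounded independently of $l,s,\mathbf{x}$ by $\|\partial_s \mathbf{A}\|_\infty$ (with $|x^{(2)}-x^{(1)}|=1$), and they vanish for $s$ outside the compact time support of $\mathbf{A}$, say $[t_0,t_1]$, so that the exponential prefactor in the Lieb--Robinson bound \eqref{Lieb--Robinson bounds simplifiedbis} is bounded by a constant $D$ uniformly for the relevant $|s_1-s_2|\leq t_1-t_0$, and $\omega$--independently thanks to the constant $D_{\vartheta_0}$ of \eqref{norm sup interaction}. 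The remaining task is thus a purely combinatorial estimate.

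First I would isolate Class~(I), consisting of pairs where $\mathbf{x}$ or $\mathbf{y}$ \emph{straddles} some interface between two sub--boxes $lb_j$ and $lb_{j'}$ (i.e.\ $\mathbf{x}\notin (lb_j)^2$ for every $j\in\mathcal{D}_n$, or the analogous condition for $\mathbf{y}$). Since bonds have length $1$, the number of such straddling bonds intersecting the support of the tensions is of order $n\,l^{d-1}$, being controlled by the $(d-1)$--area of the interior faces of the partition $\{lb_j\}_{j\in\mathcal{D}_n}$. From \eqref{Lieb--Robinson bounds simplifiedbis} and the summability \eqref{(3.1) NS}, one has $\sum_{\mathbf{y}\in\mathfrak{K}}|\sigma_{\mathrm{p}}^{(\omega)}(\mathbf{x},\mathbf{y},t)|\leq D$ uniformly in $\mathbf{x},\omega$ and $|t|\leq t_1-t_0$ (the $O(1)$ non--disjoint bonds per $\mathbf{x}$ being absorbed into a trivial supremum bound on $\sigma_{\mathrm{p}}^{(\omega)}$). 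Hence Class~(I) contributes $O(n\,l^{d-1}/|\Lambda_l|)=O(n/l)$, vanishing as $l\to\infty$ with $n$ fixed.

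The remaining bad pairs have both $\mathbf{x},\mathbf{y}$ fully contained in \emph{distinct} sub--boxes $lb_j\neq lb_{j'}$. Fix a cutoff $R>0$ and split further. Class~(II) collects pairs with $\min_{\mathrm{endpts}}|x-y|\geq R$: the Lieb--Robinson bound together with summability yields a uniform upper bound $D\cdot\epsilon(R)$, where $\epsilon(R):=\sup_{x}\sum_{y:\,|x-y|\geq R}\mathbf{F}(|x-y|)\to 0$ as $R\to\infty$ by \eqref{(3.1) NS}. Class~(III) collects pairs with $\min_{\mathrm{endpts}}|x-y|<R$; such an $\mathbf{x}$ must lie within distance $R$ of some sub--box interface, confining it to a strip of volume $O(n\,R\,l^{d-1})$, and for each such $\mathbf{x}$ there are $O(R^d)$ admissible $\mathbf{y}$. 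Using again a uniform supremum bound on $|\sigma_{\mathrm{p}}^{(\omega)}|$ over the relevant time range, Class~(III) contributes $O(n\,R^{d+1}/l)$, which vanishes as $l\to\infty$ for fixed $n,R$.

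To conclude, given $\varepsilon>0$, I would first choose $R$ so that Class~(II) is below $\varepsilon/2$, and then $l$ so large that Classes~(I) and~(III) together are below $\varepsilon/2$. All estimates being $\omega$--independent and holding uniformly in $s_1,s_2\in\mathbb{R}$, this yields the claimed limit. I do not foresee any genuine technical obstacle: the whole argument is a volume/surface count combined with the Lieb--Robinson bound \eqref{Lieb--Robinson bounds simplifiedbis}, which absorbs all the interaction--dependent content and makes the interacting case no harder than the non--interacting one treated in Section~5.4 of \cite{OhmIII}. The only mildly delicate point is the surface/volume accounting for Class~(III), requiring one codimension--one gain $l^{-1}$, which is however immediate from the geometry of the partition $\{lb_j\}_{j\in\mathcal{D}_n}$ inside $\Lambda_l$.
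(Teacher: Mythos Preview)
Your proposal is correct and follows essentially the same strategy as the paper's proof: both separate the ``bad'' pairs into (a) bonds straddling sub--box boundaries (your Class~(I), the paper's $\mathbf{\tilde{K}}_l$ term) and (b) pairs of bonds lying in distinct sub--boxes, and then split (b) into near and far contributions controlled respectively by a surface/volume count and by the tail $\sum_{|x|>R}\mathbf{F}(|x|)$. The only cosmetic difference is that the paper chooses an $l$--dependent cutoff $\delta l$ with $\delta=l^{-(d+1/2)/(d+1)}$ so that both pieces vanish simultaneously as $l\to\infty$, whereas you fix $R$ and take sequential limits; the underlying estimates are identical.
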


\begin{proof}
We observe from (\ref{Lemma non existing2}), (\ref{boxes b2}) and (\ref{B
fractbisbis}) that%
\begin{eqnarray}
&&\left\vert \mathbf{X}_{l}^{(\omega )}(s_{1},s_{2})-\mathbf{Y}%
_{l,n}^{(\omega )}(s_{1},s_{2})\right\vert  \label{B fractbisbisbis} \\
&\leq &\frac{1}{\left\vert \Lambda _{l}\right\vert }\underset{j,k\in
\mathcal{D}_{n},j\neq k}{\sum }\ \underset{\mathbf{x}\in \mathfrak{K}\cap
(lb_{j})^{2}}{\sum }\ \underset{\mathbf{y}\in \mathfrak{K}\cap (lb_{k})^{2}}{%
\sum }\left\vert \sigma _{\mathrm{p}}^{(\omega )}\left( \mathbf{x},\mathbf{y,%
}s_{1}-s_{2}\right) \mathbf{E}_{s_{1}}^{\mathbf{A}_{l}}(\mathbf{x})\mathbf{E}%
_{s_{2}}^{\mathbf{A}_{l}}(\mathbf{y})\right\vert  \notag \\
&&+\frac{1}{\left\vert \Lambda _{l}\right\vert }\underset{j\in \mathcal{D}%
_{n}}{\sum }\ \underset{\mathbf{x}\in \partial (lb_{j})}{\sum }\ \underset{%
\mathbf{y}\in \mathfrak{K}}{\sum }\left\vert \sigma _{\mathrm{p}}^{(\omega
)}\left( \mathbf{x},\mathbf{y,}s_{1}-s_{2}\right) \mathbf{E}_{s_{1}}^{%
\mathbf{A}_{l}}(\mathbf{x})\mathbf{E}_{s_{2}}^{\mathbf{A}_{l}}(\mathbf{y}%
)\right.  \notag \\
&&\qquad \qquad \qquad \qquad \qquad \left. +\sigma _{\mathrm{p}}^{(\omega
)}\left( \mathbf{y},\mathbf{x,}s_{1}-s_{2}\right) \mathbf{E}_{s_{1}}^{%
\mathbf{A}_{l}}(\mathbf{y})\mathbf{E}_{s_{2}}^{\mathbf{A}_{l}}(\mathbf{x}%
)\right\vert \ ,  \notag
\end{eqnarray}%
where, for any $\Lambda \in \mathcal{P}_{f}(\mathfrak{L})$ with complement $%
\Lambda ^{c}\subset \mathfrak{L}$,%
\begin{equation*}
\partial \Lambda :=\left\{ \mathbf{x=}(x^{(1)},x^{(2)})\in \mathfrak{K}%
:\{x^{(1)},x^{(2)}\}\cap \Lambda \neq 0,\text{ }\{x^{(1)},x^{(2)}\}\cap
\Lambda ^{c}\neq 0\right\} \ .
\end{equation*}%
Because $\mathbf{A}\in \mathbf{C}_{0}^{\infty }$, note that
\begin{equation}
\left\Vert \mathbf{E}^{\mathbf{A}}\right\Vert _{\infty }:=\sup \left\{
\left\vert E_{\mathbf{A}}(t,x)\right\vert \ :\ (t,x)\in \mathrm{supp}%
(A)\right\} <\infty \ .  \label{bound easy}
\end{equation}%
Therefore, using (\ref{(3.1) NS}), (\ref{Lieb--Robinson bounds simplifiedbis}%
), (\ref{bound easy}) and the fact that $\mathbf{A}(t,\cdot )=0$ for any $%
t\notin \lbrack t_{0},t_{1}]$ (cf. (\ref{zero mean field assumption})), we
deduce from Inequality (\ref{B fractbisbisbis}) that%
\begin{eqnarray}
\left\vert \mathbf{X}_{l}^{(\omega )}(s_{1},s_{2})-\mathbf{Y}_{l,n}^{(\omega
)}(s_{1},s_{2})\right\vert &\leq &8\left( 1+\vartheta _{0}\right) ^{2}\left(
t_{1}-t_{0}\right) \Vert \mathbf{E}^{\mathbf{A}}\Vert _{\infty }^{2}
\label{bound easy1} \\
&&\times \left( \mathrm{e}^{2\mathbf{D}\left( t_{1}-t_{0}\right)
D_{\vartheta _{0}}}\mathbf{D}^{-1}\mathbf{K}_{l}+\mathbf{\tilde{K}}%
_{l}\right)  \notag
\end{eqnarray}%
for all $l\in \mathbb{R}^{+}$, $\omega \in \Omega $, $n\in \mathbb{N}$ and $%
s_{1},s_{2}\in \mathbb{R}$, where
\begin{equation}
\mathbf{K}_{l}:=\frac{1}{\left\vert \Lambda _{l}\right\vert }\underset{%
j,k\in \mathcal{D}_{n},j\neq k}{\sum }\ \underset{x\in \mathfrak{L}\cap
(lb_{j})}{\sum }\ \underset{z_{1,2}\in \mathfrak{L},|z_{1,2}|=1}{\sum }\
\underset{y\in \mathfrak{L}\cap (lb_{k})}{\sum }\mathbf{F}\left( \left\vert
x+z_{1}+z_{2}-y\right\vert \right)  \label{def K bold}
\end{equation}

and%
\begin{equation}
\mathbf{\tilde{K}}_{l}:=2dn^{d}\left( 8+\left\Vert \mathbf{F}\right\Vert _{1,%
\mathfrak{L}}\right) \underset{z\in \mathfrak{L},|z|=1}{\sum }\frac{1}{%
\left\vert \Lambda _{l}\right\vert }\underset{x\in \mathfrak{L}}{\sum }%
\mathbf{1}\left[ \left( x,x+z\right) \in \partial (lb_{0})\right] \ .
\label{boud sup0}
\end{equation}%
Clearly, one has%
\begin{equation}
\underset{l\rightarrow \infty }{\lim }\mathbf{\tilde{K}}_{l}=0\ .
\label{boud sup 2}
\end{equation}%
Therefore, it remains to prove that $\mathbf{K}_{l}$ vanishes when $%
l\rightarrow \infty $ in order to prove the lemma.

To this end, for any $l\in \mathbb{R}^{+}$ and $\delta \in \lbrack 0,1]$,
define two constants:
\begin{multline*}
\mathbf{K}_{l,\delta }^{\lesseqgtr }:=\frac{1}{\left\vert \Lambda
_{l}\right\vert }\ \underset{j,k\in \mathcal{D}_{n},j\neq k}{\sum }\
\underset{x\in \mathfrak{L}\cap (lb_{j})}{\sum }\ \underset{z_{1,2}\in
\mathfrak{L},|z_{1,2}|=1}{\sum }\ \underset{y\in \mathfrak{L}\cap (lb_{k})}{%
\sum } \\
\mathbf{1}\left[ |x+z_{1}+z_{2}-y|\lesseqgtr \delta l\right] \mathbf{F}%
\left( \left\vert x+z_{1}+z_{2}-y\right\vert \right) \ .
\end{multline*}%
Obviously, by (\ref{def K bold}), for any $\delta ,l\in \mathbb{R}^{+}$,
\begin{equation}
\mathbf{K}_{l}=\mathbf{K}_{l,\delta }^{\leq }+\mathbf{K}_{l,\delta }^{>}\ .
\label{(3.1) NS cool-1}
\end{equation}%
Recall that $\mathbf{F}:\mathbb{R}_{0}^{+}\rightarrow \mathbb{R}^{+}$, which
encodes the short range property of interactions, is a non--increasing
function, by assumption. As a consequence, explicit estimates using $\mathbf{%
F}\left( r\right) \leq \mathbf{F}\left( 0\right) $ show that, for any $%
\delta ,l\in \mathbb{R}^{+}$,%
\begin{equation}
\left\vert \mathbf{K}_{l,\delta }^{\leq }\right\vert =\mathcal{O}\left(
\delta ^{d+1}l^{d}\right) \ ,  \label{(3.1) NS cool0}
\end{equation}%
while%
\begin{equation}
\left\vert \mathbf{K}_{l,\delta }^{>}\right\vert \leq 4d^{2}n^{d}\underset{%
x\in \mathfrak{L},|x|>\delta l}{\sum }\mathbf{F}\left( \left\vert
x\right\vert \right) \ .  \label{(3.1) NS cool}
\end{equation}%
Take $\delta =l^{-\frac{(d+1/2)}{d+1}}$. Then, by (\ref{(3.1) NS cool0}), $|%
\mathbf{K}_{l,\delta }^{\leq }|=\mathcal{O}(l^{-1/2})$ and $\delta l=l^{%
\frac{1}{2\left( d+1\right) }}$, which combined with (\ref{(3.1) NS}), (\ref%
{(3.1) NS cool-1}) and (\ref{(3.1) NS cool}) yield
\begin{equation*}
\underset{l\rightarrow \infty }{\lim }\mathbf{K}_{l}=0\ .
\end{equation*}%
By (\ref{bound easy1}) and (\ref{boud sup 2}), we thus arrive at the
assertion.
\end{proof}

We now consider piecewise--constant approximations of the (smooth) electric
field $E_{\mathbf{A}}$ (\ref{V bar 0}), that is,%
\begin{equation}
E_{\mathbf{A}}(t,x):=-\partial _{t}\mathbf{A}(t,x)\ ,\qquad t\in \mathbb{R}%
,\ x\in \mathbb{R}^{d}\ .  \label{V bar}
\end{equation}%
For any $j\in \mathcal{D}_{n}$, let $z^{(j)}\in b_{j}$ be any fixed point of
the box $b_{j}$. Then, we define the function%
\begin{eqnarray}
\mathbf{\bar{Y}}_{l,n}^{(\omega )}(s_{1},s_{2}) &:=&\frac{1}{\left\vert
\Lambda _{l}\right\vert }\ \underset{j\in \mathcal{D}_{n}}{\sum }\ \underset{\mathbf{x},\mathbf{y}\in \mathfrak{K}\cap (lb_{j})^{2}}{\sum }\sigma _{\mathrm{p}}^{(\omega )}\left( \mathbf{x},\mathbf{y,}s_{1}-s_{2}\right)
\label{piece wise approx} \\
&&\qquad \times \left[ E_{\mathbf{A}}(s_{1},z^{(j)})\right] (x^{(1)}-x^{(2)})\left[ E_{\mathbf{A}}(s_{2},z^{(j)})\right] (y^{(1)}-y^{(2)})  \notag
\end{eqnarray}%
for any $l\in \mathbb{R}^{+}$, $\omega \in \Omega $, $n\in \mathbb{N}$ and $%
s_{1},s_{2}\in \mathbb{R}$, where $\mathbf{x}:=(x^{(1)},x^{(2)})$ and $%
\mathbf{y}:=(y^{(1)},y^{(2)})$, see (\ref{proche voisins}). This new
function approximates (\ref{B fractbisbis}) arbitrarily well, as $%
l\rightarrow \infty $ and $n\rightarrow \infty $:

\begin{lemma}[Approximation II]
\label{lemma conductivty2}\mbox{
}\newline
Assume (\ref{(3.1) NS})--(\ref{(3.2) NS}). Then%
\begin{equation*}
\underset{n\rightarrow \infty }{\lim }\left\{ \underset{l\rightarrow \infty }%
{\lim \sup }\left\vert \mathbf{Y}_{l,n}^{(\omega )}\left( s_{1},s_{2}\right)
-\mathbf{\bar{Y}}_{l,n}^{(\omega )}\left( s_{1},s_{2}\right) \right\vert
\right\} =0\ ,
\end{equation*}%
uniformly for $\omega \in \Omega $ and $s_{1},s_{2}\in \mathbb{R}$.
\end{lemma}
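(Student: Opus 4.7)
The plan is to prove the stronger uniform bound $|\mathbf{Y}_{l,n}^{(\omega)}(s_1,s_2)-\bar{\mathbf{Y}}_{l,n}^{(\omega)}(s_1,s_2)|\le C/n$ with $C$ independent of $l\in\mathbb{R}^+$, $\omega\in\Omega$ and $s_1,s_2\in\mathbb{R}$, from which the claim follows immediately. The two quantities differ only in the factors multiplying the transport coefficient: $\mathbf{Y}_{l,n}^{(\omega)}$ uses the bond-integrated field $\mathbf{E}_s^{\mathbf{A}_l}(\mathbf{x})$, while $\bar{\mathbf{Y}}_{l,n}^{(\omega)}$ uses the sampled value $\tilde E_s^{(j)}(\mathbf{x}):=[E_{\mathbf{A}}(s,z^{(j)})](x^{(2)}-x^{(1)})$, where the apparent sign flip between $(x^{(2)}-x^{(1)})$ and $(x^{(1)}-x^{(2)})$ in the definition of $\bar{\mathbf{Y}}_{l,n}^{(\omega)}$ cancels between the two field factors of each summand. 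The telescoping identity $AB-\tilde A\tilde B=(A-\tilde A)B+\tilde A(B-\tilde B)$ reduces the task to bounding the same double sum with only one of the two field factors replaced by its sampled value.

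For the pointwise error of a single field factor, I would use smoothness and compact support of $\mathbf{A}$ to define $L:=\sup_{s,x}\|\nabla_x E_{\mathbf{A}}(s,x)\|<\infty$. Since $[E_{\mathbf{A}_l}(s,x)](v)=[E_{\mathbf{A}}(s,l^{-1}x)](v)$ follows directly from $\mathbf{A}_l(t,x)=\mathbf{A}(t,l^{-1}x)$ and the definition (\ref{V bar 0}) of $E$, and because $b_j$ is convex with diameter $\sqrt{d}/n$ and contains $l^{-1}(\alpha x^{(2)}+(1-\alpha)x^{(1)})$ for every $\mathbf{x}\in\mathfrak{K}\cap(lb_j)^2$ and $\alpha\in[0,1]$, the unit-length constraint $|x^{(2)}-x^{(1)}|=1$ together with the mean-value theorem yields
\begin{equation*}
|\mathbf{E}_s^{\mathbf{A}_l}(\mathbf{x})-\tilde E_s^{(j)}(\mathbf{x})|\le L\sqrt{d}/n,
\end{equation*}
uniformly in $l\in\mathbb{R}^+$, $s\in\mathbb{R}$, $\omega\in\Omega$ and $j\in\mathcal{D}_n$.

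For the total conductivity weight, I would control $\sum_{\mathbf{x},\mathbf{y}\in\mathfrak{K}\cap(lb_j)^2}|\sigma_{\mathrm{p}}^{(\omega)}(\mathbf{x},\mathbf{y},s_1-s_2)|$ by splitting according to whether the bonds share an endpoint. For disjoint pairs, the Lieb--Robinson bound (\ref{Lieb--Robinson bounds simplifiedbis}) combined with the summability (\ref{(3.1) NS}) gives $\sum_{\mathbf{y}}|\sigma_{\mathrm{p}}^{(\omega)}(\mathbf{x},\mathbf{y},s_1-s_2)|\le c_1$ for each fixed $\mathbf{x}$, with $c_1$ depending only on $t_1-t_0$, $\vartheta_0$, $D_{\vartheta_0}$, $d$, $\mathbf{D}$ and $\|\mathbf{F}\|_{1,\mathfrak{L}}$. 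For the $O(1)$ overlapping $\mathbf{y}$'s per fixed $\mathbf{x}$, one uses the trivial estimate $|\sigma_{\mathrm{p}}^{(\omega)}(\mathbf{x},\mathbf{y},t)|\le 8|t|(\vartheta_0+1)^2$, deduced from $\|I_{\mathbf{x}}^{(\omega,\vartheta)}\|_{\mathcal{U}}\le 2(\vartheta_0+1)$. Summing over $\mathbf{x}\in\mathfrak{K}\cap(lb_j)^2$ produces $O(|\Lambda_l\cap lb_j|)$ per box, and the sum over $j\in\mathcal{D}_n$ telescopes into $O(|\Lambda_l|)$ since the boxes $lb_j$ are essentially disjoint. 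Multiplying by $\|\mathbf{E}^{\mathbf{A}}\|_\infty$ from the untouched factor and $L\sqrt{d}/n$ from the field estimate, and dividing by $|\Lambda_l|$, yields the desired $O(1/n)$ bound. The main bookkeeping hazard is the separate treatment of the overlapping bond pairs, to which (\ref{Lieb--Robinson bounds simplifiedbis}) does not directly apply, but this is easily handled since their number per box is of lower order and they obey the trivial commutator estimate.
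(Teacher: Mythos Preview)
Your proof is correct and follows essentially the same approach as the paper's: a pointwise field error estimate from the smoothness of $\mathbf{A}$, combined with the Lieb--Robinson bound (\ref{Lieb--Robinson bounds simplifiedbis}) and the summability (\ref{(3.1) NS}) to control the total $\sigma_{\mathrm{p}}^{(\omega)}$ weight, with the overlapping-bond case handled by the trivial commutator bound (this is the ``$+2$'' in the paper's final inequality). Your field estimate $L\sqrt{d}/n$ is in fact slightly sharper than the paper's $D_{\mathbf{A}}(n^{-1}+l^{-1})$, because you correctly exploit that both endpoints of each $\mathbf{x}\in\mathfrak{K}\cap(lb_j)^2$ lie in $lb_j$, so the interpolation point stays in $b_j$ without needing the enlarged box $\tilde{b}_{j,l}$; this lets you bypass the $\limsup_{l\to\infty}$ altogether and obtain a uniform $O(1/n)$ bound directly.
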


\begin{proof}
By taking the canonical orthonormal basis $\{e_{k}\}_{k=1}^{d}$ of $\mathbb{R%
}^{d}$, we directly infer from (\ref{V bar 0bis}), (\ref{rescaled vector
potential}) and (\ref{V bar}) that, for any $l\in \mathbb{R}^{+}$, $\mathbf{A%
}\in \mathbf{C}_{0}^{\infty }$, $j\in \mathcal{D}_{n}$, $t\in \mathbb{R}$, $%
k\in \{1,\cdots ,d\}$ and $x\in lb_{j}$,%
\begin{eqnarray*}
&&\left\vert \mathbf{E}_{t}^{\mathbf{A}_{l}}(x,x\pm e_{k})-\left[ E_{\mathbf{%
A}}(t,z^{(j)})\right] (\pm e_{k})\right\vert \\
&\leq &\int\nolimits_{0}^{1}\left\vert \left[ \partial _{t}\mathbf{A}%
(t,z^{(j)})\right] (e_{k})-\left[ \partial _{t}\mathbf{A}_{l}(t,x\pm
(1-\alpha )e_{k})\right] (e_{k})\right\vert \mathrm{d}\alpha \\
&\leq &\underset{y\in \tilde{b}_{j,l}}{\sup }\left\vert \left[ \partial _{t}%
\mathbf{A}(t,z^{(j)})\right] (e_{k})-\left[ \partial _{t}\mathbf{A}(t,y)%
\right] (e_{k})\right\vert <\infty \ ,
\end{eqnarray*}%
where
\begin{equation*}
\tilde{b}_{j,l}:=\left\{ y\in \mathbb{R}^{d}\ :\ \underset{x\in b_{j}}{\min }%
\left\vert y-x\right\vert \leq l^{-1}\right\} \ .
\end{equation*}%
In particular, since $\mathbf{A}\in \mathbf{C}_{0}^{\infty }$, there is a
finite constant $D_{\mathbf{A}}\in \mathbb{R}^{+}$ not depending on $j\in
\mathcal{D}_{n}$, $t\in \mathbb{R}$, $k\in \{1,\cdots ,d\}$ and $x\in b_{j}$
such that%
\begin{equation}
\left\vert \mathbf{E}_{t}^{\mathbf{A}_{l}}(x,x\pm e_{k})-\left[ E_{\mathbf{A}%
}(t,z^{(j)})\right] (\pm e_{k})\right\vert \leq D_{\mathbf{A}%
}(n^{-1}+l^{-1})\ .  \label{inequality G cool1}
\end{equation}%
Therefore, using (\ref{(3.1) NS}), (\ref{Lieb--Robinson bounds simplifiedbis}%
), (\ref{bound easy}), (\ref{inequality G cool1}) and the fact that $\mathbf{%
A}(t,\cdot )=0$ for any $t\notin \lbrack t_{0},t_{1}]$ (cf. (\ref{zero mean
field assumption})), like in (\ref{bound easy1}), we deduce from (\ref{B
fractbisbis}) and (\ref{piece wise approx}) that%
\begin{eqnarray*}
\left\vert \mathbf{Y}_{l,n}^{(\omega )}\left( s_{1},s_{2}\right) -\mathbf{%
\bar{Y}}_{l,n}^{(\omega )}\left( s_{1},s_{2}\right) \right\vert &\leq
&64d^{2}D_{\mathbf{A}}\left\Vert \mathbf{E}^{\mathbf{A}}\right\Vert _{\infty
}\left( 1+\vartheta _{0}\right) ^{2}\left( t_{1}-t_{0}\right) (n^{-1}+l^{-1})
\\
&&\times \left( \left\Vert \mathbf{F}\right\Vert _{1,\mathfrak{L}}\mathbf{D}%
^{-1}\mathrm{e}^{2\mathbf{D}\left( t_{1}-t_{0}\right) D_{\vartheta
_{0}}}+2\right) \ .
\end{eqnarray*}%
This upper bound implies the lemma.
\end{proof}

By taking the canonical orthonormal basis $\{e_{k}\}_{k=1}^{d}$ of $\mathbb{R%
}^{d}$ and setting $e_{-k}:=-e_{k}$ for each $k\in \{1,\cdots ,d\}$, we
rewrite the function (\ref{piece wise approx}) as
\begin{eqnarray}
\mathbf{\bar{Y}}_{l,n}^{(\omega )}(s_{1},s_{2}) &:=&\frac{1}{n^{d}}\underset{j\in \mathcal{D}_{n}}{\sum }\ \underset{k,q\in \{1,-1,\cdots ,d,-d\}}{\sum }\mathbf{Z}_{l,j,k,q}^{(\omega )}(s_{1}-s_{2})  \label{Y bar} \\
&&\qquad \qquad \times \left[ E_{\mathbf{A}}(s_{1},z^{(j)})\right] (e_{k})\left[ E_{\mathbf{A}}(s_{2},z^{(j)})\right] (e_{q})  \notag
\end{eqnarray}%
for any $l\in \mathbb{R}^{+}$, $\omega \in \Omega $, $n\in \mathbb{N}$ and $%
s_{1},s_{2}\in \mathbb{R}$, where, for $l\in \mathbb{R}^{+}$, $\omega \in
\Omega $, $j\in \mathcal{D}_{n}$, $k,q\in \{1,-1,\cdots ,d,-d\}$ and $t\in
\mathbb{R}$,%
\begin{equation}
\mathbf{Z}_{l,j,k,q}^{(\omega )}(t):=\frac{n^{d}}{\left\vert \Lambda
_{l}\right\vert }\sum\limits_{x,y\in \mathfrak{L}\cap (lb_{j})}\sigma _{%
\mathrm{p}}^{(\omega )}\left( y,y-e_{q},x,x-e_{k}\mathbf{,}t\right) \ .
\label{def Z}
\end{equation}%
Notice that, as compared to (\ref{piece wise approx}), we have added in (\ref%
{Y bar}) terms related to $x,y$ on the boundary of $\mathfrak{L}\cap
(lb_{j}) $, but we use the same notation $\mathbf{\bar{Y}}_{l,n}^{(\omega )}$
for simplicity. These terms have indeed vanishing contribution in the limit $%
l\rightarrow \infty $. Here, for any $\omega \in \Omega $, $t\in \mathbb{R}$%
, $\mathbf{x}:=(x^{(1)},x^{(2)})\in \mathfrak{L}^{2}$ and $\mathbf{y}%
:=(y^{(1)},y^{(2)})\in \mathfrak{L}^{2}$,
\begin{equation}
\sigma _{\mathrm{p}}^{(\omega )}(x^{(1)},x^{(2)},y^{(1)},y^{(2)},t)\equiv
\sigma _{\mathrm{p}}^{(\omega )}\left( \mathbf{x},\mathbf{y,}t\right) \ ,
\label{B fract notation}
\end{equation}%
see (\ref{backwards -1bis}). Hence, it remains to analyze the limit of (\ref%
{def Z}), as $l\rightarrow \infty $. But before doing this study, observe
that, for all $x,y\in \mathfrak{L}$, $k,q\in \{1,-1,\cdots ,d,-d\}$ and $%
t\in \mathbb{R}$, the map
\begin{equation*}
\omega \mapsto \sigma _{\mathrm{p}}^{(\omega )}\left( y,y-e_{q},x,x-e_{k}%
\mathbf{,}t\right)
\end{equation*}%
is bounded and measurable w.r.t. the $\sigma $--algebra $\mathfrak{A}%
_{\Omega }$, by assumption. Indeed, the map (\ref{map}) is a random
invariant state (Definition \ref{def second law2 copy(1)}). Recall also that
$\mathbb{E}[\ \cdot \ ]$ is the expectation value associated with the
probability measure $\mathfrak{a}_{\Omega }$, see Section \ref{sect 2.1}.

\begin{lemma}[Infinite volume limit and ergodicity]
\label{lemma conductivty3}\mbox{
}\newline
Assume (\ref{(3.1) NS})--(\ref{(3.2) NS}), (\ref{static potential0}) and
that the map (\ref{map}) is a random invariant state. For any $t\in \mathbb{R%
}$, there is a measurable subset $\tilde{\Omega}\left( t\right) \equiv
\tilde{\Omega}^{(\beta ,\vartheta ,\lambda )}\left( t\right) \subset \Omega $
of full measure such that, for $n\in \mathbb{N}$, $j\in \mathcal{D}_{n}$, $%
k,q\in \{1,-1,\cdots ,d,-d\}$ and $\omega \in \tilde{\Omega}\left( t\right) $%
,%
\begin{equation*}
\underset{l\rightarrow \infty }{\lim }\mathbf{Z}_{l,j,k,q}^{(\omega
)}(t)=\left\{ \mathbf{\Xi }_{\mathrm{p}}\left( t\right) \right\}
_{k,q}=\sum\limits_{x\in \mathfrak{L}}\mathbb{E}\left[ \sigma _{\mathrm{p}%
}^{(\omega )}\left( x,x-e_{q},0,-e_{k}\mathbf{,}t\right) \right] \in \mathbb{%
R}\ .
\end{equation*}
\end{lemma}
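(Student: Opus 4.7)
The plan is to apply the Akcoglu--Krengel ergodic theorem to a single--site random variable built from $\sigma _{\mathrm{p}}^{(\omega )}$, exploiting the translation covariance that this coefficient inherits from the random invariant equilibrium state. For fixed $t\in \mathbb{R}$ and $k,q\in \{1,-1,\ldots ,d,-d\}$, I introduce
\begin{equation*}
f_{t}(\omega ):=\sum_{x\in \mathfrak{L}}\sigma _{\mathrm{p}}^{(\omega )}\left( 0,-e_{q},x,x-e_{k},t\right) .
\end{equation*}
The Lieb--Robinson bound (\ref{Lieb--Robinson bounds simplifiedbis}) controls each summand by a constant (depending only on $\beta ,\vartheta _{0},\lambda ,t$) times $\sum_{a\in \{0,-e_{q}\},\, b\in \{x,x-e_{k}\}}\mathbf{F}(|a-b|)$; combined with (\ref{(3.1) NS}) this makes the series absolutely convergent uniformly in $\omega $, so that $f_{t}\in L^{\infty }(\Omega ,\mathfrak{a}_{\Omega })\subset L^{1}(\Omega ,\mathfrak{a}_{\Omega })$. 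Measurability of $f_{t}$ follows from the measurability of each summand (ensured by Definition \ref{def second law2 copy(1)}) together with dominated convergence.

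Next, the translation invariance (\ref{translation invariant}) of the map $\omega \mapsto \varrho ^{(\beta ,\omega ,\vartheta ,\lambda )}$, the hypothesis (\ref{static potential0}) on $\Psi ^{\mathrm{IP}}$, and the manifest covariance of $\Delta _{\omega ,\vartheta }$, $V_{\omega }$ and the paramagnetic currents $I_{\mathbf{x}}^{(\omega ,\vartheta )}$ under the joint $\mathbb{Z}^{d}$--action (by arguments analogous to those in the proof of Theorem \ref{Theorem AC conductivity measure copy(1)}) together yield the intertwining identity
\begin{equation*}
\sigma _{\mathrm{p}}^{(\omega )}\left( y,y-e_{q},x,x-e_{k},t\right) =\sigma _{\mathrm{p}}^{(\chi _{y}^{(\Omega )}(\omega ))}\left( 0,-e_{q},x-y,x-y-e_{k},t\right) .
\end{equation*}
Substituting $x^{\prime }:=x-y$ in the definition (\ref{def Z}) of $\mathbf{Z}_{l,j,k,q}^{(\omega )}(t)$ then produces
\begin{equation*}
\mathbf{Z}_{l,j,k,q}^{(\omega )}(t)=\frac{n^{d}}{|\Lambda _{l}|}\sum_{y\in \mathfrak{L}\cap (lb_{j})}\sum_{x^{\prime }\in (\mathfrak{L}\cap (lb_{j}))-y}\sigma _{\mathrm{p}}^{(\chi _{y}^{(\Omega )}(\omega ))}\left( 0,-e_{q},x^{\prime },x^{\prime }-e_{k},t\right) .
\end{equation*}

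The third step replaces the inner partial sum with $f_{t}(\chi _{y}^{(\Omega )}(\omega ))$. The resulting discrepancy at each $y$ is bounded by a constant multiple of $\sum_{x^{\prime }+y\notin \mathfrak{L}\cap (lb_{j})}\mathbf{F}(|x^{\prime }|-1)$. Splitting the $y$--sum into points at distance at least $\delta $ from $\partial (lb_{j})$ (for which the discrepancy is at most $\epsilon (\delta ):=\sum_{|x^{\prime }|\geq \delta -1}\mathbf{F}(|x^{\prime }|)\rightarrow 0$ as $\delta \rightarrow \infty $) and the complementary boundary shell (of cardinality $O(\delta (l/n)^{d-1})$), and taking $\delta =\delta (l)\rightarrow \infty $ with $\delta (l)/l\rightarrow 0$ (e.g.\ $\delta =\sqrt{l}$), the total error is $o(l^{d}/n^{d})$ uniformly in $\omega $, which becomes $o(1)$ after multiplication by the prefactor $n^{d}/|\Lambda _{l}|$. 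Hence
\begin{equation*}
\mathbf{Z}_{l,j,k,q}^{(\omega )}(t)=\frac{n^{d}}{|\Lambda _{l}|}\sum_{y\in \mathfrak{L}\cap (lb_{j})}f_{t}(\chi _{y}^{(\Omega )}(\omega ))+o(1),\qquad l\rightarrow \infty .
\end{equation*}
Because $\mathfrak{a}_{\Omega }$ is ergodic under $\{\chi _{y}^{(\Omega )}\}_{y\in \mathbb{Z}^{d}}$ and the averaging sets $\mathfrak{L}\cap (lb_{j})$ are Euclidean boxes of growing diameter, the Akcoglu--Krengel ergodic theorem delivers a measurable full--measure subset $\tilde{\Omega}(t)\subset \Omega $ on which the normalized average of $f_{t}\circ \chi _{\cdot }^{(\Omega )}$ over $\mathfrak{L}\cap (lb_{j})$ converges to $\mathbb{E}[f_{t}]$.

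The identification $\mathbb{E}[f_{t}]=\{\mathbf{\Xi }_{\mathrm{p}}(t)\}_{k,q}=\sum_{x\in \mathfrak{L}}\mathbb{E}[\sigma _{\mathrm{p}}^{(\omega )}(x,x-e_{q},0,-e_{k},t)]$ then follows by running the same translation--covariance--plus--ergodic--theorem argument on the definition (\ref{average microscopic AC--conductivity}) of $\Xi _{\mathrm{p},l}^{(\omega )}(t)$ and comparing with (\ref{paramagnetic transport coefficient macro}); reindexing via two successive translations (by $-y$ and then by $e_{q}$) and invoking the invariance of $\mathfrak{a}_{\Omega }$ shows that the three sums $\sum_{x}\mathbb{E}[\sigma _{\mathrm{p}}^{(\omega )}(0,-e_{q},x,x-e_{k},t)]$, $\sum_{x}\mathbb{E}[\sigma _{\mathrm{p}}^{(\omega )}(x,x-e_{q},0,-e_{k},t)]$ and $\sum_{z}\mathbb{E}[\sigma _{\mathrm{p}}^{(\omega )}(e_{q},0,z+e_{k},z,t)]$ all coincide. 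The main technical obstacle is the uniform boundary control in the third step, which is precisely where the $\mathbf{F}$--decay supplied by the Lieb--Robinson bound (\ref{Lieb--Robinson bounds simplifiedbis}) is essential; everything else reduces to standard measurability considerations and a routine application of the multidimensional ergodic theorem.
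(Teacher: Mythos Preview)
Your proof is correct and follows essentially the same route as the paper: define a single--site $L^{\infty}$ function by summing $\sigma_{\mathrm{p}}^{(\omega)}$ over one of the two lattice variables (the paper sums over the first variable rather than the second, but this is immaterial), verify translation covariance from the random invariant state hypothesis and (\ref{static potential0}), control the boundary defect via the $\mathbf{F}$--decay coming from (\ref{Lieb--Robinson bounds simplifiedbis}) and (\ref{(3.1) NS}), and then invoke the Akcoglu--Krengel ergodic theorem on the growing boxes $lb_{j}$. The only cosmetic difference is that the paper packages the ergodic step as an additive process \cite[Theorem 5.5]{OhmIII} and dispatches the boundary terms by pointing back to Lemma \ref{lemma conductivty1}, whereas you spell out the $\delta$--shell argument directly; you should also note explicitly that the finite intersection over $k,q\in\{1,-1,\ldots,d,-d\}$ yields a single full--measure set $\tilde{\Omega}(t)$.
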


\begin{proof}
For any $\omega \in \Omega $, $t\in \mathbb{R}$, $k,q\in \{1,-1,\cdots
,d,-d\}$ and $x\in \mathfrak{L}$, let
\begin{equation}
\mathfrak{F}_{t,k,q}^{(\omega )}\left( \left\{ x\right\} \right)
=\sum\limits_{y\in \mathfrak{L}}\sigma _{\mathrm{p}}^{(\omega )}\left(
y,y-e_{q},x,x-e_{k}\mathbf{,}t\right) \ .  \label{eq ohm70}
\end{equation}%
By the assumptions of the lemma, this sum is uniformly bounded for all $x\in
\mathfrak{L}$ and defines a random variable. Indeed, we infer from (\ref%
{(3.1) NS}) and (\ref{Lieb--Robinson bounds simplifiedbis}) that%
\begin{equation}
\left\vert \mathfrak{F}_{t,k,q}^{(\omega )}\left( \left\{ x\right\} \right)
\right\vert \leq 8\left( 1+\vartheta _{0}\right) ^{2}\left\vert t\right\vert
\left( \left\Vert \mathbf{F}\right\Vert _{1,\mathfrak{L}}\mathbf{D}%
^{-1}\left( \mathrm{e}^{2\mathbf{D}\left\vert t\right\vert D_{\vartheta
_{0}}}-1\right) +2\right) \ .  \label{eq ohm71}
\end{equation}%
We now define an additive process $\{\mathfrak{F}_{t,k,q}^{(\omega )}\left(
\Lambda \right) \}_{\Lambda \in \mathcal{P}_{f}(\mathfrak{L})}$ by
\begin{equation}
\mathfrak{F}_{t,k,q}^{(\omega )}\left( \Lambda \right) =\sum\limits_{x\in
\Lambda }\mathfrak{F}_{t,k,q}^{(\omega )}\left( \left\{ x\right\} \right)
\label{eq ohm72}
\end{equation}%
for any finite subset $\Lambda \in \mathcal{P}_{f}(\mathfrak{L})$ with
cardinality $\left\vert \Lambda \right\vert <\infty $, see \cite[Definition
5.2]{OhmIII}\footnote{%
Replace the product measure of \cite{OhmIII} with ergodic measures $%
\mathfrak{a}_{\Omega }$, as defined in Section \ref{sect 2.1}.}. Indeed, the
map $\omega \mapsto \mathfrak{F}_{t,k,q}^{(\omega )}\left( \Lambda \right) $
is bounded and measurable w.r.t. the $\sigma $--algebra $\mathfrak{A}%
_{\Omega }$ for all $\Lambda \in \mathcal{P}_{f}(\mathfrak{L})$. Moreover,
by Conditions (\ref{static potential0}) and (\ref{translation invariant}),
\begin{equation}
\mathfrak{F}_{t,k,q}^{(\chi _{x}^{(\Omega )}(\omega ))}\left( \Lambda
\right) =\mathfrak{F}_{t,k,q}^{(\omega )}\left( \Lambda +x\right) \ ,\qquad
\Lambda \in \mathcal{P}_{f}(\mathfrak{L}),\ x\in \mathbb{Z}^{d}\ .
\label{eq ohm73}
\end{equation}%
See Section \ref{Section dynamics}, in particular Definition \ref{def second
law2 copy(1)}. For any $\Lambda \in \mathcal{P}_{f}(\mathfrak{L})$,%
\begin{equation*}
\frac{1}{\left\vert \Lambda \right\vert }\mathbb{E}\left[ \mathfrak{F}%
_{t,k,q}^{(\omega )}\left( \Lambda \right) \right] \leq 8\left( 1+\vartheta
_{0}\right) ^{2}\left\vert t\right\vert \left( \left\Vert \mathbf{F}%
\right\Vert _{1,\mathfrak{L}}\mathbf{D}^{-1}\left( \mathrm{e}^{2\mathbf{D}%
\left\vert t\right\vert D_{\vartheta _{0}}}-1\right) +2\right) \ ,
\end{equation*}
because of (\ref{eq ohm71})--(\ref{eq ohm72}). Then, by (\ref{eq ohm73}) and
ergodicity of the measure $\mathfrak{a}_{\Omega }$, for any $t\in \mathbb{R}$
and $k,q\in \{1,-1,\cdots ,d,-d\}$, \cite[Theorem 5.5]{OhmIII}$^{2}$ applied
on the previous additive process holds and one gets the existence of a
measurable subset
\begin{equation*}
\hat{\Omega}_{k,q}\left( t\right) \equiv \hat{\Omega}_{k,q}^{(\beta
,\vartheta ,\lambda )}\left( t\right) \subset \Omega
\end{equation*}%
of full measure such that, for all $\omega \in \hat{\Omega}_{k,q}\left(
t\right) $, $n\in \mathbb{N}$ and $j\in \mathcal{D}_{n}$,%
\begin{equation}
\underset{l\rightarrow \infty }{\lim }\left\{ \frac{n^{d}}{\left\vert
\Lambda _{l}\right\vert }\mathfrak{F}_{t,k,q}^{(\omega )}\left(
lb_{j}\right) \right\} =\mathbb{E}\left[ \mathfrak{F}_{t,k,q}^{(\omega
)}\left( \left\{ 0\right\} \right) \right] \ .  \label{eq ohm7}
\end{equation}%
In the same way one proves Lemma \ref{lemma conductivty1},
\begin{equation*}
\underset{l\rightarrow \infty }{\lim }\left\{ \frac{n^{d}}{\left\vert
\Lambda _{l}\right\vert }\sum\limits_{x\in \mathfrak{L}\cap
(lb_{j})}\sum\limits_{y\in \mathfrak{L}\backslash \left\{ \mathfrak{L}\cap
(lb_{j})\right\} }\sigma _{\mathrm{p}}^{(\omega )}\left( y,y-e_{q},x,x-e_{k}%
\mathbf{,}s_{1}-s_{2}\right) \right\} =0\ .
\end{equation*}%
Using this with (\ref{eq ohm70}), (\ref{eq ohm72}) and (\ref{eq ohm7}), and
observing meanwhile from the proof of Theorem \ref{Theorem AC conductivity
measure copy(1)} that
\begin{equation*}
\mathbb{E}\left[ \mathfrak{F}_{t,k,q}^{(\omega )}\left( \left\{ 0\right\}
\right) \right] =\sum\limits_{x\in \mathfrak{L}}\mathbb{E}\left[ \sigma _{%
\mathrm{p}}^{(\omega )}\left( x,x-e_{q},0,-e_{k}\mathbf{,}t\right) \right]
=\left\{ \mathbf{\Xi }_{\mathrm{p}}\left( t\right) \right\} _{k,q}
\end{equation*}%
for all $k,q\in \{1,-1,\cdots ,d,-d\}$ and any $t\in \mathbb{R}$, we arrive
at the assertion for any realization $\omega \in \tilde{\Omega}\left(
t\right) $ with%
\begin{equation*}
\tilde{\Omega}\left( t\right) :=\underset{k,q\in \{1,-1,\cdots ,d,-d\}}{%
\bigcap }\hat{\Omega}_{k,q}\left( t\right) \ .
\end{equation*}%
[Any countable intersection of measurable sets of full measure has full
measure.]
\end{proof}

Exactly like in the proof of Lemma \ref{lemma conductivty3}, one shows that,
for any $\beta \in \mathbb{R}^{+}$, $\vartheta ,\lambda \in \mathbb{R}%
_{0}^{+}$ and $t\in \mathbb{R}$, there is a measurable subset $\tilde{\Omega}%
\left( t\right) \equiv \tilde{\Omega}^{(\beta ,\vartheta ,\lambda )}\left(
t\right) \subset \Omega $ of full measure such that, for any $\omega \in
\tilde{\Omega}\left( t\right) $,%
\begin{equation}
\mathbf{\Xi }_{\mathrm{p}}\left( t\right) =\ \underset{l\rightarrow \infty }{%
\lim }\Xi _{\mathrm{p},l}^{(\omega )}\left( t\right) \in \mathcal{B}(\mathbb{%
R}^{d})\ .  \label{conductivity000}
\end{equation}%
This holds under Conditions (\ref{(3.1) NS})--(\ref{(3.2) NS}) and (\ref%
{static potential0}), provided that the map (\ref{map}) is a random
invariant state.

Define the deterministic function%
\begin{eqnarray}
\mathbf{X}_{\infty }\left( s_{1},s_{2}\right)  &:=&\underset{k,q\in
\{1,-1,\cdots ,d,-d\}}{\sum }\left\{ \mathbf{\Xi }_{\mathrm{p}}\left(
s_{1}-s_{2}\right) \right\} _{k,q}  \notag \\
&&\times \int\nolimits_{\mathbb{R}^{d}}\left[ E_{\mathbf{A}}(s_{1},x)\right]
(e_{k})\left[ E_{\mathbf{A}}(s_{2},x)\right] (e_{q})\mathrm{d}^{d}x
\label{X infinity}
\end{eqnarray}%
for any $s_{1},s_{2}\in \mathbb{R}$. We show next that the function $\mathbf{%
X}_{l}^{(\omega )}$ defined by (\ref{Lemma non existing2}) almost surely
converges to $\mathbf{X}_{\infty }\equiv \mathbf{X}_{\infty }^{(\beta
,\vartheta ,\lambda )}$, as $l\rightarrow \infty $:

\begin{satz}[Infinite volume limit of $\mathbf{X}$--integrands -- I]
\label{limit ohm1limit ohm1}\mbox{
}\newline
Assume (\ref{(3.1) NS})--(\ref{(3.2) NS}), (\ref{static potential0}) and
that the map (\ref{map}) is a random invariant state. Let $\beta \in \mathbb{%
R}^{+}$, $\vartheta ,\lambda \in \mathbb{R}_{0}^{+}$ and $s_{1},s_{2}\in
\mathbb{R}$. Then, there is a measurable subset $\tilde{\Omega}\left(
s_{1},s_{2}\right) \equiv \tilde{\Omega}^{(\beta ,\vartheta ,\lambda
)}\left( s_{1},s_{2}\right) \subset \Omega $ of full measure such that, for
any $\mathbf{A}\in \mathbf{C}_{0}^{\infty }$ and $\omega \in \tilde{\Omega}%
\left( s_{1},s_{2}\right) $,%
\begin{equation}
\underset{l\rightarrow \infty }{\lim }\mathbf{X}_{l}^{(\omega )}\left(
s_{1},s_{2}\right) =\mathbf{X}_{\infty }\left( s_{1},s_{2}\right) \ .  \notag
\end{equation}
\end{satz}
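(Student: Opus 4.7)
The plan is to combine the three approximation lemmas (Lemma \ref{lemma conductivty1}, Lemma \ref{lemma conductivty2}, Lemma \ref{lemma conductivty3}) with a standard Riemann-sum argument, applied at the fixed time $t = s_1 - s_2$. Since the $\mathbf{A}$--dependence only enters through bounded smooth factors $E_{\mathbf{A}}(s_i,\cdot)$, the full-measure set should be extracted independently of $\mathbf{A}$.

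First I would fix $s_1,s_2 \in \mathbb{R}$ and set $t := s_1 - s_2$. For each $n \in \mathbb{N}$, $j \in \mathcal{D}_n$ and $k,q \in \{1,-1,\dots,d,-d\}$, Lemma \ref{lemma conductivty3} provides a measurable subset $\tilde{\Omega}_{n,j,k,q}(t) \subset \Omega$ of full measure on which $\mathbf{Z}^{(\omega)}_{l,j,k,q}(t)$ converges to $\{\mathbf{\Xi}_{\mathrm{p}}(t)\}_{k,q}$ as $l \to \infty$. Since $\mathcal{D}_n$ is finite for each $n$ and there are only finitely many $(k,q)$, and $n$ ranges over a countable set, the intersection
\begin{equation*}
\tilde{\Omega}(s_1,s_2) := \bigcap_{n \in \mathbb{N}} \bigcap_{j \in \mathcal{D}_n} \bigcap_{k,q} \tilde{\Omega}_{n,j,k,q}(s_1 - s_2)
\end{equation*}
still has full $\mathfrak{a}_{\Omega }$--measure. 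Note that this set is independent of $\mathbf{A}$.

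For any $\omega \in \tilde{\Omega}(s_1,s_2)$ and any fixed $n \in \mathbb{N}$, the representation (\ref{Y bar}) is a finite linear combination in $j,k,q$ of the quantities $\mathbf{Z}^{(\omega)}_{l,j,k,q}(t)$ with coefficients depending on $E_{\mathbf{A}}(s_i,z^{(j)})$, so I can take $l \to \infty$ termwise to obtain
\begin{equation*}
\lim_{l \to \infty} \mathbf{\bar{Y}}_{l,n}^{(\omega )}(s_1,s_2) = \frac{1}{n^d}\sum_{j \in \mathcal{D}_n}\sum_{k,q} \{\mathbf{\Xi}_{\mathrm{p}}(s_1 - s_2)\}_{k,q}\, [E_{\mathbf{A}}(s_1,z^{(j)})](e_k)\, [E_{\mathbf{A}}(s_2,z^{(j)})](e_q) =: \mathbf{X}_{\infty,n}(s_1,s_2).
\end{equation*}
Because the $b_j$ are cubes of side $1/n$ tiling $[-1/2,1/2]^d$ (which contains $\mathrm{supp}\,\mathbf{A}(s_i,\cdot)$ by (\ref{assumption A})) and since $E_{\mathbf{A}}$ is smooth and compactly supported, $\mathbf{X}_{\infty,n}(s_1,s_2)$ is a Riemann sum and converges to $\mathbf{X}_\infty(s_1,s_2)$ as $n \to \infty$ by uniform continuity of $x \mapsto [E_{\mathbf{A}}(s_1,x)](e_k)[E_{\mathbf{A}}(s_2,x)](e_q)$.

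To conclude, I write the telescopic decomposition
\begin{equation*}
\mathbf{X}_l^{(\omega)}(s_1,s_2) - \mathbf{X}_\infty(s_1,s_2) = \bigl[\mathbf{X}_l^{(\omega)} - \mathbf{Y}_{l,n}^{(\omega)}\bigr] + \bigl[\mathbf{Y}_{l,n}^{(\omega)} - \mathbf{\bar{Y}}_{l,n}^{(\omega)}\bigr] + \bigl[\mathbf{\bar{Y}}_{l,n}^{(\omega)} - \mathbf{X}_{\infty,n}\bigr] + \bigl[\mathbf{X}_{\infty,n} - \mathbf{X}_\infty\bigr].
\end{equation*}
Taking $\limsup_{l \to \infty}$ for fixed $n$, Lemma \ref{lemma conductivty1} annihilates the first bracket, the third bracket vanishes by the preceding paragraph on $\tilde{\Omega}(s_1,s_2)$, while Lemma \ref{lemma conductivty2} and the Riemann-sum convergence make the second and fourth brackets arbitrarily small as $n \to \infty$, uniformly in $\omega$. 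Sending $n \to \infty$ after $l \to \infty$ yields the claimed limit. The only delicate point is the bookkeeping of null sets so that $\tilde{\Omega}(s_1,s_2)$ does not depend on $\mathbf{A}$ or $n$; this is resolved by the countable intersection above, since neither Lemmas \ref{lemma conductivty1} and \ref{lemma conductivty2} nor the Riemann-sum step require discarding further $\omega$--dependent null sets.
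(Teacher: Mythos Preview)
Your proof is correct and follows essentially the same route as the paper: combine Lemmata \ref{lemma conductivty1}--\ref{lemma conductivty3} via a telescoping decomposition, pass to the limit $l\to\infty$ in the finite sum (\ref{Y bar}), and recognize the result as a Riemann sum converging to $\mathbf{X}_\infty$. The only cosmetic difference is that Lemma \ref{lemma conductivty3} already furnishes a single full-measure set $\tilde{\Omega}(t)$ valid simultaneously for all $n\in\mathbb{N}$, $j\in\mathcal{D}_n$ and $k,q$, so your extra countable intersection over $n,j$ is redundant (though harmless).
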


\begin{proof}
Let $\beta \in \mathbb{R}^{+}$, $\vartheta _{0},\lambda \in \mathbb{R}%
_{0}^{+}$, $\vartheta \in \lbrack 0,\vartheta _{0}]$ and $s_{1},s_{2}\in
\mathbb{R}$. Assume w.l.o.g. that (\ref{assumption A}) holds. Using Lemmata %
\ref{lemma conductivty1}--\ref{lemma conductivty3} and (\ref{Y bar})--(\ref%
{def Z}), we obtain the existence of a measurable subset $\tilde{\Omega}%
\left( s_{1},s_{2}\right) \equiv \tilde{\Omega}^{(\beta ,\vartheta ,\lambda
)}\left( s_{1},s_{2}\right) \subset \Omega $ of full measure such that, for
any $\omega \in \tilde{\Omega}\left( s_{1},s_{2}\right) $,
\begin{align*}
\underset{l\rightarrow \infty }{\lim }\mathbf{X}_{l}^{(\omega )}\left(
s_{1},s_{2}\right) & =\underset{k,q\in \{1,-1,\cdots ,d,-d\}}{\sum }\left\{
\mathbf{\Xi }_{\mathrm{p}}\left( s_{1}-s_{2}\right) \right\} _{k,q} \\
& \times \underset{n\rightarrow \infty }{\lim }\left\{ \frac{1}{n^{d}}%
\underset{j\in \mathcal{D}_{n}}{\sum }\left[ E_{\mathbf{A}}(s_{1},z^{(j)})%
\right] (e_{k})\left[ E_{\mathbf{A}}(s_{2},z^{(j)})\right] (e_{q})\right\} \
.
\end{align*}%
The latter implies the theorem because the term within the limit $%
n\rightarrow \infty $ is a Riemann sum and $E_{\mathbf{A}}\in \mathbf{C}%
_{0}^{\infty }$ for any $\mathbf{A}\in \mathbf{C}_{0}^{\infty }$, see (\ref%
{V bar}).
\end{proof}

To find the energy increment $\mathfrak{I}_{\mathrm{p}}^{(\omega ,\eta
\mathbf{A}_{l})}\left( t\right) $ given by (\ref{Lemma non existing}) in the
limit $(\eta ,l^{-1})\rightarrow (0,0)$, we use below Lebesgue's dominated
convergence theorem and we thus need to remove the dependency of the
measurable subset $\tilde{\Omega}\left( s_{1},s_{2}\right) $ on $%
s_{1},s_{2}\in \mathbb{R}$, see Theorem \ref{limit ohm1limit ohm1}. To
achieve this, we first show uniform boundedness and equicontinuity of the
function $\mathbf{X}_{l}^{(\omega )}$ defined by (\ref{Lemma non existing2}):

\begin{lemma}[Uniform Boundedness and Equicontinuity of $\mathbf{X}$%
--integrands]
\label{lemma conductivty4 copy(2)}\mbox{
}\newline
Assume (\ref{(3.1) NS})--(\ref{(3.2) NS}). The family%
\begin{equation*}
\left\{ \left( s_{1},s_{2}\right) \mapsto \mathbf{X}_{l}^{(\omega )}\left(
s_{1},s_{2}\right) \right\} _{l\in \mathbb{R}^{+},\omega \in \Omega }
\end{equation*}%
of maps from $\mathbb{R}^{2}$ to $\mathbb{C}$ is uniformly bounded and
equicontinuous.
\end{lemma}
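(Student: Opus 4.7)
The plan is to bound both $\mathbf{X}_l^{(\omega)}(s_1,s_2)$ and the partial derivatives $\partial_{s_i}\mathbf{X}_l^{(\omega)}(s_1,s_2)$ uniformly in $(l,\omega)$ by combining the standard Lieb--Robinson estimate (\ref{Lieb--Robinson bounds simplifiedbis}) with the compact support and smoothness of $\mathbf{A}\in\mathbf{C}_0^\infty$. Equicontinuity will then follow from uniform Lipschitz regularity in $(s_1,s_2)$.

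For uniform boundedness, I would exploit the support properties of the rescaled field. Since $\mathbf{A}_l(t,x)=\mathbf{A}(t,l^{-1}x)$ vanishes for $x\notin l\cdot\mathrm{supp}(\mathbf{A}(t,\cdot))$, the bond variables $\mathbf{E}_{s_1}^{\mathbf{A}_l}(\mathbf{x})$ and $\mathbf{E}_{s_2}^{\mathbf{A}_l}(\mathbf{y})$ are supported on at most $O(l^d)$ bonds in $\mathfrak{K}$, and are bounded in modulus by $\|E_{\mathbf{A}}\|_\infty$ (uniformly in $l$). Moreover, $\mathbf{E}_s^{\mathbf{A}_l}$ vanishes outside $s\in[t_0,t_1]$, so $|s_1-s_2|\le t_1-t_0$ on the support, which renders the exponential factor in (\ref{Lieb--Robinson bounds simplifiedbis}) a harmless constant depending only on $\vartheta_0$ and $\mathbf{A}$. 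Splitting the sum over $\mathbf{y}\in\mathfrak{K}$ in the definition (\ref{Lemma non existing2}) into a diagonal part (where $\{y^{(1)},y^{(2)}\}\cap\{x^{(1)},x^{(2)}\}\ne\emptyset$), handled by the trivial bound $\|\sigma_p^{(\omega)}(\mathbf{x},\mathbf{y},\cdot)\|\le D(1+\vartheta_0)^2|t|$, and an off-diagonal part, handled by (\ref{Lieb--Robinson bounds simplifiedbis}) together with $\sum_z\mathbf{F}(|z|)\le \|\mathbf{F}\|_{1,\mathfrak{L}}$ from (\ref{(3.1) NS}), the inner sum over $\mathbf{y}$ is $O(1)$ uniformly in $\mathbf{x}$ and $\omega$. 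The outer sum over $\mathbf{x}\in\mathfrak{K}$ then contributes $O(l^d)$, and dividing by $|\Lambda_l|$ produces a bound independent of $(l,\omega,s_1,s_2)$. Crucially, the uniformity in $\omega$ comes from (\ref{norm sup interaction}), which bounds $\|\Psi^{(\omega,\vartheta)}\|_{\mathcal{W}}$ uniformly over $\Omega$.

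For equicontinuity, I would prove the stronger statement that the family is uniformly Lipschitz in $(s_1,s_2)$. Differentiating (\ref{Lemma non existing2}) and using the product rule, it suffices to bound $\sup_{s\in\mathbb{R}}|\partial_s\mathbf{E}_s^{\mathbf{A}_l}(\mathbf{x})|$ and $\sup_{|t|\le t_1-t_0}|\partial_t\sigma_p^{(\omega)}(\mathbf{x},\mathbf{y},t)|$ uniformly in $(l,\omega,\mathbf{x},\mathbf{y})$. The first is immediate from $\mathbf{A}\in\mathbf{C}_0^\infty$: $|\partial_s\mathbf{E}_s^{\mathbf{A}_l}(\mathbf{x})|\le\|\partial_t E_{\mathbf{A}}\|_\infty$, again uniformly in $l$. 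For the second, differentiating the time integral in (\ref{backwards -1bis}) gives
\begin{equation*}
\partial_t\sigma_p^{(\omega)}(\mathbf{x},\mathbf{y},t)=\varrho^{(\beta,\omega,\vartheta,\lambda)}\bigl(i[I_\mathbf{y}^{(\omega,\vartheta)},\tau_t^{(\omega,\vartheta,\lambda)}(I_\mathbf{x}^{(\omega,\vartheta)})]\bigr),
\end{equation*}
whose modulus is controlled by precisely the Lieb--Robinson estimate (\ref{Lieb--Robinson bounds simplified}), yielding a bound of the form $D\sum_{x\in\{x^{(1)},x^{(2)}\}}\sum_{y\in\{y^{(1)},y^{(2)}\}}\mathbf{F}(|x-y|)$ uniform in $\omega$ on compact time intervals. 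Assembling these with the same summation scheme as above produces a Lipschitz constant for $\mathbf{X}_l^{(\omega)}$ that is independent of $(l,\omega)$.

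The main obstacle will be just the bookkeeping: showing that the constants implicit in the Lieb--Robinson bounds are truly uniform in $\omega\in\Omega$ (ensured by (\ref{norm sup interaction})) and that the $O(l^d)$ growth of the number of relevant bonds exactly matches the normalization $|\Lambda_l|^{-1}$ in (\ref{Lemma non existing2}). Everything else is a careful combination of (\ref{(3.1) NS}), (\ref{Lieb--Robinson bounds simplifiedbis}), the compactness of $\mathrm{supp}(\mathbf{A})$, and the smoothness of $E_{\mathbf{A}}$.
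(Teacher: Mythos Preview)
Your proposal is correct and follows essentially the same approach as the paper: uniform boundedness via (\ref{Lieb--Robinson bounds simplifiedbis}), (\ref{bound easy}), and the summability (\ref{(3.1) NS}), and equicontinuity via a uniform Lipschitz bound coming again from Lieb--Robinson estimates applied to $\partial_t\sigma_{\mathrm p}^{(\omega)}$ together with the smoothness of $E_{\mathbf A}$. The paper is terser---it simply cites \cite[Theorem 3.6]{OhmV} for the equicontinuity part and omits the details---but your explicit product-rule argument is exactly the computation underlying that reference.
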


\begin{proof}
The uniform boundedness of this collection of maps is an immediate
consequence of (\ref{Lieb--Robinson bounds simplifiedbis}) and (\ref{bound
easy}). The arguments are indeed similar to those proving Inequality (\ref%
{bound easy1}): Assume w.l.o.g. that (\ref{assumption A}) holds. Then, by
combining (\ref{Lemma non existing2}) with (\ref{Lieb--Robinson bounds
simplifiedbis}) and (\ref{bound easy}) one gets
\begin{equation*}
\left\vert \mathbf{X}_{l}^{(\omega )}\left( s_{1},s_{2}\right) \right\vert
\leq 32d^{2}\left\Vert \mathbf{E}^{\mathbf{A}}\right\Vert _{\infty
}^{2}\left( 1+\vartheta _{0}\right) ^{2}\left( t_{1}-t_{0}\right) \left(
\left\Vert \mathbf{F}\right\Vert _{1,\mathfrak{L}}\mathbf{D}^{-1}\mathrm{e}%
^{2\mathbf{D}\left( t_{1}-t_{0}\right) D_{\vartheta _{0}}}+2\right)
\end{equation*}%
for any $\beta \in \mathbb{R}^{+}$, $\vartheta _{0},\lambda \in \mathbb{R}%
_{0}^{+}$, $\vartheta \in \lbrack 0,\vartheta _{0}]$ and $s_{1},s_{2}\in
\mathbb{R}$.

To prove the uniform equicontinuity, we use \cite[Theorem 3.6]{OhmV}, which
is also an immediate consequence of (\ref{Lieb--Robinson bounds
simplifiedbis}). We omit the details.
\end{proof}

Theorem \ref{limit ohm1limit ohm1} and Lemma \ref{lemma conductivty4 copy(2)}
allows us to eliminate the $(s_{1},s_{2})$--dependency of the measurable set
$\tilde{\Omega}\left( s_{1},s_{2}\right) $ of Theorem \ref{limit ohm1limit
ohm1}.

\begin{koro}[Infinite volume limit of $\mathbf{X}$--integrands -- II]
\label{lemma conductivty4 copy(6)}\mbox{
}\newline
Assume (\ref{(3.1) NS})--(\ref{(3.2) NS}), (\ref{static potential0}) and
that the map (\ref{map}) is a random invariant state. Let $\beta \in \mathbb{%
R}^{+}$ and $\vartheta ,\lambda \in \mathbb{R}_{0}^{+}$. Then, there is a
measurable subset $\tilde{\Omega}\equiv \tilde{\Omega}^{(\beta ,\vartheta
,\lambda )}\subset \Omega $ of full measure such that, for any $%
s_{1},s_{2}\in \mathbb{R}$, $\mathbf{A}\in \mathbf{C}_{0}^{\infty }$ and $%
\omega \in \tilde{\Omega}$,%
\begin{equation}
\underset{l\rightarrow \infty }{\lim }\mathbf{X}_{l}^{(\omega )}\left(
s_{1},s_{2}\right) =\mathbf{X}_{\infty }\left( s_{1},s_{2}\right) \ .
\label{limit ohm1bis}
\end{equation}
\end{koro}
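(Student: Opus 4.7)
The plan is to promote the $(s_1,s_2)$-dependent full-measure set $\tilde{\Omega}(s_1,s_2)$ of Theorem \ref{limit ohm1limit ohm1} to a single full-measure set $\tilde{\Omega}$ valid simultaneously for all $(s_1,s_2)\in\mathbb{R}^{2}$ and all $\mathbf{A}\in\mathbf{C}_{0}^{\infty}$. The key observation is that the randomness in $\mathbf{X}_{l}^{(\omega)}$ enters only through the paramagnetic transport coefficient $\sigma_{\mathrm{p}}^{(\omega)}$, whose ergodic infinite-volume limit (Lemma \ref{lemma conductivty3}) produces $\mathbf{\Xi}_{\mathrm{p}}(t)$ in a manner depending neither on $\mathbf{A}$ nor on $(s_1,s_2)$ individually, but only on the time difference $t=s_1-s_2$ and the indices $k,q$.

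First, fix a countable dense subset $Q\subset\mathbb{R}$ (say $Q=\mathbb{Q}$). For each $t\in Q$ and each pair $k,q\in\{1,-1,\ldots,d,-d\}$, the argument of Lemma \ref{lemma conductivty3} yields a full-measure set $\hat{\Omega}_{k,q}(t)$ on which $\{\Xi_{\mathrm{p},l}^{(\omega)}(t)\}_{k,q}$ converges to $\{\mathbf{\Xi}_{\mathrm{p}}(t)\}_{k,q}$. Set
\[
\tilde{\Omega}:=\bigcap_{t\in Q}\bigcap_{k,q\in\{1,-1,\ldots,d,-d\}}\hat{\Omega}_{k,q}(t)\ ,
\]
which is a countable intersection of full-measure sets, hence of full measure and independent of both $(s_1,s_2)$ and $\mathbf{A}$.

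Next, invoke equicontinuity: the Lieb--Robinson-type estimate (\ref{Lieb--Robinson bounds simplifiedbis}), used in the style of Lemma \ref{lemma conductivty4 copy(2)} (see also \cite[Theorem 3.6]{OhmV}), shows that the family $\{t\mapsto\Xi_{\mathrm{p},l}^{(\omega)}(t)\}_{l\in\mathbb{R}^{+},\,\omega\in\Omega}$ is uniformly bounded and equicontinuous on every compact time interval. Combining this with the pointwise convergence on $Q$ and the continuity of $\mathbf{\Xi}_{\mathrm{p}}$ (Theorem \ref{Theorem AC conductivity measure copy(1)}), a standard $\varepsilon/3$ argument upgrades the convergence to $\lim_{l\to\infty}\Xi_{\mathrm{p},l}^{(\omega)}(t)=\mathbf{\Xi}_{\mathrm{p}}(t)$ for every $t\in\mathbb{R}$ and every $\omega\in\tilde{\Omega}$, uniformly for $t$ in compacta.

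Finally, fix any $\mathbf{A}\in\mathbf{C}_{0}^{\infty}$ and any $(s_1,s_2)\in\mathbb{R}^{2}$. The uniform-in-$\omega$ estimates of Lemmata \ref{lemma conductivty1}--\ref{lemma conductivty2} reduce the convergence of $\mathbf{X}_{l}^{(\omega)}(s_1,s_2)$ to that of the discretized quantity $\mathbf{\bar{Y}}_{l,n}^{(\omega)}(s_1,s_2)$ in the iterated limit $l\to\infty$ followed by $n\to\infty$. By the decomposition (\ref{Y bar})--(\ref{def Z}), $\mathbf{\bar{Y}}_{l,n}^{(\omega)}(s_1,s_2)$ is a finite linear combination of the $\mathbf{Z}_{l,j,k,q}^{(\omega)}(s_1-s_2)$, weighted by $\mathbf{A}$-dependent smooth factors; up to boundary contributions of order $l^{-1}$ (controlled exactly as in the proof of Lemma \ref{lemma conductivty1}), these quantities coincide with the components of $\Xi_{\mathrm{p},l}^{(\omega)}(s_1-s_2)$, for which the uniform convergence on compacta just established yields the limit on $\tilde{\Omega}$. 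Passing $n\to\infty$ is then a deterministic Riemann-sum approximation of the $x$-integral in the definition (\ref{X infinity}) of $\mathbf{X}_{\infty}$, so no further almost-sure argument is needed. The main obstacle is precisely this coordination of the full-measure set across the uncountable parameter $(s_1,s_2)$ and the uncountable class $\mathbf{C}_{0}^{\infty}$; it is overcome by factoring out the $\mathbf{A}$-dependence and exploiting the Lieb--Robinson equicontinuity to reduce everything to a countable set of time values.
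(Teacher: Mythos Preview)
Your overall strategy---intersect full-measure sets over a countable dense set of time parameters and extend by equicontinuity---is exactly the paper's, but the paper applies it one level higher: it takes the sets $\tilde{\Omega}(s_1,s_2)$ from Theorem~\ref{limit ohm1limit ohm1} directly (where the $\mathbf{A}$-independence is already built in), intersects over $(s_1,s_2)\in\mathbb{Q}^2$, and then invokes the equicontinuity of $(s_1,s_2)\mapsto\mathbf{X}_l^{(\omega)}(s_1,s_2)$ from Lemma~\ref{lemma conductivty4 copy(2)} together with the density of $\mathbb{Q}$. This avoids re-opening the chain of Lemmata~\ref{lemma conductivty1}--\ref{lemma conductivty3}.

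Your step~4, however, contains a genuine gap. You assert that the sub-box averages $\mathbf{Z}_{l,j,k,q}^{(\omega)}(t)$ ``coincide with the components of $\Xi_{\mathrm{p},l}^{(\omega)}(t)$'' up to boundary terms of order $l^{-1}$. This is not correct: $\mathbf{Z}_{l,j,k,q}^{(\omega)}$ is a normalized sum of $\sigma_{\mathrm{p}}^{(\omega)}$ over the \emph{translated} sub-box $lb_j$ (see (\ref{def Z})), whereas $\{\Xi_{\mathrm{p},l}^{(\omega)}\}_{k,q}$ is the corresponding sum over $\Lambda_l$ (see (\ref{average microscopic AC--conductivity})). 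For a fixed disorder realization $\omega$ these are genuinely different fluctuating quantities; their difference is a bulk term, not a boundary one, and it vanishes only in the ergodic limit $l\to\infty$, not deterministically. Consequently, the almost-sure convergence of $\Xi_{\mathrm{p},l}^{(\omega)}(t)$ that you establish in step~3 does not, on its own, yield convergence of the $\mathbf{Z}_{l,j,k,q}^{(\omega)}(t)$ on the same set $\tilde{\Omega}$. The repair is easy: the set $\hat{\Omega}_{k,q}(t)$ produced in Lemma~\ref{lemma conductivty3} already guarantees $\mathbf{Z}_{l,j,k,q}^{(\omega)}(t)\to\{\mathbf{\Xi}_{\mathrm{p}}(t)\}_{k,q}$ simultaneously for \emph{all} $n\in\mathbb{N}$ and $j\in\mathcal{D}_n$ (this is precisely how the Akcoglu--Krengel theorem is invoked there). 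So run your countable-intersection/equicontinuity argument on the families $\{t\mapsto\mathbf{Z}_{l,j,k,q}^{(\omega)}(t)\}_{l,n,j}$ themselves---the bound (\ref{Lieb--Robinson bounds simplifiedbis}) supplies the required uniform (in $l,n,j$) equicontinuity---and drop the comparison with $\Xi_{\mathrm{p},l}^{(\omega)}$ altogether. After that, your step~4 goes through as written.
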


\begin{proof}
Fix $\beta \in \mathbb{R}^{+}$ and $\vartheta ,\lambda \in \mathbb{R}%
_{0}^{+} $. By Theorem \ref{limit ohm1limit ohm1}, for any $s_{1},s_{2}\in
\mathbb{Q}$, there is a measurable subset $\hat{\Omega}\left(
s_{1},s_{2}\right) \subset \Omega $ of full measure such that (\ref{limit
ohm1bis}) holds. Let $\tilde{\Omega}$ be the intersection of all such
subsets $\hat{\Omega}\left( s_{1},s_{2}\right) $. Since this intersection is
countable, $\tilde{\Omega}$ is measurable and has full measure. By Lemma \ref%
{lemma conductivty4 copy(2)} and the density of $\mathbb{Q}$ in $\mathbb{R}$%
, it follows that (\ref{limit ohm1bis}) holds true for any $s_{1},s_{2}\in
\mathbb{R}$, $\mathbf{A}\in \mathbf{C}_{0}^{\infty }$ and $\omega \in \tilde{%
\Omega}$.
\end{proof}

Therefore, because of (\ref{Lemma non existing}), Lemma \ref{lemma
conductivty4 copy(2)} and Corollary \ref{lemma conductivty4 copy(6)}, we can
now use Lebesgue's dominated convergence theorem to get the paramagnetic
energy density $\mathfrak{i}_{\mathrm{p}}$ defined by (\ref{paramagnetic
energy density}):

\begin{satz}[Paramagnetic energy density]
\label{main 1 copy(2)}\mbox{
}\newline
Assume (\ref{(3.1) NS})--(\ref{(3.2) NS}), (\ref{static potential0})--(\ref%
{(3.3) NS}) and that the map (\ref{map}) is a random invariant state. Let $%
\beta \in \mathbb{R}^{+}$ and $\vartheta ,\lambda \in \mathbb{R}_{0}^{+}$.
Then, there is a measurable subset $\tilde{\Omega}\equiv \tilde{\Omega}%
^{(\beta ,\vartheta ,\lambda )}\subset \Omega $ of full measure such that,
for any $\omega \in \tilde{\Omega}$, $\mathbf{A}\in \mathbf{C}_{0}^{\infty }$
and $t\geq t_{0}$,
\begin{equation}
\mathfrak{i}_{\mathrm{p}}\left( t\right) :=\underset{(\eta
,l^{-1})\rightarrow (0,0)}{\lim }\left\{ \left( \eta ^{2}l^{d}\right) ^{-1}%
\mathfrak{I}_{\mathrm{p}}^{(\omega ,\eta \mathbf{A}_{l})}\left( t\right)
\right\} =\int\nolimits_{t_{0}}^{t}\int\nolimits_{t_{0}}^{s_{1}}\mathbf{X}%
_{\infty }(s_{1},s_{2})\ \mathrm{d}s_{2}\mathrm{d}s_{1}\ .  \notag
\end{equation}
\end{satz}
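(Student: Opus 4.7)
The plan is to reduce the statement to an application of Lebesgue's dominated convergence theorem by using the three ingredients already prepared: the second-order Duhamel-type expansion (\ref{Lemma non existing}), the a.s.~pointwise convergence given by Corollary \ref{lemma conductivty4 copy(6)}, and the uniform boundedness supplied by Lemma \ref{lemma conductivty4 copy(2)}.

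First, I would rewrite (\ref{Lemma non existing}) in the form
\begin{equation*}
\left( \eta ^{2}l^{d}\right) ^{-1}\mathfrak{I}_{\mathrm{p}}^{(\omega ,\eta
\mathbf{A}_{l})}\left( t\right) =\int_{t_{0}}^{t}\int_{t_{0}}^{s_{1}}\mathbf{X}_{l}^{(\omega )}(s_{1},s_{2})\,\mathrm{d}s_{2}\mathrm{d}s_{1}+R_{l}(\eta ),
\end{equation*}
where $|R_{l}(\eta )|=\mathcal{O}(\eta )$ \emph{uniformly} in $l\in \mathbb{R}^{+}$, $\omega \in \Omega $, $\vartheta \in \lbrack 0,\vartheta _{0}]$, $\lambda \in \mathbb{R}_{0}^{+}$ and $t\geq t_{0}$, as stated right after (\ref{Lemma non existing2}). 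This uniformity is what will allow the joint limit $(\eta ,l^{-1})\rightarrow (0,0)$ to be taken in any order.

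Next, fix $\beta \in \mathbb{R}^{+}$ and $\vartheta ,\lambda \in \mathbb{R}_{0}^{+}$, and let $\tilde{\Omega}\equiv \tilde{\Omega}^{(\beta ,\vartheta ,\lambda )}\subset \Omega $ be the full-measure set produced by Corollary \ref{lemma conductivty4 copy(6)}, which does not depend on $\mathbf{A}\in \mathbf{C}_{0}^{\infty }$. For any $\omega \in \tilde{\Omega}$ the integrands $\mathbf{X}_{l}^{(\omega )}(s_{1},s_{2})$ converge to $\mathbf{X}_{\infty }(s_{1},s_{2})$ pointwise on $\mathbb{R}^{2}$. Because, by (\ref{zero mean field assumption}), the electric field $E_{\mathbf{A}}$ is supported in the compact time interval $[t_{0},t_{1}]$, the integration in (\ref{Lemma non existing}) actually takes place over the bounded triangle $\{(s_{1},s_{2}):t_{0}\leq s_{2}\leq s_{1}\leq t\wedge t_{1}\}$. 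On this bounded domain the family $\{\mathbf{X}_{l}^{(\omega )}\}_{l\in \mathbb{R}^{+},\,\omega \in \Omega }$ is uniformly bounded by Lemma \ref{lemma conductivty4 copy(2)}, so Lebesgue's dominated convergence theorem gives
\begin{equation*}
\lim_{l\rightarrow \infty }\int_{t_{0}}^{t}\int_{t_{0}}^{s_{1}}\mathbf{X}_{l}^{(\omega )}(s_{1},s_{2})\,\mathrm{d}s_{2}\mathrm{d}s_{1}=\int_{t_{0}}^{t}\int_{t_{0}}^{s_{1}}\mathbf{X}_{\infty }(s_{1},s_{2})\,\mathrm{d}s_{2}\mathrm{d}s_{1}
\end{equation*}
for every $\omega \in \tilde{\Omega}$, $\mathbf{A}\in \mathbf{C}_{0}^{\infty }$ and $t\geq t_{0}$.

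Finally, combining the above with the uniform-in-$l$ bound $|R_{l}(\eta )|=\mathcal{O}(\eta )$ yields the joint limit
\begin{equation*}
\lim_{(\eta ,l^{-1})\rightarrow (0,0)}\left( \eta ^{2}l^{d}\right) ^{-1}\mathfrak{I}_{\mathrm{p}}^{(\omega ,\eta \mathbf{A}_{l})}\left( t\right) =\int_{t_{0}}^{t}\int_{t_{0}}^{s_{1}}\mathbf{X}_{\infty }(s_{1},s_{2})\,\mathrm{d}s_{2}\mathrm{d}s_{1},
\end{equation*}
which is the asserted identity. There is essentially no remaining obstacle at this point: the genuinely hard analytic content (the almost-sure convergence of $\mathbf{X}_{l}^{(\omega )}$ to the deterministic $\mathbf{X}_{\infty }$, via Akcoglu--Krengel ergodicity applied to the additive process of Lemma \ref{lemma conductivty3}, together with the equicontinuity step needed to eliminate the $(s_{1},s_{2})$-dependence of the exceptional set) has already been absorbed into Corollary \ref{lemma conductivty4 copy(6)} and Lemma \ref{lemma conductivty4 copy(2)}. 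The only point requiring a moment of care is making sure the order-of-limits issue is handled by the \emph{uniform} remainder estimate coming from \cite[Theorem 3.8]{OhmV}; otherwise the proof is a direct bookkeeping exercise.
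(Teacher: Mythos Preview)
Your proposal is correct and follows exactly the route the paper takes: the paper's proof consists of the single sentence preceding the theorem, which says that (\ref{Lemma non existing}), Lemma \ref{lemma conductivty4 copy(2)} and Corollary \ref{lemma conductivty4 copy(6)} together with Lebesgue's dominated convergence theorem yield the result. You have faithfully spelled out the details, including the crucial point that the $\mathcal{O}(\eta^{3}l^{d})$ remainder in (\ref{Lemma non existing}) is uniform in $l$, which is what permits the joint limit.
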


\noindent This theorem yields Theorem \ref{main 1 copy(1)} (p).

\subsection{Appendix: the Bochner Theorem\label{section Bochner Theorem}}

For completeness, we give in this appendix a proof of the Bochner theorem
for weakly positive definite maps $\Upsilon $ from $\mathbb{R}$ to $\mathcal{%
B}(\mathbb{R}^{d})$. By weakly positive definite $\mathcal{B}(\mathbb{R}%
^{d}) $--valued map, we mean that, for any $\varphi \in C_{0}^{\infty }(%
\mathbb{R};\mathbb{R}^{d})$,
\begin{equation}
\int\nolimits_{\mathbb{R}}\mathrm{d}s\int\nolimits_{\mathbb{R}}\mathrm{d}%
t\left\langle \varphi \left( s\right) ,\Upsilon \left( t-s\right) \varphi
\left( t\right) \right\rangle _{\mathbb{R}^{d}}\geq 0\ .
\label{weak positivity of map}
\end{equation}%
It is a simple consequence of the usual Bochner theorem for weakly positive
definite complex--valued functions:

\begin{satz}[The Bochner theorem]
\label{Bochner thm copy(1)}\mbox{
}\newline
The following are equivalent:\newline
\emph{(i)} $f:\mathbb{R}\rightarrow \mathbb{C}$ is a weakly positive
definite and continuous function, i.e.,
\begin{equation*}
\int\nolimits_{\mathbb{R}}\mathrm{d}s\int\nolimits_{\mathbb{R}}\mathrm{d}t\
\overline{\varphi \left( s\right) }f\left( t-s\right) \varphi \left(
t\right) \geq 0\ ,\qquad \varphi \in C_{0}^{\infty }(\mathbb{R};\mathbb{C})\
.
\end{equation*}%
\emph{(ii)} There is a unique finite positive measure $\mu $ on $\mathbb{R}$
such that%
\begin{equation*}
f\left( t\right) =\int\nolimits_{\mathbb{R}}\mathrm{e}^{it\nu }\mu \left(
\mathrm{d}\nu \right) \ ,\qquad t\in \mathbb{R}\ .
\end{equation*}
\end{satz}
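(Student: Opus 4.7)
The plan is to prove the two implications separately. For $(ii) \Rightarrow (i)$, this is a direct Fubini computation: given $f(t) = \int_{\mathbb{R}} \mathrm{e}^{it\nu} \mu(\mathrm{d}\nu)$ and any $\varphi \in C_{0}^{\infty}(\mathbb{R};\mathbb{C})$, one writes
\begin{equation*}
\int_{\mathbb{R}} \mathrm{d}s \int_{\mathbb{R}} \mathrm{d}t \, \overline{\varphi(s)} f(t-s) \varphi(t) = \int_{\mathbb{R}} \left| \hat{\varphi}(\nu) \right|^{2} \mu(\mathrm{d}\nu) \geq 0,
\end{equation*}
where $\hat{\varphi}(\nu) := \int_{\mathbb{R}} \mathrm{e}^{it\nu} \varphi(t) \, \mathrm{d}t$. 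The exchange of integrals is justified by finiteness of $\mu$ and compact support of $\varphi$.

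For the nontrivial direction $(i) \Rightarrow (ii)$, I would first upgrade the weak positive definiteness to pointwise positive definiteness by a standard mollification argument. Fix $n \in \mathbb{N}$, points $t_{1},\ldots,t_{n} \in \mathbb{R}$ and complex weights $c_{1},\ldots,c_{n} \in \mathbb{C}$. Let $\{\rho_{\varepsilon}\}_{\varepsilon > 0} \subset C_{0}^{\infty}(\mathbb{R};\mathbb{R})$ be a family of standard mollifiers supported in $(-\varepsilon,\varepsilon)$ with $\int \rho_{\varepsilon} = 1$, and set $\varphi_{\varepsilon} := \sum_{i=1}^{n} c_{i} \rho_{\varepsilon}(\cdot - t_{i}) \in C_{0}^{\infty}(\mathbb{R};\mathbb{C})$. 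Applying the weak positivity hypothesis to $\varphi_{\varepsilon}$ and expanding,
\begin{equation*}
0 \leq \sum_{i,j=1}^{n} \overline{c_{i}} c_{j} \int_{\mathbb{R}} \mathrm{d}s \int_{\mathbb{R}} \mathrm{d}t \, \rho_{\varepsilon}(s-t_{i}) f(t-s) \rho_{\varepsilon}(t-t_{j}).
\end{equation*}
Since $f$ is continuous and hence uniformly continuous on compacta, letting $\varepsilon \to 0^{+}$ yields $\sum_{i,j} \overline{c_{i}} c_{j} f(t_{j}-t_{i}) \geq 0$, i.e., classical (pointwise) positive definiteness of $f$.

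Now I would invoke the classical Bochner theorem for continuous, pointwise positive definite functions on $\mathbb{R}$ (see, e.g., Reed--Simon, Vol. II) to obtain a finite positive Borel measure $\mu$ on $\mathbb{R}$ with $f(t) = \int_{\mathbb{R}} \mathrm{e}^{it\nu} \mu(\mathrm{d}\nu)$; finiteness is ensured by $\mu(\mathbb{R}) = f(0) < \infty$. Uniqueness of $\mu$ follows from the injectivity of the Fourier transform on finite complex Borel measures, e.g., via the Lévy inversion formula, which recovers $\mu([a,b])$ at continuity points from the values of $f$.

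The main conceptual step is the mollification argument bridging weak and pointwise positive definiteness; the rest is a direct appeal to the classical theorem and standard Fourier uniqueness. No genuine obstacle is expected, since continuity of $f$ makes the mollification limit entirely routine. If one later needs the $\mathcal{B}(\mathbb{R}^{d})$--valued version (Corollary \ref{Bochner thm2} invoked in the proof of Theorem \ref{Theorem AC conductivity measure}), the polarization identity reduces it to the scalar case applied coordinate-wise to $\langle \vec{w}, \Upsilon(\cdot) \vec{w} \rangle_{\mathbb{R}^{d}}$ for $\vec{w} \in \mathbb{R}^{d}$, together with symmetry of $\Upsilon(t)$ to conclude that the resulting operator-valued measure takes values in $\mathcal{B}_{+}(\mathbb{R}^{d})$.
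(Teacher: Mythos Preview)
Your proposal is correct and aligns with the paper's approach: the paper simply cites \cite[Theorem IX.9 and discussion thereafter]{ReedSimonII}, and your argument amounts to spelling out the mollification step that bridges the weak (distributional) positive--definiteness hypothesis to the classical pointwise one before invoking that same reference. Your additional remarks on uniqueness and the polarization reduction to Corollary~\ref{Bochner thm2} are also consistent with the paper.
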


\begin{proof}
See for instance \cite[Theorem IX.9 and discussion thereafter]{ReedSimonII}.
\end{proof}

\begin{koro}[A Bochner theorem for real matrix--valued maps]
\label{Bochner thm2}\mbox{
}\newline
Let $\Upsilon :\mathbb{R}\rightarrow \mathcal{B}(\mathbb{R}^{d})$ be a
weakly positive definite continuous map. If $\Upsilon \left( t\right)
=\Upsilon \left( -t\right) \in \mathcal{B}(\mathbb{R}^{d})$ is symmetric
w.r.t. the canonical scalar product of $\mathbb{R}^{d}$ for any $t\in
\mathbb{R}$, then there is a unique finite and symmetric $\mathcal{B}_{+}(%
\mathbb{R}^{d})$--valued measure $\mu _{\Upsilon }$ on $\mathbb{R}$ such that%
\begin{equation*}
\Upsilon \left( t\right) =\int\nolimits_{\mathbb{R}}\cos \left( t\nu \right)
\mu _{\Upsilon }\left( \mathrm{d}\nu \right) \ .
\end{equation*}
\end{koro}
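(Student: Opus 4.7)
The plan is to reduce the matrix--valued statement to the classical scalar Bochner theorem (Theorem~\ref{Bochner thm copy(1)}) via polarization. For each $\vec{w}\in \mathbb{R}^{d}$, I would first consider the real--valued continuous function
\begin{equation*}
f_{\vec{w}}(t):=\langle \vec{w},\Upsilon (t)\vec{w}\rangle _{\mathbb{R}^{d}}\ ,\qquad t\in \mathbb{R}\ ,
\end{equation*}
which is even in $t$ because $\Upsilon (t)=\Upsilon (-t)$ is symmetric. Splitting any test function $\varphi \in C_{0}^{\infty }(\mathbb{R};\mathbb{C})$ into its real and imaginary parts $\varphi _{1},\varphi _{2}$, a direct computation using $f_{\vec{w}}(t-s)=f_{\vec{w}}(s-t)$ shows that
\begin{equation*}
\int\nolimits_{\mathbb{R}}\mathrm{d}s\int\nolimits_{\mathbb{R}}\mathrm{d}t\ \overline{\varphi (s)}f_{\vec{w}}(t-s)\varphi (t)=\sum_{j=1}^{2}\int \int \varphi _{j}(s)\langle \vec{w},\Upsilon (t-s)\vec{w}\rangle _{\mathbb{R}^{d}}\varphi _{j}(t)\ \mathrm{d}s\mathrm{d}t\ ,
\end{equation*}
which is non--negative by applying (\ref{weak positivity of map}) to $\varphi _{j}\vec{w}\in C_{0}^{\infty }(\mathbb{R};\mathbb{R}^{d})$. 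Thus $f_{\vec{w}}$ is a weakly positive definite continuous function in the sense of Theorem~\ref{Bochner thm copy(1)}.

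By Theorem~\ref{Bochner thm copy(1)}, there is a unique finite positive measure $\mu _{\vec{w}}$ on $\mathbb{R}$ such that $f_{\vec{w}}(t)=\int_{\mathbb{R}}\mathrm{e}^{it\nu }\mu _{\vec{w}}(\mathrm{d}\nu )$. The parity $f_{\vec{w}}(t)=f_{\vec{w}}(-t)$ together with uniqueness of the Bochner measure forces $\mu _{\vec{w}}$ to be invariant under $\nu \mapsto -\nu $, hence
\begin{equation*}
f_{\vec{w}}(t)=\int\nolimits_{\mathbb{R}}\cos (t\nu )\mu _{\vec{w}}(\mathrm{d}\nu )\ ,\qquad t\in \mathbb{R}\ .
\end{equation*}

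Next, I would construct the matrix--valued measure $\mu _{\Upsilon }$ by polarization. Since $\Upsilon (t)$ is symmetric, for any $\vec{w}_{1},\vec{w}_{2}\in \mathbb{R}^{d}$,
\begin{equation}
\langle \vec{w}_{1},\Upsilon (t)\vec{w}_{2}\rangle _{\mathbb{R}^{d}}=\frac{1}{4}\left( f_{\vec{w}_{1}+\vec{w}_{2}}(t)-f_{\vec{w}_{1}-\vec{w}_{2}}(t)\right) \ .  \label{polarization identity}
\end{equation}
This suggests defining, for any Borel set $\mathcal{X}\subset \mathbb{R}$, an operator $\mu _{\Upsilon }(\mathcal{X})\in \mathcal{B}(\mathbb{R}^{d})$ via the bilinear form
\begin{equation*}
\langle \vec{w}_{1},\mu _{\Upsilon }(\mathcal{X})\vec{w}_{2}\rangle _{\mathbb{R}^{d}}:=\frac{1}{4}\left( \mu _{\vec{w}_{1}+\vec{w}_{2}}(\mathcal{X})-\mu _{\vec{w}_{1}-\vec{w}_{2}}(\mathcal{X})\right) \ .
\end{equation*}
Symmetry of $\mu _{\Upsilon }(\mathcal{X})$ as an operator on $\mathbb{R}^{d}$ follows from its construction. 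Positivity, i.e., $\mu _{\Upsilon }(\mathcal{X})\in \mathcal{B}_{+}(\mathbb{R}^{d})$, is immediate from $\langle \vec{w},\mu _{\Upsilon }(\mathcal{X})\vec{w}\rangle _{\mathbb{R}^{d}}=\mu _{\vec{w}}(\mathcal{X})\geq 0$. Symmetry of $\mu _{\Upsilon }$ as a measure on $\mathbb{R}$ (w.r.t. $\nu \mapsto -\nu $) is inherited from the symmetry of each $\mu _{\vec{w}}$. Finite $\sigma $--additivity of $\mu _{\Upsilon }$, viewed in the weak operator sense, reduces to the $\sigma $--additivity of the scalar measures $\mu _{\vec{w}_{1}\pm \vec{w}_{2}}$ applied to countable disjoint unions.

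The main (though fairly routine) verification will be to check that $\mu _{\Upsilon }$ is genuinely a countably additive $\mathcal{B}(\mathbb{R}^{d})$--valued measure (rather than merely a finitely additive set function) and that the Bochner representation for $\Upsilon $ follows by applying (\ref{polarization identity}) to both sides of the scalar identity $f_{\vec{w}}(t)=\int \cos (t\nu )\mu _{\vec{w}}(\mathrm{d}\nu )$. Uniqueness is then a direct consequence of the uniqueness in Theorem~\ref{Bochner thm copy(1)}: any other $\mathcal{B}_{+}(\mathbb{R}^{d})$--valued finite measure $\mu '$ representing $\Upsilon $ would yield, through $\vec{w}\mapsto \langle \vec{w},\mu '(\cdot )\vec{w}\rangle _{\mathbb{R}^{d}}$, another scalar Bochner representation of $f_{\vec{w}}$, forcing $\langle \vec{w},\mu '\vec{w}\rangle =\mu _{\vec{w}}=\langle \vec{w},\mu _{\Upsilon }\vec{w}\rangle $ for every $\vec{w}$, and then (\ref{polarization identity}) yields $\mu '=\mu _{\Upsilon }$.
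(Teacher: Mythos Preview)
Your proposal is correct and follows essentially the same route as the paper: reduce to the scalar Bochner theorem by considering the quadratic forms $f_{\vec w}(t)=\langle \vec w,\Upsilon(t)\vec w\rangle$, obtain the scalar measures $\mu_{\vec w}$, and assemble the operator--valued measure by polarization, with uniqueness inherited from the scalar case. The only cosmetic difference is that the paper first complexifies $\Upsilon$ to act on $\mathbb{C}^{d}$ (so that weak positive definiteness of $f_{\vec w}$ for complex test functions is immediate), whereas you stay in $\mathbb{R}^{d}$ and achieve the same thing by splitting complex test functions into real and imaginary parts; and the paper defines $\mu_{\Upsilon}$ on the canonical basis and then verifies $\langle \vec w,\mu_{\Upsilon}(\mathcal X)\vec w\rangle=\mu_{\vec w}(\mathcal X)$ for all $\vec w$ via injectivity of the Fourier transform, which is precisely the bilinearity check that your general polarization formula implicitly requires.
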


\begin{proof}
First, for any $t\in \mathbb{R}$, we define $\Upsilon (t)$ as an operator on
$\mathbb{C}^{d}$ by
\begin{equation*}
\Upsilon \left( t\right) \left( \vec{w}_{R}+i\vec{w}_{I}\right) =\Upsilon
\left( t\right) \vec{w}_{R}+i\Upsilon \left( t\right) \vec{w}_{I}\ ,\qquad
\vec{w}_{R},\vec{w}_{I}\in \mathbb{R}^{d}\ .
\end{equation*}%
For $\vec{w}\in \mathbb{C}^{d}$, let $f_{\vec{w}}$ be the complex--valued
function on $\mathbb{R}$ defined by
\begin{equation*}
f_{\vec{w}}\left( t\right) :=\left\langle \vec{w},\Upsilon \left( t\right)
\vec{w}\right\rangle _{\mathbb{C}^{d}}\ ,\qquad t\in \mathbb{R}\ .
\end{equation*}%
If $\Upsilon (t)=\Upsilon (-t)\in \mathcal{B}(\mathbb{R}^{d})$ is symmetric
w.r.t. the canonical scalar product of $\mathbb{R}^{d}$ for any $t\in
\mathbb{R}$, then $f_{\vec{w}}$ is a weakly positive definite and continuous
(complex--valued) function. By Theorem \ref{Bochner thm copy(1)}, for any $%
\vec{w}\in \mathbb{C}^{d}$, there is a unique finite positive measure $\mu _{%
\vec{w}}$ on $\mathbb{R}$ such that%
\begin{equation}
f_{\vec{w}}\left( t\right) =\int\nolimits_{\mathbb{R}}\mathrm{e}^{it\nu }\mu
_{\vec{w}}\left( \mathrm{d}\nu \right) \ ,\qquad t\in \mathbb{R}\ .
\label{bochner d=1}
\end{equation}%
Now, we define a $\mathcal{B}(\mathbb{R}^{d})$--valued measure $\mu
_{\Upsilon }$ on $\mathbb{R}$ by using the polarization identity: For any
Borel set $\mathcal{X}\subset \mathbb{R}$, in the canonical orthonormal
basis $\{e_{k}\}_{k=1}^{d}$ of $\mathbb{R}^{d}$,
\begin{equation}
\left\langle e_{k},\mu _{\Upsilon }\left( \mathcal{X}\right)
e_{q}\right\rangle _{\mathbb{R}^{d}}:=\frac{1}{4}\left( \mu
_{e_{k}+e_{q}}\left( \mathcal{X}\right) -\mu _{e_{k}-e_{q}}\left( \mathcal{X}%
\right) \right) \ .  \label{polarization identity}
\end{equation}%
By this definition, $\mu _{\Upsilon }\left( \mathcal{X}\right) $ is a
symmetric operator on $\mathbb{R}^{d}$ (w.r.t. the canonical scalar
product). Moreover, one can check that, for all $\vec{w}\in \mathbb{R}^{d}$
and any Borel set $\mathcal{X}\subset \mathbb{R}$,
\begin{equation}
\left\langle \vec{w},\mu _{\Upsilon }\left( \mathcal{X}\right) \vec{w}%
\right\rangle _{\mathbb{R}^{d}}=\mu _{\vec{w}}\left( \mathcal{X}\right) \ .
\label{equation}
\end{equation}%
Indeed, if $\vec{w}:=(w_{1},\ldots ,w_{d})\in \mathbb{R}^{d}$ then, by
symmetry of the operator $\Upsilon (t)\in \mathcal{B}(\mathbb{R}^{d})$,
\begin{equation*}
f_{\vec{w}}\left( t\right) =\frac{1}{4}\sum_{k,q=1}^{d}w_{k}w_{q}\left(
f_{e_{k}+e_{q}}\left( t\right) -f_{e_{k}-e_{q}}\left( t\right) \right) \
,\qquad t\in \mathbb{R}\ .
\end{equation*}%
Hence, from the injectivity of the Fourier transform of finite measures,
\begin{equation*}
\mu _{\vec{w}}=\frac{1}{4}\sum_{k,q=1}^{d}w_{k}w_{q}\left( \mu
_{e_{k}+e_{q}}-\mu _{e_{k}-e_{q}}\right)
\end{equation*}%
and (\ref{equation}) follows. By positivity of $\mu _{\vec{w}}$, $\mu
_{\Upsilon }$ is a $\mathcal{B}_{+}(\mathbb{R}^{d})$--valued measure on $%
\mathbb{R}$. Moreover, we deduce from (\ref{bochner d=1}) that
\begin{equation*}
\left\langle \vec{w},\Upsilon \left( t\right) \vec{w}\right\rangle _{\mathbb{%
R}^{d}}=\int\nolimits_{\mathbb{R}}\mathrm{e}^{it\nu }\left\langle \vec{w}%
,\mu _{\Upsilon }\left( \mathrm{d}\nu \right) \vec{w}\right\rangle _{\mathbb{%
R}^{d}}\ ,\qquad t\in \mathbb{R}\ ,\ \vec{w}\in \mathbb{R}^{d}\ .
\end{equation*}%
If $\Upsilon \left( t\right) =\Upsilon \left( -t\right) $ for any $t\in
\mathbb{R}$, then $\mu _{\Upsilon }\left( \mathcal{X}\right) =\mu _{\Upsilon
}\left( -\mathcal{X}\right) $ for any Borel set $\mathcal{X}\subset \mathbb{R%
}$ and hence,
\begin{equation}
\left\langle \vec{w},\Upsilon \left( t\right) \vec{w}\right\rangle _{\mathbb{%
R}^{d}}=\int\nolimits_{\mathbb{R}}\cos \left( t\nu \right) \left\langle \vec{%
w},\mu _{\Upsilon }\left( \mathrm{d}\nu \right) \vec{w}\right\rangle _{%
\mathbb{R}^{d}}\ ,\qquad t\in \mathbb{R}\ ,\ \vec{w}\in \mathbb{R}^{d}\ .
\label{tototototto}
\end{equation}%
By using the symmetry of the operators $\Upsilon \left( t\right) \in
\mathcal{B}(\mathbb{R}^{d})$ and
\begin{equation*}
\int\nolimits_{\mathbb{R}}\cos \left( t\nu \right) \mu _{\Upsilon }\left(
\mathrm{d}\nu \right) \in \mathcal{B}(\mathbb{R}^{d})
\end{equation*}%
at any fixed $t\in \mathbb{R}$, we arrive at the assertion from (\ref%
{tototototto}).
\end{proof}

\bigskip

\noindent \textit{Acknowledgments:} This research is supported by the agency
FAPESP under Grant 2013/13215-5 as well as by the Basque Government through
the grant IT641-13 and the BERC 2014-2017 program and by the Spanish
Ministry of Economy and Competitiveness MINECO: BCAM Severo Ochoa
accreditation SEV-2013-0323, MTM2014-53850. Finally, we thank very much the
referees for their work and interest in the improvement of this paper.

\end{document}